\documentclass[aps,pra,reprint,groupedaddress]{revtex4-2}

\usepackage{multirow}
\usepackage{amssymb,amsmath,amsthm,amsfonts}
\usepackage{bm}
\usepackage{textgreek}
\usepackage{mathtools}
\usepackage{enumitem}
\usepackage{graphicx}
\usepackage{tablefootnote} 
\usepackage{float}
\usepackage[ruled,vlined,linesnumbered]{algorithm2e}
\usepackage{algorithmic}

\usepackage{physics}
\usepackage[normalem]{ulem}

\usepackage{footnote}
\usepackage{xcolor}
\usepackage{mathrsfs}
\usepackage{bbm}
\usepackage{makecell}
\usepackage{pbox}
\usepackage[colorlinks]{hyperref}
\hypersetup{citecolor=blue}
\usepackage{booktabs}
\usepackage{colortbl}
\usepackage{tikz}

\usetikzlibrary{calc,patterns,angles,quotes}
\usepackage{siunitx}
\usetikzlibrary{quantikz}

\newtheorem{theorem}{Theorem}

\newtheorem{lemma}{Lemma}
\newtheorem{proposition}{Proposition}

\newcommand{\eq}[1]{Eq.~(\ref{eq:#1})}
\newcommand{\thm}[1]{\hyperref[thm:#1]{Theorem~\ref*{thm:#1}}}
\newcommand{\cor}[1]{\hyperref[cor:#1]{Corollary~\ref*{cor:#1}}}
\newcommand{\defn}[1]{\hyperref[defn:#1]{Definition~\ref*{defn:#1}}}
\newcommand{\lem}[1]{\hyperref[lem:#1]{Lemma~\ref*{lem:#1}}}
\newcommand{\prop}[1]{\hyperref[prop:#1]{Proposition~\ref*{prop:#1}}}
\newcommand{\fig}[1]{\hyperref[fig:#1]{Figure~\ref*{fig:#1}}}
\newcommand{\tab}[1]{\hyperref[tab:#1]{Table~\ref*{tab:#1}}}
\newcommand{\algo}[1]{\hyperref[algo:#1]{Algorithm~\ref*{algo:#1}}}
\renewcommand{\sec}[1]{\hyperref[sec:#1]{Section~\ref*{sec:#1}}}
\newcommand{\append}[1]{\hyperref[append:#1]{Appendix~\ref*{append:#1}}}
\newcommand{\fac}[1]{\hyperref[fac:#1]{Fact~\ref*{fac:#1}}}
\newcommand{\lin}[1]{\hyperref[lin:#1]{Line~\ref*{lin:#1}}}

\renewcommand{\i}{\mathrm{i}}
\newcommand{\eps}{\varepsilon}

\def\>{\rangle}
\def\<{\langle}

\renewcommand{\bra}[1]{\langle#1|}
\renewcommand{\ket}[1]{|#1\rangle}

\newcommand{\E}{\mathbb{E}}

\newcommand{\mc}{\mathcal}

\DeclareMathOperator{\poly}{poly}

\DeclareMathOperator{\supp}{supp}
\DeclareMathOperator{\ad}{ad}
\DeclareMathOperator{\sgn}{sgn}
\DeclareMathOperator{\CPTP}{CPTP}

\allowdisplaybreaks

\renewcommand{\emptyset}{\varnothing}
\def\:{\hbox{\bf:}}

\makeatletter
\newcommand{\doubletilde}[1]{{%
  \mathpalette\double@tilde{#1}%
}}
\newcommand{\double@tilde}[2]{%
  \sbox\z@{$\m@th#1\tilde{#2}$}%
  \ht\z@=.9\ht\z@
  \tilde{\box\z@}%
}
\makeatother

\let\oldnl\nl
\newcommand{\nonl}{\renewcommand{\nl}{\let\nl\oldnl}}

\SetKwComment{Comment}{/* }{ */}
\SetKwInput{KwInput}{Input}                
\SetKwInput{KwOutput}{Output}              
\SetKwInput{KwNotation}{Notation}              
\SetKwInput{KwParameter}{Parameter}             
\SetKwIF{WithProb}{ElseWith}{Else}{with probability}{}{else}{otherwise}{endif}

\begin{document}

\title{Trotter error compensation with polylogarithmic precision and nested-commutator scaling without ancillas}

\author{Xinzhao Wang$^{1,2}$}
\author{Shuo Zhou$^{1,2}$}
\author{Ziruo Wang$^{1,2}$}

\author{Pei Zeng$^{3,4}$}

\author{Jinzhao Sun$^{5}$}

\author{Qi Zhao$^{6}$}

\author{Tom Gur$^{7}$}
\thanks{Corresponding author. Email: \href{mailto:tom.gur@cl.cam.ac.uk}{tom.gur@cl.cam.ac.uk}}

\author{Tongyang Li$^{1,2}$}
\thanks{Corresponding author. Email: \href{mailto:tongyangli@pku.edu.cn}{tongyangli@pku.edu.cn}}

\affiliation{\vspace{1em}\textsuperscript{1}Center on Frontiers of Computing Studies, Peking University, Beijing 100871, China}

\affiliation{\textsuperscript{2}School of Computer Science, Peking University, Beijing 100871, China}

\affiliation{\textsuperscript{3}Institute of Natural Sciences, Shanghai Jiao Tong University, Shanghai 200240, China}

\affiliation{\textsuperscript{4}School of Physics and Astronomy, Shanghai Jiao Tong University, Shanghai 200240, China}

\affiliation{\textsuperscript{5}School of Physical and Chemical Sciences, Queen Mary University of London, London E1 4NS, United Kingdom}

\affiliation{\textsuperscript{6}QICI Quantum Information and Computation Initiative, School of Computing and Data Science,
The University of Hong Kong, Pokfulam Road, Hong Kong SAR, China}

\affiliation{\textsuperscript{7}Department of Computer Science and Technology, University of Cambridge, Cambridge CB3 0FD, UK}
\date{\today}

\begin{abstract}
Product formulas are among the most practical approaches to Hamiltonian
simulation, requiring no ancillary qubits and exhibiting error bounds governed
by nested commutators rather than only by Hamiltonian norms. Their circuit
size, however, scales polynomially with the inverse precision. We develop a
high-order nested-commutator compensation (HNCC) algorithm that preserves the
main advantages of product formulas while achieving polylogarithmic precision
dependence in the circuit size and the standard
\(\mathcal{O}(\varepsilon^{-2})\) sampling cost. HNCC uses a truncated
Baker--Campbell--Hausdorff expansion to represent high-order Trotter errors
by products of nested commutators and compensates these errors at the
channel level through randomly sampled Pauli-rotation channels, avoiding
Hadamard tests and ancillary qubits. For a fixed \(K\)-th order product formula
applied to a \(k\)-local Hamiltonian on \(N\) qubits with \(\Gamma\)
Pauli terms and local interaction strength \(g_0\), HNCC estimates
\(\operatorname{tr}[Oe^{-\i tH}\rho e^{\i tH}]\) to additive precision
\(\varepsilon\|O\|\) using \(\mathcal{O}(\varepsilon^{-2})\) repetitions. Its
maximum gate count per circuit is
\(
\mathcal{O}\bigl(
    kN^{\frac{1}{2K+1}}
    \Gamma^{1-\frac{1}{2K+1}}
    \max\{\Gamma,N\log(1/\varepsilon)\}^{\frac{1}{2K+1}}
    (kg_0t\log(1/\varepsilon))^{1+\frac{1}{2K+1}}
\bigr)\).
Finite-size resource estimates for the periodic Heisenberg chain indicate
that HNCC has the lowest estimated \(T\)-gate count per circuit among the
product-formula-based methods considered.
\end{abstract}

\maketitle

\section{Introduction}
Hamiltonian simulation, the task of simulating the dynamics generated by a
Hamiltonian, is a fundamental problem in quantum
computing~\cite{feynman1982simulating,lloyd1996universal}. Beyond its direct
application to studying the dynamics of physical
systems~\cite{mcardle2020quantum,childs2018toward}, it serves as an
algorithmic primitive for various computational tasks, including ground-state
preparation~\cite{dong2022ground,sun2026high}, solving linear systems of
equations~\cite{harrow2009quantum,childs2017quantum} and differential
equations~\cite{an2023linear}, and combinatorial
optimization~\cite{farhi2014quantum}.

\citet{lloyd1996universal} introduced the first explicit quantum algorithm
for this task based on product formulas~\cite{suzuki1991general,
childs2021theory}, also known as the Trotter method. For a Hamiltonian
\(H=\sum_{\gamma=1}^{\Gamma}H_\gamma\), a \(K\)-th order product formula
\(S_K(x)\), composed of evolutions under the individual terms \(H_\gamma\),
approximates \(U(x)=e^{-\i Hx}\) to \(K\)-th order in \(x\). Although various
``post-Trotter methods'' were subsequently
developed~\cite{berry2014exponential,berry2015simulating,
berry2015hamiltonian,low2017optimal,low2019hamiltonian}, product formulas
remain practically appealing because of their simple circuit structure, low
ancilla requirements, and low gate count~\cite{childs2021theory}.
Furthermore, their error bounds can exploit nested commutators among the terms
\(\{H_\gamma\}\), rather than depending only on the norm sum
\(\sum_{\gamma=1}^{\Gamma}\|H_\gamma\|\)~\cite{childs2021theory}. For local
lattice systems with \(N\) sites, this commutator scaling can yield a
system-size dependence of \(\mathcal{O}(N^{1+1/K})\) for a \(K\)-th order
product formula, compared with the \(\mathcal{O}(N^2)\) dependence of
post-Trotter methods implemented using the standard LCU block encoding in the
same setting.

Despite these advantages, product formulas exhibit a suboptimal dependence on the simulation time $t$ and the target precision $\eps$. The gate complexity of a $K$-th order method scales as $\mathcal{O}(t^{1+1/K} \eps^{-1/K})$, in contrast to the optimal linear time dependence and logarithmic precision dependence. For standard Suzuki--Trotter formulas, the implementation cost of one step grows exponentially with the order $K$, making low orders such as $K=2$ and $K=4$ more favorable in practice. In this regime, the polynomial dependence on both $t$ and $1/\eps$ becomes a bottleneck for long-time or high-precision simulation.

For observable estimation, recent works have developed two main approaches to reducing Trotter errors in $\tr[O U(t)\rho_0 U(t)^{\dagger}]$: classical post-processing~\cite{endo2019mitigating,carrera2023wellconditioned,rendon2024improved,watson2025exponentially} and Trotter error compensation~\cite{cho2024doubling,zeng2025simple,mineh2025improving}. Richardson extrapolation
combines expectation values measured from a \(K\)-th order Trotter circuit at
several step sizes and extrapolates to the zero-step-size
limit~\cite{watson2025exponentially,mizuta2026commutator}. This achieves
polylogarithmic precision dependence in the circuit size while preserving
commutator scaling and requiring no ancillas. These advantages come with an increased sampling cost because the extrapolation
coefficients amplify statistical noise, as well as SPAM errors that are difficult
to suppress experimentally. For the extrapolation formulas considered in
Refs.~\cite{watson2025exponentially,mizuta2026commutator}, the repetition
complexity is
\(\mathcal{O}(\eps^{-2}\log(1/\eps)(\log\log(1/\eps))^3)\), rather than the
standard \(\mathcal{O}(\eps^{-2})\). The simulation-time dependence remains
that of the original \(K\)-th order product formula,
\(\mathcal{O}(t^{1+1/K})\).

Another representative approach, proposed by \citet{zeng2025simple},
compensates Trotter errors by applying unitaries sampled from a representation
of the Trotter remainder after each Trotter step. Their nested-commutator
compensation (NCC) method preserves commutator scaling and achieves the same
simulation-time dependence as a product formula of order \(2K+1\). Its
currently proved precision dependence is polynomial in \(1/\eps\), although it is
conjectured to be polylogarithmic~\cite{sun2026high}.
Their paired Taylor-series compensation (PSTC) method instead achieves
polylogarithmic precision dependence by representing higher-order errors
through products \(H_{\gamma_1}\cdots H_{\gamma_m}\), trading
nested-commutator scaling for improved precision dependence. Both methods
perform the compensation at the unitary level, producing cross-terms of the
form \(\tr\big[O V_i\rho_0V_j^\dagger\big]\). Their standard implementation estimates
these terms using a Hadamard test and requires an ancillary qubit and
controlled compensation operations, whereas the ancilla may not be needed for symmetry-conserved systems~\cite{sun2026high}. On devices with limited connectivity,
such controlled operations can reduce circuit parallelism and increase circuit
depth.

In this work, we develop a high-order nested-commutator compensation (HNCC)
algorithm that combines commutator scaling, effective \(2K+1\)-order time
scaling, polylogarithmic precision dependence in the maximum gate count per
circuit, and an ancilla-free implementation. Let \(\mc U(x)\) and
\(\mc S_K(x)\) denote the unitary channels induced by \(U(x)\) and \(S_K(x)\),
respectively. We define the multiplicative Trotter remainder by
\(\mc V_K(x)=\mc U(x)\mc S_K^{\dagger}(x)\), so that applying
\(\mc V_K(x)\) after \(\mc S_K(x)\) recovers the ideal evolution
\(\mc U(x)\). Our main idea is to approximate \(\mc V_K(x)\) to high order
by a \emph{linear combination of quantum channels (LCQC)}, each implemented
as a composition of Pauli-rotation channels. Applying a randomized implementation of this LCQC after \(\mc S_K(x)\)
approximately recovers \(\mc U(x)\) in expectation.
Since the correction is applied at the channel level, it avoids the
cross-terms arising in unitary-level compensation and hence the need for
Hadamard tests or ancillary qubits.

The main technical challenge is to construct such an LCQC approximation
to arbitrarily high order while preserving the nested-commutator structure
of the Trotter error. The Baker--Campbell--Hausdorff (BCH) expansion expresses
the generator of \(\mc V_K(x)\) in terms of nested commutators.
Using BCH truncation-error bounds based on doubly right-nested
commutators~\cite{wang2026lindbladian}, we retain only finitely many BCH terms
and therefore do not rely on convergence of the full BCH series, which cannot
generally be assumed for local many-body systems~\cite{mizuta2026commutator}. After expanding the Hamiltonian terms in the Pauli basis, we rewrite the
truncated generator as a linear combination of adjoint-Pauli
superoperators \(\ad_{\i P}\) and expand its exponential. We control the
resulting series by retaining only samples for which the total BCH order of
the higher-order Taylor terms over the full circuit is below a chosen
threshold.
A parameter-shift identity converts each
adjoint-Pauli factor into a difference of Pauli-rotation channels, thereby
yielding an LCQC approximation to the Trotter remainder \(\mc V_K(x)\). For the
part linear in the truncated BCH generator, we instead pair each sampled
adjoint-Pauli term with the identity channel. This reduces the coefficient
\(1\)-norm of the LCQC and hence the sampling overhead, leading to the
effective \(2K+1\)-order time scaling while preserving commutator scaling.

We also provide finite-size resource estimates for the periodic Heisenberg
chain, comparing HNCC with other product-formula-based methods. When applied
to the second-order product formula \(S_2\), HNCC achieves up to a
\(25.9\times\) reduction in the estimated \(T\)-gate count per circuit
relative to the uncompensated \(S_2\) formula.

\tab{feature-comparison} summarizes the qualitative comparison between HNCC and product-formula-based approaches.

\begin{table*}[htpb]
    \centering
    \caption{Qualitative comparison of product-formula-based Hamiltonian simulation approaches.}
    \label{tab:feature-comparison}
    \begin{tabular}{lccccc}
    \toprule
    \textbf{Feature} & \textbf{Trotter~\cite{suzuki1991general}} & \makecell{\textbf{Extrapolation}~\cite{watson2025exponentially,mizuta2026commutator}} & \textbf{PSTC}~\cite{zeng2025simple} & \textbf{NCC}~\cite{zeng2025simple} & \makecell{\textbf{HNCC}} \\
    \midrule
    Commutator scaling & $\checkmark$ & $\checkmark$ & $\times$ & $\checkmark$ & $\checkmark$ \\
    Effective $2K+1$-order time scaling & $\times$ & $\times$ & $\checkmark$ & $\checkmark$ & $\checkmark$ \\
    \makecell[l]{Polylogarithmic precision dependence \\ in circuit size} & $\times$ & $\checkmark$ & $\checkmark$ & $\times$ & $\checkmark$ \\
    $\mc O(\eps^{-2})$ circuit repetitions & $\checkmark$ & $\times$ & $\checkmark$ & $\checkmark$ & $\checkmark$ \\
    Ancilla-free implementation & $\checkmark$ & $\checkmark$ & $\times$ & $\times$ & $\checkmark$ \\
    \bottomrule
    \end{tabular}
\end{table*}

\section{Summary of results}
We summarize the complexity bounds for HNCC and the LCQC sampling and BCH
truncation used to obtain them.

Let $H$ be a $k$-local Hamiltonian on a set of sites $\Lambda=[N]$, with one
qubit on each site, written as
\[
    H=\sum_{\gamma=1}^{\Gamma}\alpha_\gamma P_\gamma,
\]
where $\alpha_\gamma\in\mathbb R$ and each $P_\gamma$ is an $N$-qubit Pauli
operator supported on at most $k$ sites. We assume that the local interaction
strength of $H$ is bounded by $g_0$, as formalized by the
$g_0$-extensiveness condition in \eq{def-extensive}. Let
$\mc U(t):=e^{-\i t\ad_H}$ denote the time-evolution channel generated by
$H$, so that $\mc U(t)(\rho)=e^{-\i tH}\rho e^{\i tH}$. The task is to estimate
\[
\tr[O\mc U(t)(\rho_0)]
\]
for a given initial state $\rho_0$ and observable $O$.

\subsection{Main result}

The following theorem states the complexity of HNCC. We use CNOT and arbitrary
single-qubit gates as the elementary gate set in these bounds.

\begin{theorem}[HNCC complexity]
\label{thm:fixed-complexity}
Let $H=\sum_{\gamma=1}^{\Gamma}\alpha_\gamma P_\gamma$ be a $k$-local
Hamiltonian on $N$ qubits satisfying the $g_0$-extensiveness condition. Let
$t>0$ and $0<\eps<1/2$, and suppose that
$\Gamma=\Omega(\log(1/\eps))$ and $kg_0t\ge1$. For any fixed Trotter order
$K$, any initial state $\rho_0$, and any observable $O$, HNCC estimates
$\tr[O\mc U(t)(\rho_0)]$ to additive precision $\eps\|O\|$ with constant
success probability using $\mathcal O(\eps^{-2})$ independent circuit
repetitions. The maximum gate count per circuit is
\begin{multline*}
        \mathcal O\Bigl(
        kN^{\frac{1}{2K+1}}
        \Gamma^{1-\frac{1}{2K+1}}
        \max\bigl\{
            \Gamma,\,
            N\log(1/\eps)
        \bigr\}^{\frac{1}{2K+1}}
        \\
        {}
        \bigl(kg_0t\log(1/\eps)\bigr)^{1+\frac{1}{2K+1}}
    \Bigr).
\end{multline*}
\end{theorem}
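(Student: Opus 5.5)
We split the time interval into $r$ Trotter steps of size $x=t/r$. After each step $\mc S_K(x)$ we insert a randomly sampled compensation channel drawn from an LCQC approximation of the remainder $\mc V_K(x)=\mc U(x)\mc S_K^\dagger(x)$. Concretely, we write $\mc V_K(x)=e^{x\mc L(x)}$, where by the BCH expansion the generator $\mc L(x)=\sum_{p\ge K}x^{p}\,\ad_{\i\Phi_p}$ collects nested commutators of the $H_\gamma$. We truncate it at order $q$, i.e.\ keep $\Phi_K,\dots,\Phi_{q}$. The truncation error is controlled using the doubly right-nested commutator bounds of \cite{wang2026lindbladian}, so that it is $\mathcal O\bigl((\text{commutator norm})\,x^{q+1}\bigr)$ per step, with no appeal to BCH convergence. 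The $k$-locality and $g_0$-extensiveness give the commutator norm estimate $\sum\|[H_{\gamma_{p+1}},\dots,H_{\gamma_1}]\|\lesssim N(kg_0)^{p+1}$ up to combinatorial factors.

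Next we expand each $\Phi_p$ in the Pauli basis, $\Phi_p=\sum_j\beta_{p,j}Q_{p,j}$, whose coefficient $1$-norm is bounded by the same nested-commutator quantity. We Taylor-expand $e^{x\mc L_q}$. The linear part is $1+\sum_j x^{p}\beta_{p,j}\ad_{\i Q_{p,j}}$. We pair each term with the identity and use the parameter-shift identity $\ad_{\i Q}=\tfrac12(\mc R_Q(\pi/2)-\mc R_Q(-\pi/2))$, where $\mc R_Q(\theta)$ is a Pauli-rotation channel (up to sign conventions). Writing $1+c\,\ad_{\i Q}$ as a combination of channels then has $1$-norm $1+\mathcal O(c^2)$, or alternatively $\sqrt{1+c^2}$-type, via a rotation by angle $\arctan c$ plus a residual. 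Higher Taylor terms are products of adjoint-Paulis and are converted factorwise into differences of rotations, which contributes $1$-norm $\exp(\mathcal O(x^{K+1}\lambda))$ from orders at least two. The key bookkeeping is that the per-step $1$-norm is $1+\mathcal O\bigl(x^{2K+2}\Lambda_K^2\bigr)+\mathcal O(x^{2K+2}\cdots)$. Over $r$ steps the total $1$-norm is then $\exp\bigl(\mathcal O(r x^{2K+2}\Lambda^2)\bigr)=\mathcal O(1)$ once $r\gtrsim (\Lambda^{1/(K+1)} t)^{1+1/(2K+1)}$. This is what gives the $2K+1$ effective order, and it preserves the $\mathcal O(\eps^{-2})$ repetition count by a Hoeffding bound on the unbiased importance-sampling estimator. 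Ancilla-freeness is automatic since everything is a channel.

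To get polylog precision in circuit size, we truncate the global sample: we keep only samples whose total BCH/Taylor order summed over all $r$ steps is at most a threshold $M=\mathcal O(\log(1/\eps))$. By a Poisson/Chernoff tail bound on the sampled orders, the discarded mass, weighted by the $1$-norm, is at most $\eps/2$. Together with the BCH truncation error $r\cdot x^{q+1}(\cdot)\le\eps/2$, which sets $q=\mathcal O(\log(1/\eps))$, this bounds the bias. The gate count per circuit is $r$ Trotter steps at cost $\mathcal O(k\Gamma)$ each, plus the compensation gates. The compensation consists of at most $\mathcal O(r)$ low-order rotations plus at most $M$ high-order Pauli rotations of weight up to $\mathcal O(kq)\le\mathcal O(N)$, each costing $\mathcal O(\min\{kq,N\})$ gates.

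Optimizing $r$ against the commutator sums is where the exponents in the statement come from. The $1$-norm constraint involves $N$ and $\Gamma$ through $\Lambda\sim N^{?}\Gamma^{?}(kg_0)^{2K+1}$. The high-order $\log(1/\eps)$-weight terms contribute the $\max\{\Gamma,N\log(1/\eps)\}$ factor, since the Pauli-expansion $1$-norm of the order-$q$ tail grows with $N\log(1/\eps)$ when that exceeds $\Gamma$. Plugging in gives $r=\mathcal O\bigl(N^{1/(2K+1)}\Gamma^{-1/(2K+1)}\max\{\Gamma,N\log(1/\eps)\}^{1/(2K+1)}(kg_0t\log(1/\eps))^{1+1/(2K+1)}\bigr)$. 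Multiplying by $k\Gamma$ yields the stated bound, with the assumptions $\Gamma=\Omega(\log(1/\eps))$ and $kg_0t\ge1$ absorbing lower-order terms.

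I expect the main obstacle to be bounding the Pauli-basis $1$-norm of the truncated BCH generator at orders up to $q=\Theta(\log(1/\eps))$ with only nested-commutator and factorial-controlled growth, and showing that the order-threshold truncation keeps the total sampling $1$-norm $\mathcal O(1)$ without reintroducing $\eps^{-1/K}$ factors. I would first try to bound $\sum_j|\beta_{p,j}|$ by counting connected clusters of $p+1$ $k$-local terms, which gives $\le N (c\,k g_0)^{p+1}$. I would then choose $x$ small enough that the series in $p$ is geometric.
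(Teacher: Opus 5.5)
There is a genuine gap at the step that is supposed to produce the factor \(\Gamma^{-1/(2K+1)}\max\{\Gamma,N\log(1/\eps)\}^{1/(2K+1)}\). The exponents are never derived: your \(\Lambda\sim N^{?}\Gamma^{?}\) contains placeholders, and in fact the commutator bound \(\lambda_q=q!(2k\tilde g)^qN/(kq^2)\) on the generator does not involve \(\Gamma\) at all. Your final \(r\) also contradicts your own \(1\)-norm bookkeeping. With a single identity-paired compensation per Trotter step, the per-step \(1\)-norm is \(1+\Theta(\Lambda_x^2)\) with \(\Lambda_x\sim N(q_0kg_0x)^{K+1}\). The total over \(r\) steps is therefore \(\exp(\Theta(r\Lambda_x^2))\), where
\[
r\Lambda_x^2=\Theta\Bigl(\frac{N^2(q_0kg_0t)^{2K+2}}{r^{2K+1}}\Bigr).
\]
Keeping this \(\mathcal O(1)\) forces \(r=\Theta(N^{2/(2K+1)}(kg_0t\log(1/\eps))^{1+1/(2K+1)})\). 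That is exactly what your own formula \(r\gtrsim(\Lambda^{1/(K+1)}t)^{1+1/(2K+1)}\) gives. It yields \(\mathcal O(k\Gamma r)\) gates, which exceeds the theorem's bound by a factor \(\min\{N,\Gamma/\log(1/\eps)\}^{1/(2K+1)}\). Conversely, substituting your final \(r\) into the display gives an exponent of order \(\min\{N,\Gamma/\log(1/\eps)\}\), so your bookkeeping only certifies an exponentially large sampling overhead. The remark that the order-\(q\) tail's Pauli \(1\)-norm ``grows with \(N\log(1/\eps)\)'' does not supply the missing mechanism.

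The missing idea is to split the truncated remainder into \(m\) fractional factors,
\[
\tilde{\mc V}_K^{(q_0)}(x)=\Bigl[\exp\Bigl(\frac1m\sum_{q=K+1}^{q_0}\Phi_q(x)\Bigr)\Bigr]^m,
\]
and to apply \(m\) independent identity-paired samplers after each Trotter step. Each factor has \(\lambda_{\mathrm{single}}=\Lambda_x/m\) and \(1\)-norm \(1+\mathcal O(\Lambda_x^2/m^2)\). The total is then \(\exp(\mathcal O(\nu\Lambda_x^2/m))\), which allows \(\nu=\Theta((N^2/m)^{1/(2K+1)}(kg_0t\log(1/\eps))^{1+1/(2K+1)})\). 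The price is \(m\) extra Pauli rotations per step, each of weight up to \(kq_0\), for a total of \(\mathcal O(\nu k(\Gamma+m\log(1/\eps)))\) gates. The restriction \(m\le N\) keeps the accumulated BCH truncation error, of order \((m/N)(4eq_0k\tilde g)^{q_0-2K-1}\), small. Choosing \(m=\Theta(\min\{N,\Gamma/\log(1/\eps)\})\) is precisely what produces \(\Gamma^{1-1/(2K+1)}\max\{\Gamma,N\log(1/\eps)\}^{1/(2K+1)}\): splitting the compensation is cheaper than adding Trotter steps as long as \(m\log(1/\eps)\lesssim\Gamma\).

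A secondary point: your planned cluster-counting bound \(N(ckg_0)^{p+1}\) on the order-\(p\) Pauli \(1\)-norm drops the factorial. The available bound is \(p!(2k\tilde g)^pN/(kp^2)\), and that factorial is why the step size must make \(q_0k\tilde g\) small. This is the source of the \(\log(1/\eps)\) inside \((kg_0t\log(1/\eps))^{1+1/(2K+1)}\).
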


\begin{table*}[htpb]
    \centering
    \small
    \caption{Quantitative comparison of Hamiltonian simulation algorithms
    for $k$-local Hamiltonians with local interaction strength bounded by
    $g_0$. We assume $k g_0 t\ge1$ and take $k$ and $K$ to be fixed. The
    last five methods mitigate Trotter errors in a $K$-th order product formula.}
    \label{tab:complexity-comparison}
    \begin{tabular}{llll}
    \toprule
    \textbf{Method} & \makecell{\textbf{Maximum gate count}\\\textbf{per circuit}} & \textbf{Repetitions} & \textbf{Ancillas} \\ \midrule
    Trotter~\cite{suzuki1991general,childs2021theory}
    & $\mc O\big(\Gamma N^{\frac{1}{K}}(g_0t)^{1+\frac{1}{K}}\eps^{-\frac{1}{K}}\big)$
    & $\mc O(\eps^{-2})$ & 0 \\
    Qubitization~\cite{low2019hamiltonian}
    & $\mc O\big(\Gamma(Ng_0t+\log\frac{1}{\eps})\big)$
    & $\mc O(\eps^{-2})$ & $\mc O(\log\Gamma)$ \\ \midrule
    Extrapolation~\cite{watson2025exponentially,mizuta2026commutator}
    & $\mc O\big(\Gamma N^{\frac{1}{K}}
    (g_0t\log\frac{1}{\eps})^{1+\frac{1}{K}}\log\frac{1}{\eps}\big)$
    & $\mc O\big(\eps^{-2}\log\frac{1}{\eps}(\log\log\frac{1}{\eps})^3\big)$
    & 0 \\
    PSTC~\cite{zeng2025simple}
    & $\mc O\big((\Gamma+\log\frac{1}{\eps})
    N^{1+\frac{1}{2K+1}}(g_0t)^{1+\frac{1}{2K+1}}\big)$
    & $\mc O(\eps^{-2})$ & 1 \\
    NCC~\cite{zeng2025simple}
    & $\mc O\big(\Gamma N^{\frac{2}{2K+1}}
    (g_0t)^{1+\frac{1}{2K+1}}\eps^{-\frac{1}{2K+1}}\big)$
    & $\mc O(\eps^{-2})$ & 1 \\
    HNCC (\thm{fixed-complexity})
    & \makecell[l]{$\mc O\bigl(
    \Gamma^{1-\frac{1}{2K+1}}
    N^{\frac{1}{2K+1}}
    \max\{\Gamma,N\log\frac{1}{\eps}\}^{\frac{1}{2K+1}}
    (g_0t\log\frac{1}{\eps})^{1+\frac{1}{2K+1}}
    \bigr)$}
    & \textbf{$\mc O(\eps^{-2})$} & 0 \\
    Unpaired HNCC (\thm{unpaired-global-order})
    & $\mc O\bigl(
    \Gamma N^{\frac{1}{K}}
    (g_0t\log\frac{1}{\eps})^{1+\frac{1}{K}}
    \bigr)$
    & $\mc O(\eps^{-2})$ & 0 \\
    \bottomrule
    \end{tabular}
\end{table*}

The quantitative comparison with representative methods is summarized in
\tab{complexity-comparison}. For Hamiltonians with $\Gamma=\Omega(N)$ Pauli
terms, the general HNCC bound implies the following simpler upper bound on
the maximum gate count per circuit:
\begin{align*}
    \mathcal O\Bigl(
        \Gamma
        N^{\frac{1}{2K+1}}
        (g_0t)^{1+\frac{1}{2K+1}}
        \log^{1+\frac{2}{2K+1}}\frac{1}{\eps}
    \Bigr),
\end{align*}
which improves the system-size and time dependence of the $K$-th order
product formula to that of a product formula of order $2K+1$, while achieving
polylogarithmic precision dependence without ancillary qubits.

We also analyze an unpaired variant of HNCC. Its maximum gate count per circuit
has the same system-size and simulation-time dependence as the original
$K$-th order product formula while achieving polylogarithmic precision
dependence. Although it does not attain the improved system-size and time
scaling of HNCC, it requires only $\mathcal O(k\log(1/\eps))$ additional
Clifford gates beyond the product-formula circuit and no additional $T$ gates.

\subsection{LCQC sampling}
\label{sec:lcqc-framework}

We implement the evolution over total time $t$ by dividing the interval into
$\nu$ steps of size $x=t/\nu$. We partition the Hamiltonian into $L$ groups
of commuting terms, $H=\sum_{\ell=1}^{L}H_\ell$. For ordered products, we use
$\prod_{r=1}^{j}A_r:=A_j\cdots A_1$ and
$\prod_{r=j}^{1}A_r:=A_1\cdots A_j$. We denote the channel
associated with a $K$-th order product-formula step by
\[
    \mc S_K(x)(\rho)=S_K(x)\rho S_K^\dagger(x),
    \qquad
    S_K(x)=\prod_{v=1}^{\kappa_KL}e^{-\i a_vxH_{\ell_v}}.
\]
Here, $\kappa_K$ is the number of product-formula stages per group,
$a_v\in\mathbb R$ are the product-formula coefficients, and
$\ell_v\in[L]$ specifies the ordering of the groups. The $K$-th order
condition gives $\mc S_K(x)=\mc U(x)+\mathcal O(x^{K+1})$.

Ideally, for a single Trotter step, one could compensate the Trotter error by
applying the Trotter remainder
$\mc V_K(x)=\mc U(x)\mc S_K^\dagger(x)$ after the product formula
$\mc S_K(x)$, thereby recovering the exact short-time evolution $\mc U(x)$.
Suppose that $\mc V_K(x)$ admits a representation as a linear combination of
quantum channels,
\begin{align}
\label{eq:lcqc-V}
    \mc V_K(x)=\sum_j\beta_j\mc N_j,
\end{align}
where $\beta_j\in\mathbb R$ and each $\mc N_j$ is either an efficiently
implementable quantum channel or the zero map. In HNCC, the nonzero channels
are compositions of Pauli rotations. Let
\begin{align}
\label{eq:lcqc-1-norm}
    \lambda=\sum_j|\beta_j|
\end{align}
be the \emph{LCQC $1$-norm}. Then \eq{lcqc-V} implies that
$\mc V_K(x)/\lambda$ can be realized in expectation as a randomized signed
channel $\eta\mc E$:
\[
    \mathbb E[\eta\mc E]=\frac{\mc V_K(x)}{\lambda},
\]
where
\[
    \Pr\bigl[
        (\eta,\mc E)=
        \bigl(\sgn(\beta_j),\mc N_j\bigr)
    \bigr]
    =
    \frac{|\beta_j|}{\lambda}.
\]

For the total evolution over $\nu$ Trotter steps, we independently sample a
pair $(\eta_k,\mc E_k)$ for each step $k\in[\nu]$, apply the channels
$\mc E_k\mc S_K(x)$ sequentially, and record the signs $\eta_k$ classically.
Measuring the observable $O$ and multiplying the outcome by
$\prod_{k=1}^{\nu}\eta_k$ yields an estimator whose expectation is
\begin{align*}
    &\mathbb E\biggl[
        \biggl(\prod_{k=1}^{\nu}\eta_k\biggr)
        \tr\bigl[
            O\mc E_\nu\mc S_K(x)\cdots
            \mc E_1\mc S_K(x)(\rho_0)
        \bigr]
    \biggr]
    \\
    ={}&
    \tr\biggl[
        O
        \prod_{k=1}^{\nu}
        \mathbb E[\eta_k\mc E_k\mc S_K(x)]
        (\rho_0)
    \biggr]
    \\
    ={}&
    \tr\biggl[
        O\Bigl(\frac{\mc U(x)}{\lambda}\Bigr)^\nu
        (\rho_0)
    \biggr]
    =
    \frac{1}{\lambda^\nu}
    \tr[O\mc U(t)(\rho_0)].
\end{align*}

Since each measurement outcome is bounded by $\|O\|$, Hoeffding's inequality
implies that estimating the target expectation requires
$\mathcal O(\lambda^{2\nu}/\eps^2)$ independent repetitions. Therefore,
reducing the sampling overhead amounts to constructing an LCQC representation
of the Trotter remainder with small $1$-norm. Here, $\lambda$ is the LCQC
$1$-norm of the complete compensation map applied after one product-formula
step. If this map is implemented by several independently sampled LCQCs, their
$1$-norms multiply.

To obtain such a representation, we truncate the BCH expansion of the Trotter
remainder $\mc V_K(x)$ and sample the retained nested-commutator terms with the
light-cone sampler described below.
Section~\ref{sec:trotter-lcqc} gives the resulting LCQC representation, its
$1$-norm bound, and an efficient method for sampling from it.
Section~\ref{sec:complexity-analysis} derives the overall complexity.

\subsection{Linear combination of quantum channels for the Trotter remainder}
\label{sec:trotter-lcqc}

To obtain higher-order compensation terms while retaining
nested-commutator structure, we use the BCH expansion. For operators $A$ and
$B$, we write $\ad_A(B)=[A,B]$. The leading nonzero term in the BCH expansion
of the Trotter remainder already reveals its commutator structure. For
instance, for the first-order product formula
$S_1(x)=e^{-\i xH_1}e^{-\i xH_2}$,
\begin{align*}
    \mc V_1(x)
    &=
    e^{-\i x\ad_{H_1+H_2}}
    e^{\i x\ad_{H_2}}
    e^{\i x\ad_{H_1}}
    \\
    &=
    \exp\Bigl(
        \frac{x^2}{2}
        [\ad_{H_1},\ad_{H_2}]
        +\mathcal O(x^3)
    \Bigr).
\end{align*}
The notation $\mathcal O(x^3)$ hides the higher-order terms in the BCH series
in the exponent. The displayed commutator explains where commutator scaling
comes from, but by itself it does not bound the full Trotter error. Indeed, for
many physically relevant Hamiltonians, the BCH series in the exponent can
diverge~\cite{mizuta2026commutator}. This is why commutator-scaling analyses
of product formulas often use error representations that do not rely on the
full BCH series~\cite{childs2021theory}.

The NCC algorithm~\cite{zeng2025simple} uses a BCH-free commutator
representation to compensate the leading Trotter error while preserving
commutator scaling. However, this representation does not readily extend to
the higher-order compensation terms required for polylogarithmic precision
dependence. Sun et al. conjectured that one may simultaneously achieve
nested-commutator scaling and polylogarithmic precision in lattice
models~\cite{sun2026high}. The advantage of the BCH expansion is that it gives
nested-commutator representations of compensation terms at arbitrary order.
We retain finitely many BCH terms and bound the truncation error using
Refs.~\cite{mizuta2026commutator,wang2026lindbladian}.

In general, the Trotter remainder can be written as a product of
exponentials:
\begin{align}
    \mc V_K(x)
    &=
    \mc U(x)\mc S_K^\dagger(x)
    \nonumber\\
    &=
    e^{-\i x\ad_H}
    \prod_{v=\kappa_KL}^{1}
    e^{\i a_vx\ad_{H_{\ell_v}}}.
\end{align}
Formally, the BCH formula gives
\[
    \mc V_K(x)
    =
    \exp\biggl(
        \sum_{q=K+1}^{\infty}\Phi_q(x)
    \biggr),
\]
where $\Phi_q(x)$ is the BCH term of order $q$. It is proportional to $x^q$
and is a linear combination of $q$-fold right-nested commutators of the generators
$\{\i x\ad_{H_\ell}\}_{\ell=1}^{L}$ and $-\i x\ad_H$.

By \lem{bound-Phi-q},
\[
    \|\Phi_q(x)\|
    \le
    \frac{q!(2k\tilde g)^qk^{-1}N}{q^2},
    \qquad
    \tilde g=
    (\max_v|a_v|\kappa_K+1)g_0x.
\]
For any fixed $x>0$, the factorial growth makes the sum of these upper bounds
diverge.
We therefore truncate the series at order $q_0$ and define the truncated remainder
\begin{align}
\label{eq:exp-remainder}
    \tilde{\mc V}_{K}^{(q_0)}(x)
    =
    \exp\biggl(
        \sum_{q=K+1}^{q_0}\Phi_q(x)
    \biggr).
\end{align}
The truncation error can be controlled via a doubly right-nested commutator bound~\cite{wang2026lindbladian}, as shown in \lem{truncation-BCH}.

\subsubsection{Sampling for a BCH term}
Before constructing the LCQC for the truncated remainder
$\tilde{\mc V}_{K}^{(q_0)}(x)$, we first give an LCQC for $\Phi_q(x)$. By
decomposing each $H_{\ell}$ and $H$ into the Pauli operators
$\{P_{\gamma}\}$, each BCH term $\Phi_q(x)$ expands into a linear combination
of $q$-fold nested commutators of the adjoint Pauli actions
$\{\ad_{\i P_{\gamma}}\}$. By recursively applying the Jacobi identity,
$[\ad_A,\ad_B]=\ad_{[A,B]}$, each nested commutator simplifies to
\begin{align*}
    &x^q[\ad_{\i P_{\gamma_1}},
    [\ad_{\i P_{\gamma_2}},[\cdots[
    \ad_{\i P_{\gamma_{q-1}}},\ad_{\i P_{\gamma_q}}]\cdots]]]
    \\
    ={}&\!
    x^q\ad_{[\i P_{\gamma_1},
    [\i P_{\gamma_2},[\cdots[
    \i P_{\gamma_{q-1}},\i P_{\gamma_q}]\cdots]]]}
    \in
    \bigl\{
        0,\pm2^{q-1}x^q\ad_{\i P}
    \bigr\},
\end{align*}
for some Pauli operator $P$. The second line follows from the closure of
imaginary Pauli operators under commutation, with a factor of $2$ arising at
each nonzero nesting level. Let $\mc U_V(\rho)=V\rho V^\dagger$. The
parameter-shift rule writes $\ad_{\i P}$ as a difference of two unitary
channels:
\begin{align}
\label{eq:para-shift}
    \ad_{\i P}(\rho)
    ={}&
    (\i P\rho-\i\rho P)
    \nonumber\\
    ={}&
    \frac{1}{2}\bigl[
        (I+\i P)\rho(I-\i P)
        -
        (I-\i P)\rho(I+\i P)
    \bigr]
    \nonumber\\
    ={}&
    \Bigl(
        \mc U_{e^{\i\frac{\pi}{4}P}}
        -
        \mc U_{e^{-\i\frac{\pi}{4}P}}
    \Bigr)(\rho).
\end{align}
Thus each BCH term $\Phi_q(x)$ is a linear combination of Pauli-rotation
channels. Sampling this LCQC reduces to sampling a nested commutator
\[
    [\i\alpha_{v_q}P_{v_q},
    [\i\alpha_{v_{q-1}}P_{v_{q-1}},[\cdots[
    \i\alpha_{v_2}P_{v_2},\i\alpha_{v_1}P_{v_1}]\cdots]]]
\]
with probability proportional to its norm, given a set of scaled Pauli
operators $\{\alpha_vP_v\}_{v=1}^{M}$, where each $P_v$ has support size at
most $k$. Directly enumerating the exact distribution over all index tuples
$(v_1,\ldots,v_q)$ costs $\mathcal O(M^q)$. This is polynomial in $M$ only
for fixed $q$. Since $q$ can be as large as $q_0$, the cost grows
exponentially with the BCH truncation order. Crucially, the Pauli algebraic
structure ensures that any two Pauli operators either commute or
anti-commute, which simplifies the norm of any nonzero nested commutator to
the product of the individual coefficients:
\begin{align*}
    &\|[\i\alpha_{v_q}P_{v_q},
    [\i\alpha_{v_{q-1}}P_{v_{q-1}},[\cdots[
    \i\alpha_{v_2}P_{v_2},\i\alpha_{v_1}P_{v_1}]\cdots]]]\|
    \\
    ={}&
    2^{q-1}\prod_{j=1}^{q}|\alpha_{v_j}|.
\end{align*}
Sampling each index $v_j$ independently with probability proportional to
$|\alpha_{v_j}|$ therefore gives the correct weights. However, many sampled
tuples yield a zero nested commutator, for example when a newly sampled term
has disjoint support from the preceding nested commutator. The resulting LCQC
$1$-norm scales with $(\sum_v|\alpha_v|)^q$, which can scale as $N^q$ for an
extensive local Hamiltonian, far exceeding the $\mathcal O(N)$ commutator
bound for $\|\Phi_q\|$.

We instead use a light-cone sampler similar to that of
Ref.~\cite{zeng2025simple}. Instead of independent sampling, we sequentially
sample $v_j$ for $j$ from $1$ to $q$ while tracking the union of the supports
of $P_{v_1},\ldots,P_{v_j}$. For the next operator $P_{v_{j+1}}$, we restrict
the sampling space to those terms that overlap with this union support. Since
$P_{v_{j+1}}$ commutes with
\(
    [P_{v_j},[P_{v_{j-1}},[\cdots[P_{v_2},P_{v_1}]\cdots]]]
\)
when their supports are disjoint, this restriction eliminates tuples that
vanish because of disjoint support. The resulting LCQC for $\Phi_q(x)$ has
$1$-norm
\begin{align}
\label{eq:lambda-q-summary}
    \lambda_q
    =
    \frac{q!(2k\tilde g)^qk^{-1}N}{q^2},
\end{align}
which coincides with the upper bound on $\|\Phi_q(x)\|$.

\subsubsection{Sampling for the Trotter remainder}

For a positive integer $m$, we factorize the truncated Trotter remainder as
\begin{align}
\label{eq:remainder-factorization}
    \tilde{\mc V}_{K}^{(q_0)}(x)
    =
    \biggl[
        \exp\biggl(
            \frac{1}{m}
            \sum_{q=K+1}^{q_0}\Phi_q(x)
        \biggr)
    \biggr]^m.
\end{align}
We call the factor repeated $m$ times on the right-hand side of
\eq{remainder-factorization} a fractional Trotter remainder. We construct an
LCQC for this map and apply $m$ independently sampled copies after each
product-formula step. The choice of $m$ is discussed in
\sec{complexity-analysis}.

Expanding the fractional Trotter remainder gives
\begin{align}
    &\exp\biggl(
        \frac{1}{m}
        \sum_{q=K+1}^{q_0}\Phi_q(x)
    \biggr)
     \nonumber\\
    ={}&\mc I+\frac{1}{m}\sum_{q=K+1}^{q_0}\Phi_q(x)
    +\sum_{j=2}^{\infty}\frac{1}{j!}
    \biggl(\frac{1}{m}\sum_{q=K+1}^{q_0}\Phi_q(x)\biggr)^j.
    \label{eq:one-factor-expansion}
\end{align}
The term with $j=1$ is linear in the BCH generator, whereas the terms with
$j\ge2$ are products of multiple BCH terms.
Define
\begin{align*}
    \lambda_{\mathrm{single}}
    =
    \frac{1}{m}
    \sum_{q=K+1}^{q_0}\lambda_q,
    \qquad
    \lambda_{\mathrm{multi}}
    =
    \sum_{j=2}^{\infty}
    \frac{\lambda_{\mathrm{single}}^j}{j!}.
\end{align*}
When $\lambda_{\mathrm{single}}$ is bounded by a constant,
\begin{align*}
    \lambda_{\mathrm{multi}}
    =
    e^{\lambda_{\mathrm{single}}}-1-\lambda_{\mathrm{single}}
    =
    \mathcal O(\lambda_{\mathrm{single}}^2).
\end{align*}
Applying the parameter-shift representation of each $\Phi_q(x)$ to
\eq{one-factor-expansion} gives an LCQC with $1$-norm
\begin{align*}
    \lambda_{\mathrm{unpaired}}
    =
    1+\lambda_{\mathrm{single}}+\lambda_{\mathrm{multi}} = 1+\mathcal O(\lambda_{\mathrm{single}})
\end{align*}
under the same condition.
The identity is sampled with probability $1/\lambda_{\mathrm{unpaired}}$;
otherwise, the sampler draws either a term from the linear part or a product
of multiple BCH terms according to its LCQC weight, and independently samples
the LCQC for each constituent BCH term. Every
nonidentity sampled channel is a composition of $\pi/4$ Pauli rotations and
therefore uses only Clifford gates. We refer to this construction as unpaired
HNCC.

We can improve the $1$-norm dependence on $\lambda_{\mathrm{single}}$ from
linear to quadratic by pairing the part of \eq{one-factor-expansion} that is
linear in the BCH generator with the identity channel $\mc I$. By the
preceding discussion, each $\Phi_q(x)$ is a linear
combination of $\ad_{\i P}$ for Pauli operators $P$. To illustrate the
reduction, consider a single term
$c\ad_{\i P}$, giving $\mc I+c\ad_{\i P}$. Expanding the unitary rotation
$\mc U_{e^{\theta\i P}}(\rho)=e^{\theta\i P}\rho e^{-\theta\i P}$ yields
\begin{align*}
    \mc U_{e^{\theta\i P}}
    =
    \cos^2\theta\,\mc I
    +
    \sin\theta\cos\theta\,\ad_{\i P}
    +
    \sin^2\theta\,\mc U_{\i P}.
\end{align*}
Setting $c=\tan\theta$ and rearranging terms gives a new LCQC identity. We
compare it with the parameter-shift rule:
\begin{align*}
    &\mc I+c\ad_{\i P}
    \\
    ={}&
    \begin{cases}
        \mc I+
        c\bigl(
            \mc U_{e^{\i\pi P/4}}
            -
            \mc U_{e^{-\i\pi P/4}}
        \bigr)
        &\text{parameter shift},\\
        (1+c^2)\mc U_{e^{\theta\i P}}
        -
        c^2\mc U_{\i P}
        &\text{identity pairing}.
    \end{cases}
\end{align*}
The $1$-norm of the parameter-shift representation is $1+2|c|$, whereas the
pairing identity yields $(1+c^2)+c^2=1+2c^2$. This reduces the $1$-norm from
$1+\mathcal O(c)$ to $1+\mathcal O(c^2)$.

For the full linear part, the sampler first draws a normalized Pauli-adjoint
term and applies the pairing identity with
$|c|=\lambda_{\mathrm{single}}/2$. Thus all sampled rotations in this part have
the same angle magnitude
$\theta=\tan^{-1}(\lambda_{\mathrm{single}}/2)$, independent of the sampled
Pauli operator. The resulting LCQC $1$-norm is
\begin{align*}
    1+\frac{\lambda_{\mathrm{single}}^2}{2}.
\end{align*}
Hence the LCQC for one fractional Trotter remainder has $1$-norm
\begin{align*}
    \lambda_{\mathrm{paired}}
    =
    1+\frac{\lambda_{\mathrm{single}}^2}{2}
    +\lambda_{\mathrm{multi}}
    =
    1+\mathcal O(\lambda_{\mathrm{single}}^2).
\end{align*}

\subsubsection{Gate complexity of a sampled channel}
\label{sec:classical-sample}

We first consider the part linear in the BCH terms. Each nonzero sampled
channel is a Pauli rotation by the common angle described above. Its generator
arises from a $q$-fold right-nested commutator of Pauli operators, each supported on at most
$k$ qubits. Its support therefore contains at most $kq\le kq_0$ qubits, so the
channel uses
$\mathcal O(kq_0)=\mathcal O(k\log(1/\eps))$ gates.

For a product $\prod_{r=1}^j\Phi_{q_r}(x)$ of multiple BCH terms, the sampled
channel is a composition of $\pi/4$ Pauli rotations and uses
$\mathcal O(k\sum_{r=1}^j q_r)=\mathcal O(ks)$ gates, where
$s=\sum_{r=1}^j q_r$ is the total BCH order. In particular, all rotations in
this part are Clifford gates. For $\ell\in[\nu]$ and $a\in[m]$, let
$s_{\ell,a}$ denote the total BCH order contributed by products of multiple
BCH terms in the $a$-th compensation sample after the $\ell$-th
product-formula step. We retain only outcomes satisfying
\begin{align}
\label{eq:total-order-condition}
    \sum_{\ell=1}^{\nu}\sum_{a=1}^{m}s_{\ell,a}\le s_0.
\end{align}
The proof of \thm{complexity} shows that $s_0$ can be chosen as
$\mathcal O(\log(1/\eps))$ while keeping the bias from discarded samples at
$\mathcal O(\eps\|O\|)$. Consequently, the retained products of multiple BCH
terms require $\mathcal O(k\log(1/\eps))$ gates over the entire circuit.
\subsection{Complexity analysis}
\label{sec:complexity-analysis}

We next choose the number of Trotter steps $\nu$ and bound the maximum gate
count per circuit. Throughout this subsection, $K$ is a fixed constant. When
$4q_0k\tilde g\le1/2$, \eq{lambda-q-summary} gives
\begin{align*}
    \lambda_{\mathrm{single}}
    \le
    \frac{N}{m}\sum_{q=K+1}^{q_0}(2q_0k\tilde g)^q
    =
    \mathcal O\Bigl(\frac{N}{m}(q_0k\tilde g)^{K+1}\Bigr).
\end{align*}
For $\lambda_{\mathrm{single}}=\mathcal O(1)$, the LCQC for one fractional
Trotter remainder has $1$-norm
$\lambda_{\mathrm{paired}}=1+\mathcal O(\lambda_{\mathrm{single}}^2)$. Since it is sampled $m$
times after each of the $\nu$ product-formula steps, the LCQC $1$-norm of the
full circuit is bounded by
\begin{align*}
    \lambda_{\mathrm{paired}}^{m\nu}
    &\le
    \exp\bigl(\mathcal O(m\nu\lambda_{\mathrm{single}}^2)\bigr)\\
    &\le
    \exp\Bigl(
        \mathcal O\Bigl(
            \frac{N^2(q_0kg_0t)^{2K+2}}
            {m\nu^{2K+1}}
        \Bigr)
    \Bigr),
\end{align*}
where we used $\tilde g=\mathcal O(g_0t/\nu)$. Requiring the exponent in this
bound to be $\mathcal O(1)$ preserves the $\mathcal O(\eps^{-2})$ repetition
count.
We take $q_0=\Theta(\log(1/\eps))$ to control the accumulated
BCH truncation error, as verified in the proof of \thm{complexity}.
Substituting this choice into the preceding $1$-norm bound, we choose
\[
    \nu
    =
    \Theta\Bigl(
        \Bigl(\frac{N^2}{m}\Bigr)^{\frac{1}{2K+1}}
        (kg_0t\log(1/\eps))^{1+\frac{1}{2K+1}}
    \Bigr),
\]
which keeps the $1$-norm $\lambda_{\mathrm{paired}}^{m\nu}$ bounded by a
constant. A sufficiently
large constant prefactor ensures $4q_0k\tilde g\le1/2$, as verified in the proof of
\thm{complexity}.

Each Trotter step uses $\mathcal O(\Gamma k)$ gates. After each Trotter step,
the $m$ compensation samples from the linear part use
$\mathcal O(mk\log(1/\eps))$ gates. With
$s_0=\mathcal O(\log(1/\eps))$ in \eq{total-order-condition}, the retained
products of multiple BCH terms contribute $\mathcal O(k\log(1/\eps))$ gates
over the entire circuit.
Since $m\nu\ge1$, this contribution is absorbed into the total cost of
sampling the linear part,
$\mathcal O(\nu m k\log(1/\eps))$. Hence every circuit uses
\begin{multline*}
    \mathcal O\Bigl(
        \Bigl(\frac{N^2}{m}\Bigr)^{\frac{1}{2K+1}}
        (kg_0t\log(1/\eps))^{1+\frac{1}{2K+1}}
        \\
        k\bigl(\Gamma+m\log(1/\eps)\bigr)
    \Bigr)
\end{multline*}
gates. Minimizing the dependence on $m$ over $1\le m\le N$ gives
$m=\Theta(\min\{N,\Gamma/\log(1/\eps)\})$, proving
\thm{fixed-complexity}.

\subsubsection{Unpaired HNCC}

Unpaired HNCC represents the linear part using the parameter-shift rule, so one
fractional Trotter remainder has LCQC $1$-norm
$\lambda_{\mathrm{unpaired}}=1+\mathcal O(\lambda_{\mathrm{single}})$.
Although $\lambda_{\mathrm{single}}$ scales as $1/m$, the fractional
remainder is sampled $m$ times after each product-formula step. Increasing $m$
therefore does not reduce the leading contribution to the full-step LCQC
$1$-norm, and we take $m=1$. The preceding bound then gives
$\lambda_{\mathrm{unpaired}}=1+\mathcal O(N(q_0kg_0x)^{K+1})$. Keeping
$\lambda_{\mathrm{unpaired}}^{2\nu}=\mathcal O(1)$ therefore requires
$\nu N(q_0kg_0x)^{K+1}=\mathcal O(1)$, so we take
\[
    \nu
    =
    \Theta\Bigl(
        N^{1/K}
        (kg_0t\log(1/\eps))^{1+1/K}
    \Bigr).
\]
Thus unpaired HNCC has the same $N$- and $t$-dependence as the underlying
$K$-th order product formula while achieving polylogarithmic precision
dependence.

To obtain a maximum gate-count guarantee, we retain only samples for which
the total BCH order of all sampled compensation terms is
$\mathcal O(\log(1/\eps))$. This limits every circuit to
$\mathcal O(k\log(1/\eps))$ compensation gates and introduces only
$\mathcal O(\eps)$ additional bias. Since the unpaired representation uses
only $\pi/4$ Pauli rotations, these compensation gates are Clifford gates and
require no $T$ gates. The formal statement and proof are given in
\thm{unpaired-global-order}.

\section{Notation}
\label{sec:prelim}
For a positive integer $N$, we write $[N]=\{1,\ldots,N\}$. For an operator
$A$, let $\supp(A)$ denote the set of sites on which $A$ acts nontrivially.
We use $\|\cdot\|$ for the operator norm of matrices and the diamond norm of
superoperators.

We set $[X_1]:=X_1$ and, for $q\ge2$, write
$[X_1,\ldots,X_q]:=[X_1,[X_2,\ldots,[X_{q-1},X_q]\ldots]]$ for the
right-nested commutator. We also define $\ad_A(B)=[A,B]$.

We let $\mathcal P_N=\{I,X,Y,Z\}^{\otimes N}$ denote the set of unsigned
$N$-qubit Pauli operators, with signs absorbed into their coefficients. We
write $\pm\i\mathcal P_N=\{\pm\i P:P\in\mathcal P_N\}$, and let $\CPTP_N$
denote the set of $N$-qubit quantum channels. For any unitary $V$, we denote
the corresponding channel by $\mc U_V=V(\cdot)V^\dagger$.

\section{HNCC construction}
\label{sec:hncc-construction}
We now give the detailed construction of HNCC.
For a set of operators $\{h_\gamma\}_\gamma$ on a finite site set $\Lambda$, we say
that the set is $g$-extensive if
\begin{align*}
    \sum_{\gamma:\supp(h_\gamma)\ni j}\|h_\gamma\|
    \le g,
    \qquad \forall j\in\Lambda.
\end{align*}
A $k$-local Hamiltonian is $g_0$-extensive if it admits a decomposition into
a $g_0$-extensive set of terms, each supported on at most $k$ sites.
Let
\begin{align}
\label{eq:def-H}
    H = \sum_{\gamma=1}^{\Gamma} \alpha_{\gamma} P_{\gamma}
\end{align}
be the Pauli decomposition of the $k$-local Hamiltonian, where each
$P_\gamma\in\mathcal{P}_N$ has support size at most $k$. Identity Pauli terms
are omitted, since they contribute only a global phase.
We assume that this decomposition satisfies the $g_0$-extensiveness condition
\begin{align}
\label{eq:def-extensive}
\sum_{\gamma: \mathrm{supp}(P_{\gamma}) \ni j} |\alpha_{\gamma}|
\le g_0, \quad \forall j \in [N].
\end{align}
We partition the Hamiltonian into $L$ groups of commuting terms:
\begin{align}
\label{eq:local-decompose-H-l}
H = \sum_{\ell=1}^{L} H_{\ell}, \quad H_{\ell} = \sum_{\gamma} \alpha_{\ell, \gamma} P_{\ell, \gamma},
\end{align}
such that $[P_{\ell, \gamma}, P_{\ell, \gamma'}] = 0$ for all $\gamma, \gamma'$ within each group $\ell$. The $g_0$-extensive condition \eq{def-extensive} then reads
\begin{align}
\label{eq:extensive-decompose}
\sum_{\substack{\ell\in[L],\,\gamma:\\ \supp(P_{\ell,\gamma})\ni j}}
\|\alpha_{\ell,\gamma}P_{\ell,\gamma}\|
=\sum_{\substack{\ell\in[L],\,\gamma:\\ \supp(P_{\ell,\gamma})\ni j}}
|\alpha_{\ell,\gamma}| \le g_0,
\end{align}
for any $j\in [N]$.

We consider the evolution channel
\begin{align*}
    \mc U(t)=e^{-\i t\ad_H}.
\end{align*}
For a step size $x$, let $\mc S_K(x)$ denote the $K$-th order product-formula
channel
\begin{align*}
    \mc S_K(x)
    =
    \prod_{v=1}^{\kappa_KL} e^{-\i a_v x\ad_{H_{\ell_v}}},
\end{align*}
where $\kappa_K$ is the number of product-formula stages per group,
$a_{\max}=\max_v |a_v|$, and $\ell_v\in[L]$ specifies the ordering of the
product. For the standard Suzuki--Trotter formula,
$\kappa_K=2\cdot 5^{K/2-1}$ and $a_{\max}\le K/3^{K/2}$~\cite{wiebe2010higher}. We define the
multiplicative Trotter remainder by
\begin{align*}
    \mc V_K(x)=\mc U(x)\mc S_K^\dagger(x).
\end{align*}
Our goal is to compensate this remainder after each Trotter step, so that
$\mc V_K(x)\mc S_K(x)=\mc U(x)$ in the ideal case.

We first state the randomized LCQC sampling rule and then construct an LCQC
for the Trotter remainder from its truncated BCH expansion. We next give the
one-step sampler and the full estimator.


\subsection{Randomized linear combination of quantum channels}

Suppose that a superoperator $\mc M$ admits a linear combination of quantum
channels (LCQC) $\mc M=\sum_j\beta_j\mc N_j$ with $1$-norm
$\lambda=\sum_j|\beta_j|$.
Sampling an index $j$ with probability $p(j)=|\beta_j|/\lambda$ and setting
$(\eta,\mc N)=(\sgn(\beta_j),\mc N_j)$ gives
\begin{align*}
    \E[\eta \mc N]
    =
    \sum_j p(j)\sgn(\beta_j)\mc N_j
    =
    \sum_j \frac{\beta_j}{\lambda}\mc N_j
    =
    \frac{1}{\lambda}\mc M.
\end{align*}
The following lemma bounds the number of samples required to estimate
$\tr[O\mc M(\rho_0)]$ for any observable $O$ and initial state $\rho_0$,
provided such a sampling procedure exists.

\begin{lemma}
\label{lem:random-channel}
    Given an $N$-qubit superoperator $\mc M$ and a normalization factor $\lambda > 0$, assume that we can sample a tuple $(\eta,\mc N)$ with $\eta\in[-1,1]$ and
    $\mc N$ either an $N$-qubit quantum channel or the zero map, such that
    \begin{align*}
        \E[\eta \mc N] = \frac{1}{\lambda}\mc M.
    \end{align*}
    Given $\eps > 0$, an observable $O$, and an initial state $\rho_0$, there exists an algorithm estimating $\tr[O \mc M(\rho_0)]$ to precision $\eps\|O\|$ with success probability at least $1-\delta$ using 
    \begin{align*}
        \mathcal{O}\Bigl(\frac{\lambda^2}{\eps^2}\log(1/\delta)\Bigr)
    \end{align*}
    independent samples, with a quantum circuit executed only when the sampled
    map $\mc N$ is nonzero.
\end{lemma}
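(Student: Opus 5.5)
The plan is a standard Monte Carlo estimator with Hoeffding concentration, upgraded to failure probability $\delta$ by a median-of-means step.

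\textbf{Single-shot estimator.} In each repetition we draw an independent sample $(\eta,\mc N)$.
\begin{itemize}
\item If $\mc N$ is the zero map, we record the value $Y=0$ and run no circuit.
\item Otherwise $\mc N$ is a quantum channel. We prepare $\rho_0$, apply $\mc N$, and measure $O$ in its eigenbasis, obtaining an eigenvalue $o$ with $|o|\le\|O\|$. We record $Y=\lambda\eta o$.
\end{itemize}

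\textbf{Unbiasedness and boundedness.} Conditioned on $(\eta,\mc N)$, the measurement gives
\[
\E[Y\mid\eta,\mc N]=\lambda\eta\tr[O\mc N(\rho_0)].
\]
This also holds in the zero-map case, since then both sides vanish. The map $\mc N\mapsto\tr[O\mc N(\rho_0)]$ is linear, so the tower property and the hypothesis $\E[\eta\mc N]=\mc M/\lambda$ give
\[
\E[Y]=\lambda\tr\bigl[O\,\E[\eta\mc N](\rho_0)\bigr]=\tr[O\mc M(\rho_0)].
\]
The only point needing care is exchanging the expectation with the trace. This is immediate if the sampling distribution has finite support. Otherwise it follows from dominated convergence, because every term is bounded by $\lambda\|O\|$. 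Indeed $|\eta|\le1$, and $|o|\le\|O\|$ whenever a channel is applied, so $|Y|\le\lambda\|O\|$.

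\textbf{Concentration.} Take $n$ i.i.d.\ copies $Y_1,\ldots,Y_n$ and form their empirical mean. Hoeffding's inequality, applied to variables with range $2\lambda\|O\|$, gives
\[
\Pr\Bigl[\Bigl|\tfrac1n\textstyle\sum_i Y_i-\tr[O\mc M(\rho_0)]\Bigr|\ge\eps\|O\|\Bigr]\le 2\exp\bigl(-n\eps^2/(2\lambda^2)\bigr).
\]
Choosing $n=\lceil 2\lambda^2\eps^{-2}\log(2/\delta)\rceil=\mathcal O(\lambda^2\eps^{-2}\log(1/\delta))$ makes the right-hand side at most $\delta$. Alternatively, we can use Chebyshev with $\mathcal O(\lambda^2/\eps^2)$ samples to reach constant success probability, then take the median of $\mathcal O(\log(1/\delta))$ such means.

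A quantum circuit is executed only in repetitions where $\mc N\neq0$, which matches the statement. I expect no genuine obstacle; the only subtle step is the linearity/tower argument that handles the zero map and the random sign together.
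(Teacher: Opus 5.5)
Your proposal is correct and matches the paper's proof essentially step for step. You use the same estimator $\lambda\eta v$, recorded as $0$ with no circuit run when $\mc N=0$, the same unbiasedness argument, and Hoeffding with range $2\lambda\|O\|$ and $n=2\lambda^2\eps^{-2}\ln(2/\delta)$. Your dominated-convergence remark and the median-of-means alternative are harmless extras that the paper does not need.
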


\begin{proof}
    In each run, we sample a tuple $(\eta, \mc N)$. If $\mc N$ is a quantum channel, we apply it to the initial state $\rho_0$, measure the observable $O$, and let $v$ be the measurement outcome. If $\mc N = 0$, we do not execute any quantum circuit and set $v = 0$. In both cases, the expectation of $v$ conditional on the sampled $\mc N$ is $\tr[O \mc N(\rho_0)]$, and the value is bounded by $|v| \le \|O\|$. 
    
    Set $X=\lambda\eta v$. Its expectation is
    \begin{align*}
        \E[X] &= \E\big[ \lambda \eta \tr[O \mc N(\rho_0)] \big] \\
        &= \lambda \tr\big[ O \E[\eta \mc N](\rho_0) \big] = \tr[O \mc M(\rho_0)].
    \end{align*}
    Since $|v| \le \|O\|$ and $|\eta| \le 1$, the random variable $X$ always takes values in the interval $[-\lambda\|O\|, \lambda\|O\|]$. 
    
    By performing $n = 2\lambda^2 \eps^{-2} \ln(2/\delta)$ independent sampling runs, we obtain the empirical average $u = \frac{1}{n}\sum_{i=1}^n X_i$. Applying Hoeffding's inequality yields 
    \begin{align*}
        \Pr\big[|u - \tr[O\mc M(\rho_0)]| \ge \eps \|O\|\big] &\le 2\exp\Bigl(-\frac{2n(\eps\|O\|)^2}{(2\lambda\|O\|)^2}\Bigr) \\
        &= 2\exp\Bigl(-\frac{n\eps^2}{2\lambda^2}\Bigr) \le \delta.
    \end{align*}
    This completes the proof.
\end{proof}

\subsection{BCH expansion and truncation of the Trotter remainder}
In this subsection, we express the Trotter remainder $\mc V_K(x)$ approximately as a linear combination of nested commutators of $P_{\gamma}$.

We first recall the explicit BCH formula used below. For $\sigma\in S_q$,
define the reordering operator $\mc R_\sigma\{\cdot\}$ by
\begin{align*}
    \mc R_\sigma\big\{[O_q,\ldots,O_2,O_1]\big\}
    =
    [O_{\sigma(q)},\ldots,O_{\sigma(2)},O_{\sigma(1)}].
\end{align*}
Formally, the BCH expansion of $e^{X_M}\cdots e^{X_2}e^{X_1}$ can be written as
\begin{align}
\label{eq:def-BCH}
    e^{X_M}\cdots e^{X_2}e^{X_1}
    =
    \exp\biggl(\sum_{q=1}^{\infty}\Phi_q\biggr),
\end{align}
where $\Phi_q$ is homogeneous of degree $q$ in the generators and is a
weighted sum of $q$-fold right-nested commutators of the
$X_j$~\cite{arnal2021note}:
\begin{align}
\label{eq:BCH}
    \Phi_q
    =
    \begin{multlined}[t]
    \sum_{\substack{p_v\ge 0\\p_1+\cdots+p_M=q}}
    \frac{1}{p_1!\cdots p_M!}
    \sum_{\sigma\in S_q}c_{\sigma,q}\\
    {}
    \mc R_\sigma\bigl\{
    [\underbrace{X_M,\ldots,X_M}_{p_M},\ldots,
     \underbrace{X_1,\ldots,X_1}_{p_1}]
    \bigr\},
    \end{multlined}
\end{align}
with
\begin{align}
\label{eq:def-c-sigma}
    c_{\sigma,q}
    =
    \frac{1}{q^2}
    \frac{(-1)^{d_\sigma}}{\binom{q-1}{d_\sigma}},
\end{align}
where $d_\sigma$ is the number of $i\in[q-1]$ satisfying
$\sigma(i)>\sigma(i+1)$. The equality in \eq{def-BCH} is understood as a
formal power-series identity unless convergence is stated.

Set $M:=\kappa_KL+1$. In the descending-product notation introduced above,
the Trotter remainder is
\begin{align*}
    \mc V_K(x)
    =
    e^{-\i x\ad_H}
    \prod_{v=M-1}^{1}e^{\i a_vx\ad_{H_{\ell_v}}}.
\end{align*}
To apply the BCH formula \eq{def-BCH}, we relabel these $M$ factors in the
order in which their channels are applied and write
\begin{align}
    \mc V_K(x)=\prod_{v=1}^{M}e^{-\i\ad_{\tilde H_v}},
\end{align}
where
\begin{align}
\label{eq:def-tilde-H}
    \tilde H_M
    &={}
    xH,
    \nonumber\\
    \tilde H_{M-v}
    &={}
    -a_vxH_{\ell_v},
    \qquad v\in[M-1].
\end{align}
Let $\Phi_q(x)$ denote the corresponding BCH term of order $q$, which is
homogeneous of degree $q$ in $x$. Since $\mc S_K$ is a $K$-th
order product formula, $\Phi_1(x)=\cdots=\Phi_K(x)=0$. We define the formal
BCH generator by
\begin{align}
\label{eq:def-VK-BCH}
    \Phi(x)
    &:={}
    \sum_{q=K+1}^{\infty}\Phi_q(x),
    \\
    \mc V_K(x)
    &={}
    \prod_{v=1}^{M}e^{-\i\ad_{\tilde H_v}}
    =
    \exp\bigl(\Phi(x)\bigr).
\end{align}
The Pauli decompositions of $H_{\ell}$ and $H$ in \eq{local-decompose-H-l} and \eq{def-H} imply that each $\tilde{H}_v$ admits a local Pauli decomposition denoted by \begin{align}
\label{eq:decompose-tilde-H}
    \tilde{H}_v = \sum_{\gamma} \tilde{\alpha}_{v,\gamma} \tilde{P}_{v,\gamma}=:\sum_{\gamma}\tilde{h}_{v,\gamma}, \quad \forall v\in[M].
\end{align} 
The dependence of $\tilde H_v$, $\tilde\alpha_{v,\gamma}$, and
$\tilde h_{v,\gamma}$ on the step size $x$ is kept implicit unless needed.
We take $\tilde P_{v,\gamma}\in\mathcal P_N$ and allow
$\tilde{\alpha}_{v,\gamma}\in\mathbb R$, with the sign carried by the
coefficient.

\begin{lemma}
\label{lem:bound-Phi-q}
    Let $\tilde g := (a_{\max}\kappa_K+1)g_0x$. The $q$-th order term in the BCH expansion of $\mc V_K(x)$ in \eq{def-VK-BCH} is bounded by 
    \begin{align}
    \label{eq:lambda-q}
        \|\Phi_q(x)\|\le \frac{q!(2k\tilde g)^qk^{-1} N}{q^2}=:\lambda_q.
    \end{align}
\end{lemma}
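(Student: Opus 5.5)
We start from the explicit BCH formula \eq{BCH} applied to the generators $X_v=-\i\ad_{\tilde H_v}$, $v\in[M]$. Using $[\ad_A,\ad_B]=\ad_{[A,B]}$, each right-nested commutator of superoperators becomes $\ad$ of a right-nested commutator of operators, and $\|\ad_A\|\le2\|A\|$ in the diamond norm. The bound then reduces to controlling
\[
\sum_{\substack{p_v\ge0\\ \sum_v p_v=q}}\frac{1}{p_1!\cdots p_M!}\sum_{\sigma\in S_q}|c_{\sigma,q}|\,
\bigl\|\mc R_\sigma\{[\tilde H_M,\ldots,\tilde H_1]\}\bigr\|.
\]
The coefficients satisfy $|c_{\sigma,q}|\le 1/q^2$. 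The multinomial structure is handled as follows. For a fixed multiset $(p_v)$, summing over $\sigma\in S_q$ and dividing by $\prod p_v!$ is the same as summing over all ordered words $(w_1,\ldots,w_q)\in[M]^q$ with the prescribed letter counts. Summing over $(p_v)$ therefore gives at most
\[
\frac{1}{q^2}\sum_{w\in[M]^q}\bigl\|[\tilde H_{w_1},\ldots,\tilde H_{w_q}]\bigr\|\cdot 2.
\]
The factor $2$ comes from the outer $\ad$; it will be absorbed by the powers of $2$ in the final bound.

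Next, we expand each $\tilde H_{w_j}=\sum_\gamma\tilde h_{w_j,\gamma}$ via \eq{decompose-tilde-H} and use the triangle inequality. This gives a sum over sequences of local terms $(\tilde h_{w_1,\gamma_1},\ldots,\tilde h_{w_q,\gamma_q})$. We then use the standard light-cone counting for right-nested commutators of local terms. The innermost term $\tilde h_{w_q,\gamma_q}$ is summed freely over all sites it may touch. Each subsequent term must overlap the union support of the terms already chosen, which has at most $k(q-j)$ sites, since otherwise the commutator vanishes. Each nonzero commutator step contributes a factor at most $2\|\tilde h\|$.

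The key input is an extensiveness bound for the combined family $\{\tilde h_{v,\gamma}\}_{v,\gamma}$. At a site $i$, the terms come from $\tilde H_M=xH$ with weight at most $g_0x$, and from the $\kappa_K$ copies of each group, each with $|a_v|\le a_{\max}$, contributing at most $a_{\max}\kappa_K g_0x$. The total is therefore at most $\tilde g$. Hence the sum over terms touching a set of $k(q-j)$ sites is at most $k(q-j)\tilde g$.

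For the innermost sum, we bound the total $\sum_{v,\gamma}\|\tilde h_{v,\gamma}\|\le N\tilde g/k\cdot k$ by summing extensiveness over sites. More carefully, the innermost factor is at most $N\tilde g$, because each term touches at least one site. Multiplying the factors gives
\[
N\tilde g\prod_{j=1}^{q-1}\bigl(2\,k j\,\tilde g\bigr)=N\tilde g\,2^{q-1}k^{q-1}(q-1)!\,\tilde g^{q-1}.
\]
With the outer factor $2$ and $1/q^2$, we obtain $2^q k^{q-1}(q-1)!\,\tilde g^q N/q^2\le q!(2k\tilde g)^q k^{-1}N/q^2$.

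The main obstacle is the combinatorial step that collapses $\sum_{(p_v)}\frac{1}{\prod p_v!}\sum_\sigma$ into a sum over words. Reordering with repeated letters produces each word exactly $\prod p_v!$ times, so the collapse is exact. It must nonetheless be checked carefully against the definition of $\mc R_\sigma$. A second point needing care is that the innermost term should be summed over all of $\Lambda$. The first overlap constraint then involves only $k$ sites, and the $j$-th constraint involves at most $kj$ sites, which yields exactly $(q-1)!$. A cruder count such as $kq$ at every step would lose this factorial, so each light-cone bound has to be set up with the correct site count.
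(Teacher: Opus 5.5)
Your proof is correct and follows the same structure as the paper's. The paper obtains your word-sum reduction $\|\Phi_q\|\le q^{-2}\sum_{w\in[M]^q}\|[X_{w_q},\ldots,X_{w_1}]\|$ by citing Proposition~5 of Aftab et al., then applies $\|\ad_A\|\le 2\|A\|$, and invokes Corollary~2 of Wang et al.\ for the light-cone sum over the $\tilde g$-extensive family $\{\tilde h_{v,\gamma}\}$, whereas you derive both ingredients inline and your sharper site count ($kj$ sites after $j$ chosen terms) yields $(q-1)!$ instead of $q!$, i.e.\ the bound $\lambda_q/q$, which implies the stated one.
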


\begin{proof}
    We first bound the total interaction strength per site for the collection of local terms $\{\tilde{h}_{v,\gamma}\}_{v,\gamma}$ defined in \eq{decompose-tilde-H}. For any site $j\in [N]$, we have
    \begin{align}
    &\sum_{v=1}^{M} \sum_{\gamma: \mathrm{supp}(\tilde{h}_{v,\gamma}) \ni j}
    \| \tilde{\alpha}_{v,\gamma} \tilde{P}_{v,\gamma}\| \nonumber\\
    \le{}& x\sum_{v=1}^{\kappa_K L} |a_v|
    \sum_{\gamma:\mathrm{supp}(P_{\ell_v,\gamma})\ni j} |\alpha_{\ell_v,\gamma}|
    + x\sum_{\gamma:\mathrm{supp}(P_{\gamma})\ni j} |\alpha_{\gamma}| \nonumber\\
    \le{}& a_{\max}x\sum_{v=1}^{\kappa_K L}
    \sum_{\gamma:\mathrm{supp}(P_{\ell_v,\gamma})\ni j} |\alpha_{\ell_v,\gamma}|
    + g_0x \nonumber\\
    ={}& a_{\max}x\kappa_K\sum_{\ell=1}^L
    \sum_{\gamma:\mathrm{supp}(P_{\ell,\gamma})\ni j} |\alpha_{\ell,\gamma}|
    + g_0x \nonumber\\
    \le{}& (a_{\max}\kappa_K+1)g_0x = \tilde g .
    \label{eq:extensive-tilde-H}
    \end{align}
    The second inequality uses $|a_v|\le a_{\max}$ and
    \eq{def-extensive}. The equality holds because each index
    $\ell\in[L]$ appears exactly $\kappa_K$ times in
    $\{\ell_v\}_{v=1}^{\kappa_KL}$, and the final inequality follows from
    \eq{extensive-decompose}. Hence \eq{extensive-tilde-H} shows that the
    collection of local terms $\{\tilde h_{v,\gamma}\}_{v,\gamma}$ is
    $\tilde g$-extensive.

    By \cite[Proposition~5]{aftab2024multi}, if
    $\exp(\sum_{q\ge1}\Phi_q)=\prod_{v=1}^M e^{X_v}$, then
    \begin{align*}
        \|\Phi_q\|
        \le
        \frac{1}{q^2}
        \sum_{v_1,\ldots,v_q=1}^M
        \|[X_{v_q},\ldots,X_{v_1}]\|.
    \end{align*}
    Applying this bound to the local adjoint generators gives
    \begin{align*}
        \|\Phi_q(x)\| &\le \frac{1}{q^2} \sum_{(v_1,\gamma_1), \ldots,(v_q,\gamma_q)}\big\|[\ad_{\tilde{h}_{v_q, \gamma_q}}, \ldots,\ad_{\tilde{h}_{v_1, \gamma_1}}]\big\| \nonumber\\
        &= \frac{1}{q^2}\sum_{(v_1,\gamma_1), \ldots,(v_q,\gamma_q)}\big\|\ad_{[\tilde{h}_{v_q, \gamma_q}, \ldots, \tilde{h}_{v_1, \gamma_1}]}\big\|\\
        &\le \frac{2}{q^2}\sum_{(v_1,\gamma_1), \ldots,(v_q,\gamma_q)} \big\|[\tilde{h}_{v_q, \gamma_q}, \ldots, \tilde{h}_{v_1, \gamma_1}]\big\|.
    \end{align*}
    Since each $\tilde{h}_{v,\gamma}$ is supported on at most $k$ sites and the collection is $\tilde g$-extensive, by \cite[Corollary 2]{wang2026lindbladian}, the sum of the nested commutators is bounded by 
    \begin{align*}
        \|\Phi_q(x)\| &\le \frac{2}{q^2}\frac{q!(2k\tilde g)^qN}{2k}
        \le \frac{q!(2k\tilde g)^q k^{-1} N}{q^2} = \lambda_q.
    \end{align*}
    This completes the proof.
\end{proof}
The factorial growth in this upper bound prevents it from controlling the
full series $\sum_{q=K+1}^{\infty}\Phi_q(x)$ for any fixed $x>0$. We
therefore consider the exponential of the $q_0$-th order truncated BCH
expansion \begin{align*}
    \tilde{\mc V}_{K}^{(q_0)}(x) = \exp\biggl(\sum_{q= K+1}^{q_0}\Phi_q(x)\biggr).
\end{align*}
The following lemma is the Hamiltonian specialization of
\cite[Theorem~2]{wang2026lindbladian}.
\begin{lemma}
 \label{lem:general-BCH-truncation}
Given a sequence of Hamiltonians $H_1, \ldots, H_M$, let
$\{\Phi_q\}_{q\ge1}$ denote the BCH terms in the expansion of
$\prod_{v=1}^M e^{-\i\ad_{H_v}}$. Define the doubly right-nested
commutator bound \begin{align*}
\alpha_{\mathrm{comm}}^{(q_1,\ldots,q_d)}
={}&
\begin{multlined}[t]
\sum_{v_{1,1},\ldots,v_{d,q_d}=1}^{M}
\Big\|
\Big[
    [\ad_{H_{v_{1,1}}},\ldots,\ad_{H_{v_{1,q_1}}}],\ldots,\\
    [\ad_{H_{v_{d,1}}},\ldots,\ad_{H_{v_{d,q_d}}}]
\Big]
\Big\|.
\end{multlined}
\end{align*} For any positive integer $q_0$, the truncation error of the $q_0$-th order BCH expansion is bounded by 
 \begin{align}
\label{eq:def-beta-comm-sum}
    &\biggl\|\exp\biggl(\sum_{q=1}^{q_0} \Phi_q\biggr) - \prod_{v = 1}^Me^{-\i \ad_{H_{v}}} \biggr\| \nonumber\\
    \le{}&  \sum_{d=1}^{\infty}\frac{1}{d!} \sum_{\substack{1\le q_1, \ldots, q_{d}\le q_0\\q_1+\cdots+q_{d} \ge q_0+1}}\alpha_{\mathrm{comm}}^{(q_1, \ldots, q_{d})}.
\end{align}
\end{lemma}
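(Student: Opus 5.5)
The plan is to avoid the full BCH series entirely and compare the two channels with a Duhamel (interpolation) argument in a scaling parameter $s\in[0,1]$. This reproduces the argument of \cite[Theorem~2]{wang2026lindbladian} in the Hamiltonian setting.

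\textbf{Setup.} Define
\begin{align*}
    W(s)=\prod_{v=1}^{M}e^{-\i s\ad_{H_v}},
    \qquad
    Z(s)=\sum_{q=1}^{q_0}s^q\Phi_q .
\end{align*}
Each $\Phi_q$ is a real combination of nested commutators of the anti-Hermitian superoperators $-\i\ad_{H_v}$. Hence $Z(s)=\ad_{\i A(s)}$ for some Hermitian $A(s)$, and $e^{Z(s)}$ is a unitary channel of diamond norm $1$; the same holds for $W(s)$ and its partial products.

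Both curves satisfy first-order equations:
\begin{align*}
    \tfrac{d}{ds}W(s)=G(s)W(s),
    \qquad
    \tfrac{d}{ds}e^{Z(s)}=G_Z(s)e^{Z(s)} .
\end{align*}
Here $G(s)=\sum_v W_{>v}(s)(-\i\ad_{H_v})W_{>v}(s)^{-1}$, with $W_{>v}$ the partial product of the factors applied after the $v$-th. The derivative-of-exponential formula gives
\begin{align*}
    G_Z(s)=\sum_{d\ge0}\frac{1}{(d+1)!}\ad_{Z(s)}^{d}\bigl(Z'(s)\bigr).
\end{align*}
Duhamel's formula then yields
\begin{align*}
    e^{Z(1)}-W(1)=\int_0^1 W(1)W(s)^{-1}\bigl(G_Z(s)-G(s)\bigr)e^{Z(s)}\,ds .
\end{align*}
By unitarity of the outer factors, the error is at most $\int_0^1\|G_Z(s)-G(s)\|\,ds$.

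\textbf{Cancellation.} The formal BCH identity $W(s)=\exp(\sum_{q\ge1}s^q\Phi_q)$ implies that the formal series of $G$ equals $\sum_{d}\frac{1}{(d+1)!}\ad_{\Phi(s)}^d\Phi'(s)$. Since $Z$ and $\Phi$ agree up to degree $q_0$, the Taylor coefficients of $G_Z$ and $G$ of degree $\le q_0-1$ in $s$ coincide. The difference therefore consists of two parts:
\begin{itemize}
    \item the terms of $G_Z$ of total degree $\ge q_0$, namely $\frac{1}{(d+1)!}[\Phi_{q_1},\ldots,\Phi_{q_d},q_{d+1}\Phi_{q_{d+1}}]$ with all $q_i\le q_0$ and $\sum q_i\ge q_0+1$;
    \item the Taylor remainder of $G$ beyond order $q_0-1$.
\end{itemize}

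\textbf{Bounding the two parts.} For the first part, expand each $\Phi_{q_i}$ via \eq{BCH} into right-nested commutators of the $\ad_{H_v}$, and use $\sum_\sigma|c_{\sigma,q}|/(p_1!\cdots p_M!)$-type bounds as in \cite[Proposition~5]{aftab2024multi}. This produces exactly the doubly right-nested commutators appearing in $\alpha_{\mathrm{comm}}^{(q_1,\ldots,q_{d})}$. The factor $q_{d+1}s^{\sum q_i-1}$, integrated over $s$, together with $1/(d+1)!$ combines into at most $1/d!$ after relabeling $d+1\to d$.

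For the second part, expand $W_{>v}(\cdot)W_{>v}^{-1}=\prod e^{-\i s\ad_{\ad_{H_u}}}$ by Taylor's theorem with integral remainder. Each remainder term is a $(q_0+1)$-fold nested commutator of single generators with coefficient bounded by the multinomial weight. This should be absorbed into the $q_1=\cdots=q_d=1$, $d=q_0+1$ terms of the right-hand side, or into the first part if the remainder is organized via the same BCH resummation.

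\textbf{Main obstacle.} I expect the coefficient bookkeeping to be the hardest step. One must show that the combinatorial weights from the $c_{\sigma,q}$, the $1/(d+1)!$ of the dexp series, the $s$-integrals, and the Taylor remainder of $G$ together are dominated by $1/d!$ times the unweighted sums in $\alpha_{\mathrm{comm}}$. I would first try to handle the Taylor remainder of $G$ not separately but by writing $G$ itself exactly as the dexp series of the \emph{untruncated} BCH generator at each fixed order, and grouping terms by total degree. This would make the cancellation of all degrees $\le q_0$ exact and leave only products with $\sum q_i\ge q_0+1$, now with $q_i$ restricted to $\le q_0$ because of the truncation in $Z$.
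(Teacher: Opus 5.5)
Your observation that $e^{Z(s)}$ and $W(1)W(s)^{-1}$ are unitary channels is exactly what the paper's own proof adds. The paper invokes \cite[Theorem~2]{wang2026lindbladian} for the generators $-\i\ad_{H_v}$ and only checks that the extra factor, the norm of the truncated evolution, equals one. The rest of your proposal re-derives that cited theorem. Your Duhamel identity, the matching of Taylor coefficients of $G_Z$ and $G$ through degree $q_0-1$, and your bound on the degree-$\ge q_0$ part of $G_Z$ are all fine. After relabeling, each composition there carries weight at most $\frac{1}{d!}\frac{q_d}{\sum_i q_i}\prod_i q_i^{-2}\le\frac{1}{d!\,(q_0+1)}$ times $\alpha_{\mathrm{comm}}^{(q_1,\ldots,q_d)}$.

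\textbf{The gap is the Taylor remainder $R(s)$ of $G$.} Your preferred route in the last paragraph runs the wrong way. Expanding $G$ as the dexp series of the untruncated generator and subtracting it from $G_Z$ cancels every term with all $q_i\le q_0$, at every degree. What survives are exactly the compositions with some $q_i>q_0$, which involve $\Phi_q$ with $q>q_0$. Those terms are absent from the right-hand side and have $q!$-growing bounds, which is precisely what the lemma is meant to avoid.

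Your fallback, absorbing $R$ into the all-ones term with $d=q_0+1$, has the wrong weight. Write the remainder of $\prod_{u>v}e^{s\ad_{X_u}}X_v$, with $X_v=-\i\ad_{H_v}$, as in \cite{childs2021theory}. That is, expand from the innermost exponential outward, so that the unexpanded conjugations sit outermost and the inner commutators are bare; naive $q_0$-fold differentiation interleaves conjugations between the $\ad$'s. This gives $\int_0^1\|R(s)\|\,ds\le\frac{1}{q_0+1}\sum\frac{1}{\prod_j n_j!}\|\ad_{X_{u_1}}\cdots\ad_{X_{u_{q_0}}}X_v\|$, summed over $u_1\ge\cdots\ge u_{q_0}>v$, where the $n_j$ are the multiplicities. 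A tuple of distinct indices therefore carries weight $\frac{1}{q_0+1}$. The all-ones term supplies only $\frac{1}{(q_0+1)!}$ per tuple, and the norm of one ordering cannot in general be bounded by the average over orderings.

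\textbf{The fix.} Since $\|[A,B]\|=\|[B,A]\|$, you have $\alpha_{\mathrm{comm}}^{(1,q_0)}=\alpha_{\mathrm{comm}}^{(q_0,1)}=\alpha_{\mathrm{comm}}^{(1,\ldots,1)}$; all three equal the sum over $(q_0+1)$-tuples of right-nested commutators. For $q_0\ge2$, the two $d=2$ terms of the right-hand side therefore supply total weight $1$ on exactly the quantity that bounds $R$. The first part uses at most a fraction $\frac{1}{q_0+1}$ of each composition's budget. That leaves at least $\frac{q_0}{q_0+1}\ge\frac{1}{q_0+1}$ for $R$, which is enough.

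For $q_0=1$ this crude count fails. There, however, $Z(s)=s\Phi_1$ gives $\ad_Z Z'=0$, so the first part vanishes identically. Then $\int_0^1\|R\|\,ds\le\frac12\alpha_{\mathrm{comm}}^{(1,1)}$, which matches the $d=2$ term.

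With this accounting, your Duhamel route becomes a self-contained proof of what the paper obtains by citation.
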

\begin{proof}
    Apply \cite[Theorem~2]{wang2026lindbladian} to the generators
    $\mc L_v=-\i\ad_{H_v}$. That theorem bounds the error by the
    commutator tail in \eq{def-beta-comm-sum} multiplied by the norm of the
    truncated evolution. In the present Hamiltonian setting, this norm is one.

    Indeed, Hamiltonian Liouvillians are closed under commutators: for Hermitian
    $A$ and $B$,
    \[
        [-\i\ad_A,-\i\ad_B]
        =
        -\i\ad_{-\i[A,B]},
    \]
    and $-\i[A,B]$ is Hermitian. Hence the truncated logarithm in the
    application of that theorem has the form
    $-\i\ad_{G_{(q_0)}(\tau)}$ for some Hermitian
    $G_{(q_0)}(\tau)$. Its exponential and the exact evolution are unitary
    channels, so the additional norm factor equals $1$.
\end{proof}
The following lemma provides an estimate of the doubly right-nested commutator bound for local operators. 
\begin{lemma}[{\cite[Theorem~9]{wang2026lindbladian}}]
\label{lem:doubly-nested-bound}
    Let $h_1, \ldots, h_{M}$ be operators on \(N\) sites, each supported on at most $k$ sites. Suppose that $\{h_v\}_{v=1}^M$ is $\tilde g$-extensive. For a sequence of positive integers $q_1, \ldots, q_{d}$, define the cumulative counts $P_{r} = \sum_{j=r}^{d} q_j$ for $r \in [d]$ and $P_{d+1}=1$. Then, the following bound holds:
    \begin{equation}
    \label{eq:double-comm-k-local-N}
    \begin{multlined}
    \sum_{v_{1,1},\ldots,v_{d,q_d}=1}^M
    \big\|
    [
        [h_{v_{1,1}},\ldots,h_{v_{1,q_1}}],\ldots,\\
        [h_{v_{d,1}},\ldots,h_{v_{d,q_d}}]
    ]
    \big\|
    \le
    \frac{1}{2kq_d}
    \biggl(\prod_{r=1}^{d} P_{r+1} q_r! (2k\tilde g)^{q_r}\biggr)N .
    \end{multlined}
    \end{equation}
\end{lemma}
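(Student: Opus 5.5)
The plan is to prove \eq{double-comm-k-local-N} by a light-cone counting argument. The doubly nested commutator is built from the innermost block outward, and at each stage only the terms whose support overlaps the support already accumulated are kept. Two facts drive everything. First, $[A,B]=0$ whenever $\supp(A)\cap\supp(B)=\emptyset$. Second, $\|[A,B]\|\le2\|A\|\|B\|$ and $\supp([A,B])\subseteq\supp(A)\cup\supp(B)$. The $\tilde g$-extensiveness then bounds, for any site set $S$, the total norm of the terms touching $S$ by $|S|\tilde g$.

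\emph{Step 1 (single block, $d=1$).} For a right-nested commutator $[h_{v_1},\ldots,h_{v_q}]$, the innermost term $h_{v_q}$ is summed freely. This costs $\sum_v\|h_v\|\le\tilde gN$, since every term touches at least one site. Suppose $j$ terms have been placed; the current operator is then supported on at most $kj$ sites. The next term must touch this set, so its summed norm is at most $kj\tilde g$, and the commutator contributes a factor $2$. Multiplying over $j=1,\ldots,q-1$ gives
\[
\tilde gN\prod_{j=1}^{q-1}2kj\tilde g=\frac{(q-1)!(2k\tilde g)^qN}{2k},
\]
which equals the right-hand side of \eq{double-comm-k-local-N} for $d=1$, because $P_2=1$.

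\emph{Step 2 (anchored blocks).} I need a version of Step 1 in which the block is required to touch a prescribed site $j$. The target is that the sum of $\|[h_{v_1},\ldots,h_{v_q}]\|$ over tuples whose accumulated support contains $j$ is at most $q!(2k\tilde g)^q/(2k)$ up to the constant absorbed below. To get this, I would split according to which term $h_{v_i}$ first contains $j$ and rerun the Step 1 growth. At the moment term $i$ is inserted, the factor $2ki\tilde g$ (or $\tilde gN$ when $i=q$) is replaced by the anchored weight $\tilde g$. Summing over the $q$ possible positions should produce the extra factor $q$, turning $(q-1)!$ into $q!$ and removing the factor $N$. I expect this to be the main obstacle: making the positional sum close to exactly $q!$, rather than picking up harmonic-type factors, may require ordering the insertions so that the anchored term is always treated as the innermost one. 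That reordering is legitimate for counting supports, since the accumulated support is the union of all supports regardless of nesting order.

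\emph{Step 3 (induction over blocks).} I would then handle the outer right-nested commutator of blocks $C_1,\ldots,C_d$, with $C_r=[h_{v_{r,1}},\ldots,h_{v_{r,q_r}}]$, from the inside out. The innermost block $C_d$ is bounded by Step 1, giving $\frac{1}{2kq_d}q_d!(2k\tilde g)^{q_d}N$. Once blocks $C_{r+1},\ldots,C_d$ are placed, the accumulated support has at most $kP_{r+1}$ sites. Block $C_r$ must touch this set, so by Step 2 summed over its sites, together with the commutator factor $2$, it contributes at most
\[
2\cdot kP_{r+1}\cdot\frac{q_r!(2k\tilde g)^{q_r}}{2k}=P_{r+1}q_r!(2k\tilde g)^{q_r}.
\]
Multiplying over $r=1,\ldots,d-1$ and using $P_{d+1}=1$ yields exactly \eq{double-comm-k-local-N}.
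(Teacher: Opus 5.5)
Your Steps 1 and 3 are correct. The architecture reproduces the right-hand side exactly, provided the anchored bound
\[
a_q:=\max_{j\in[N]}\sum_{\vec v:\,j\in\bigcup_{l}\supp(h_{v_l})}\bigl\|[h_{v_1},\ldots,h_{v_q}]\bigr\|\le\frac{q!(2k\tilde g)^q}{2k}
\]
holds with no extra constant. Step 3 uses exactly this value, so nothing can be ``absorbed.''

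The gap is that Step 2 never proves this bound, and the mechanism you propose does not work. The growth argument of Step 1 relies on the vanishing condition fixed by the nesting order: $h_{v_l}$ must overlap $\bigcup_{l'>l}\supp(h_{v_{l'}})$. If the anchor $h_{v_i}$ with $i<q$ is inserted first, no fixed order of the remaining terms guarantees that each one touches the current union. Take $[h_{v_1},[h_{v_2},h_{v_3}]]$ anchored at $h_{v_1}$:
\begin{itemize}
\item The tuple $h_{v_3}=Z_1Z_2$, $h_{v_2}=X_2X_3$, $h_{v_1}=Z_3Z_4$ is nonvanishing, but inserting $h_{v_3}$ right after $h_{v_1}$ violates the overlap condition.
\item The tuple $h_{v_3}=Z_2Z_3$, $h_{v_2}=X_3X_4$, $h_{v_1}=X_1X_2$ is nonvanishing, but inserting $h_{v_2}$ right after $h_{v_1}$ violates it.
\end{itemize}
That the final union of supports is order-independent does not help. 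Each factor $2k\ell\tilde g$ requires the newly inserted term to touch the \emph{current} union. Tuple-dependent orders would reintroduce combinatorial overcounting. Simply replacing one insertion factor by $\tilde g$ leaves the free factor $\tilde gN$ of $h_{v_q}$ whenever the anchor is not innermost.

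The missing idea is a strong induction on $q$ for $a_q$ itself. The base case $a_1\le\tilde g$ is just extensiveness. Split by a position $i$ with $j\in\supp(h_{v_i})$; a union bound suffices.
\begin{itemize}
\item For $i=q$, Step 1 with the innermost sum anchored gives $(q-1)!(2k)^{q-1}\tilde g^q$.
\item For $i<q$, sum $h_{v_i}$ over terms containing $j$, which costs weight $\tilde g$. The inner block $[h_{v_{i+1}},\ldots,h_{v_q}]$ must have its union of term supports meet one of the at most $k$ sites of $\supp(h_{v_i})$. It therefore contributes at most $2k\,a_{q-i}$, including the commutator factor $2$. The outer terms $h_{v_{i-1}},\ldots,h_{v_1}$ then grow as in Step 1, starting from a union of at most $k(q-i+1)$ sites.
\end{itemize}
By the induction hypothesis,
\[
\tilde g\cdot 2k\,a_{q-i}\prod_{l=q-i+1}^{q-1}2lk\tilde g\le(q-1)!(2k)^{q-1}\tilde g^{q},
\]
which equals the $i=q$ contribution. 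Summing over the $q$ positions gives $a_q\le q!(2k)^{q-1}\tilde g^q=q!(2k\tilde g)^q/(2k)$, with no harmonic losses.

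Your intuition that every anchor position contributes as if it were innermost is therefore right. It comes from this recursion, not from reordering. With the recursion in place, Steps 1 and 3 complete the proof as you wrote them.
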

Combining \lem{general-BCH-truncation} and \lem{doubly-nested-bound} gives the following upper bound for the BCH truncation error. 
\begin{lemma}
\label{lem:truncation-BCH}
    Let $H = \sum_{\ell=1}^L H_\ell$ be the $k$-local and $g_0$-extensive Hamiltonian defined in \eq{def-H} and partitioned as in \eq{local-decompose-H-l}. Let $\tilde g = (a_{\max}\kappa_K+1)g_0x$. The BCH expansion of $\mc V_K(x)$ defined in \eq{def-VK-BCH}, truncated after order $q_0$, satisfies
    \begin{align}
    \label{eq:truncate-error}
    \biggl\|\mc V_K(x)-\exp\biggl(\sum_{q= K+1}^{q_0}\Phi_q(x)\biggr)\biggr\| \le 2(4eq_0k\tilde g)^{q_0+1}N,
    \end{align}
    provided that $8eq_0k\tilde g \le 1$.
\end{lemma}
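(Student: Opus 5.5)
The plan is to apply \lem{general-BCH-truncation} to the $M=\kappa_KL+1$ Hamiltonians $\tilde H_1,\ldots,\tilde H_M$ from \eq{def-tilde-H}. Then to bound each doubly right-nested commutator quantity with \lem{doubly-nested-bound}, using the local decomposition \eq{decompose-tilde-H} and the $\tilde g$-extensiveness already established in \eq{extensive-tilde-H}. Finally, the resulting combinatorial sum over $(d,q_1,\ldots,q_d)$ is summed into a geometric series in $4eq_0k\tilde g$.

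First, since $\mc S_K$ is $K$-th order, $\Phi_1(x)=\cdots=\Phi_K(x)=0$. Hence $\exp(\sum_{q=K+1}^{q_0}\Phi_q(x))=\exp(\sum_{q=1}^{q_0}\Phi_q(x))$, and \lem{general-BCH-truncation} applies directly with $H_v=\tilde H_v$. Next, I reduce each $\alpha_{\mathrm{comm}}^{(q_1,\ldots,q_d)}$ to operator commutators. By multilinearity, each $\ad_{\tilde H_v}$ is expanded as $\sum_\gamma\ad_{\tilde h_{v,\gamma}}$ and the triangle inequality is applied. Using $[\ad_A,\ad_B]=\ad_{[A,B]}$ repeatedly, the doubly nested superoperator commutator equals $\ad$ of the corresponding doubly nested operator commutator. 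Since $\|\ad_X\|\le2\|X\|$, this gives
\[
\alpha_{\mathrm{comm}}^{(q_1,\ldots,q_d)}\le 2\sum\big\|[[\tilde h,\ldots],\ldots,[\tilde h,\ldots]]\big\|
\le \frac{N}{kq_d}\prod_{r=1}^d P_{r+1}\,q_r!\,(2k\tilde g)^{q_r},
\]
by \lem{doubly-nested-bound}, which applies because $\{\tilde h_{v,\gamma}\}$ is $k$-local and $\tilde g$-extensive.

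Then I do the combinatorics. Write $s=q_1+\cdots+q_d\ge q_0+1$. Since every $q_r\le q_0$, the constraint forces $d\ge2$. I use three crude bounds: $q_r!\le q_0^{q_r}$, so $\prod_r q_r!\le q_0^s$; $\prod_{r=1}^dP_{r+1}\le s^{d-1}$; and $1/(kq_d)\le1$. This gives $\frac{1}{d!}\alpha_{\mathrm{comm}}^{(q_1,\ldots,q_d)}\le N\frac{s^{d-1}}{d!}(2q_0k\tilde g)^s$. Next, $s^{d-1}/d!\le s^d/d!\le e^s$. The number of compositions of $s$ into $d$ parts is $\binom{s-1}{d-1}$, and summing over $d$ gives at most $2^{s-1}\le 2^s$. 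Collecting terms, the right-hand side of \eq{def-beta-comm-sum} is at most $N\sum_{s\ge q_0+1}(4eq_0k\tilde g)^s$. Under $8eq_0k\tilde g\le1$ the ratio is at most $1/2$, so the geometric tail is at most $2N(4eq_0k\tilde g)^{q_0+1}$, which is \eq{truncate-error}.

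The main obstacle is the combinatorial step. The factors $P_{r+1}$, the $q_r!$, and the $1/d!$ must all be absorbed into a single exponential base $4eq_0$ without losing a factorial in $s$. The key is that the truncation condition $q_r\le q_0$ lets me replace each $q_r!$ by $q_0^{q_r}$. This is exactly why the bound carries $q_0$ inside the base rather than a factorial, and it is where care is needed to get constants matching $4e$.
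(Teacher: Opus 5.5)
Your proposal is correct and follows essentially the same route as the paper's proof. Both arguments apply \lem{general-BCH-truncation} to the $\tilde H_v$, pass to operator commutators via $\|\ad_A\|\le 2\|A\|$ and the Pauli decomposition, invoke \lem{doubly-nested-bound}, and then combine $\prod_r q_r!\le q_0^{s}$, an $e^{s}$ bound and $2^{s-1}$ composition counting into a geometric tail with ratio $4eq_0k\tilde g\le 1/2$. Your bound $\prod_r P_{r+1}\le s^{d-1}$ is marginally sharper than the paper's $q^d$, but the final constants coincide.
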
 
\begin{proof}
    Since $\Phi_1(x)=\cdots=\Phi_K(x)=0$, we may write the truncated
    exponent as $\sum_{q=1}^{q_0}\Phi_q(x)$. By
    \lem{general-BCH-truncation},
    \begin{align}
    \label{eq:BCH-error-local}
        \big\|\tilde{\mc V}_K^{(q_0)}(x) - \mc V_K(x)\big\|&=\biggl\|\exp\biggl(\sum_{q=1}^{q_0} \Phi_q(x)\biggr) - \mc V_K(x)\biggr\|\nonumber\\
        &\le  \sum_{d=1}^{\infty}\frac{1}{d!} \sum_{\substack{1\le q_1, \ldots, q_{d}\le q_0\\q_1+\cdots+q_{d} \ge q_0+1}}\alpha_{\mathrm{comm}}^{(q_1, \ldots, q_{d})}(x),
    \end{align}
    where $\alpha_{\mathrm{comm}}^{(q_1, \ldots, q_{d})}(x)$ is the doubly right-nested commutator bound for the Hamiltonians $\{\tilde{H}_v\}_v$. Let $q=\sum_{j=1}^{d}q_j$. Then
    \begin{widetext}
    \begin{align*}
    \alpha_{\mathrm{comm}}^{(q_1,\ldots,q_d)}(x)
    ={}&
    \sum_{v_{1,1},\ldots,v_{d,q_d}=1}^{M}
    \Big\|
    \Big[
        [\ad_{\tilde H_{v_{1,1}}},\ldots,\ad_{\tilde H_{v_{1,q_1}}}],\ldots,
        [\ad_{\tilde H_{v_{d,1}}},\ldots,\ad_{\tilde H_{v_{d,q_d}}}]
    \Big]
    \Big\| \\
    \le{}&
    2\sum_{v_{1,1},\ldots,v_{d,q_d}=1}^{M}
    \Big\|
    \Big[
        [\tilde H_{v_{1,1}},\ldots,\tilde H_{v_{1,q_1}}],\ldots,
        [\tilde H_{v_{d,1}},\ldots,\tilde H_{v_{d,q_d}}]
    \Big]
    \Big\| \\
    \le{}&
    2\sum_{v_{1,1},\ldots,v_{d,q_d}=1}^{M}
    \sum_{\gamma_{1,1},\ldots,\gamma_{d,q_d}}
    \Big\|
    \Big[
        [\tilde h_{v_{1,1},\gamma_{1,1}},\ldots,
         \tilde h_{v_{1,q_1},\gamma_{1,q_1}}],\ldots,
        [\tilde h_{v_{d,1},\gamma_{d,1}},\ldots,
         \tilde h_{v_{d,q_d},\gamma_{d,q_d}}]
    \Big]
    \Big\| \\
    \le{}& q^d\biggl(\prod_{r=1}^{d}q_r!\biggr)(2k\tilde g)^qN .
    \end{align*}
    \end{widetext}
    The first inequality uses $\|\ad_A\|\le 2\|A\|$, and the second
    uses the triangle inequality over the Pauli decompositions. For the last
    inequality, \lem{doubly-nested-bound} applies to the $\tilde g$-extensive
    collection $\{\tilde h_{v,\gamma}\}_{v,\gamma}$. Since
    $P_{r+1}\le q$ and $kq_d\ge1$, its prefactor is at most $q^d$.
Substituting this bound for
$\alpha_{\mathrm{comm}}^{(q_1,\ldots,q_d)}(x)$ into
\eq{BCH-error-local} and grouping the terms by
$q=\sum_{r=1}^d q_r$ yields
\begin{align*}
    &\big\|\tilde{\mc V}_K^{(q_0)}(x)-\mc V_K(x)\big\|\\
    \le{}&
    \sum_{q=q_0+1}^{\infty}
    \sum_{d=1}^{q}\frac{q^d}{d!}
    \sum_{\substack{
        1\le q_1,\ldots,q_d\le q_0\\
        q_1+\cdots+q_d=q}}
    \biggl(\prod_{r=1}^d q_r!\biggr)(2k\tilde g)^qN\\
    \le{}&
    \sum_{q=q_0+1}^{\infty}
    \sum_{d=1}^{q}\frac{q^d}{d!}
    \binom{q-1}{d-1}
    q_0^q(2k\tilde g)^qN\\
    \le{}&
    \sum_{q=q_0+1}^{\infty}
    e^q(4q_0k\tilde g)^qN\\
    \le{}&
    (4eq_0k\tilde g)^{q_0+1}N
    \sum_{j=0}^{\infty}2^{-j}\\
    \le{}&
    2(4eq_0k\tilde g)^{q_0+1}N.
\end{align*}
Here we used
$\prod_{r=1}^d q_r!\le q_0^q$,
$\sum_{d=1}^{q}q^d/d!\le e^q$, and
\begin{align*}
    \sum_{\substack{
        1\le q_1,\ldots,q_d\le q_0\\
        q_1+\cdots+q_d=q}}1
    \le
    \binom{q-1}{d-1}
    \le
    2^{q-1}.
\end{align*}
\end{proof}

\subsection{One-step compensation and total simulation}

After truncating the BCH expansion, we write
\begin{align*}
    \tilde{\mc V}_{K}^{(q_0)}(x)
    =
    \biggl[
    \exp\biggl(
        \frac{1}{m}\sum_{q=K+1}^{q_0}\Phi_q(x)
    \biggr)
    \biggr]^m
\end{align*}
for a positive integer $m$. Define
\begin{equation}
\label{eq:lambda-single}
\begin{aligned}
    \lambda_{\mathrm{single}}
    &={}
    \frac{1}{m}\sum_{q=K+1}^{q_0}\lambda_q,\quad
    \lambda_{\mathrm{multi}}
    ={}
    \sum_{j=2}^{\infty}
    \frac{\lambda_{\mathrm{single}}^j}{j!},\\
    \lambda_{\mathrm{paired}}
    &={}
    1+\frac{\lambda_{\mathrm{single}}^2}{2}
    +\lambda_{\mathrm{multi}},\\
    \lambda_{\mathrm{unpaired}}
    &={}
    1+\lambda_{\mathrm{single}}+\lambda_{\mathrm{multi}}
    =
    \sum_{j=0}^{\infty}\frac{\lambda_{\mathrm{single}}^j}{j!}
    =
    e^{\lambda_{\mathrm{single}}}.
\end{aligned}
\end{equation}
The next two theorems give paired and unpaired LCQC samplers for the
fractional Trotter remainder
\begin{align}
\label{eq:fractional-remainder}
    \exp\biggl(
        \frac{1}{m}\sum_{q=K+1}^{q_0}\Phi_q(x)
    \biggr).
\end{align}
Their LCQC $1$-norms are $\lambda_{\mathrm{paired}}$ and
$\lambda_{\mathrm{unpaired}}$, respectively. We apply
the chosen sampler independently $m$ times after every product-formula step.
Throughout the theorem statements below, $K$ is fixed, and implicit constants
may depend on $K$.
\begin{theorem}
\label{thm:hncc}
    Let $H$ be a $k$-local and $g_0$-extensive Hamiltonian on $N$ qubits,
    let $q_0\ge K+1$, and let $m\ge1$. Let $\mc S_K$ be the
    $K$-th order product formula defined above.
    \algo{HNCC} samples $\eta\in\{-1,1\}$,
    $\mc N\in\CPTP_N\cup\{0\}$, and a nonnegative integer $s$ such that
    \begin{align*}
        \E[\eta\mc N]
        =
        \frac{1}{\lambda_{\mathrm{paired}}}
        \exp\biggl(
            \frac{1}{m}\sum_{q=K+1}^{q_0}\Phi_q(x)
        \biggr).
    \end{align*}
    Every nonzero output $\mc N$ can be implemented using
    $\mathcal O(kq_0)$ gates if $s=0$ and $\mathcal O(ks)$ gates if $s>0$.
    If $\lambda_{\mathrm{single}}$ is bounded by a constant, one call to
    \algo{HNCC} has expected
    classical cost $\poly(N,\Gamma,q_0)$.
\end{theorem}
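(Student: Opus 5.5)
The proof describes the sampler (\algo{HNCC}) explicitly and checks three things: unbiasedness with normalization $\lambda_{\mathrm{paired}}$, the gate count, and the expected classical cost. We start from the expansion \eq{one-factor-expansion}, writing the fractional remainder as $\mc I+G+\sum_{j\ge2}G^j/j!$ with $G=\frac1m\sum_{q=K+1}^{q_0}\Phi_q(x)$.

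\textbf{Step 1: LCQC for a single $\Phi_q(x)$.} Substitute the Pauli decompositions \eq{decompose-tilde-H} into the explicit BCH formula \eq{BCH}. Using $\ad$-closure $[\ad_A,\ad_B]=\ad_{[A,B]}$, write
\[
\Phi_q(x)=\sum_\omega w_\omega\,\ad_{\i P_\omega},
\]
where each nonzero nested commutator of imaginary Pauli operators is $\pm2^{q-1}\prod_j|\tilde\alpha_{v_j,\gamma_j}|$ times some $\ad_{\i P}$. The absolute weights are dominated by the quantity already summed in \lem{bound-Phi-q}, so $\sum_\omega|w_\omega|\le\lambda_q$. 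Any slack can be padded with zero maps, so that the total weight equals $\lambda_q$ exactly.

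Sampling uses the light-cone procedure. First draw $q$ with probability $\lambda_q/\sum_{q'}\lambda_{q'}$. Then draw the multi-index $(p_v)$ and $\sigma$ with probability proportional to $|c_{\sigma,q}|/\prod p_v!$. Finally draw the Pauli indices sequentially, restricted to terms overlapping the current support. Any leftover probability mass, including tuples whose commutator vanishes through anticommutation, is output as the zero map. The resulting normalized $\ad_{\i P}$ is then turned into rotation channels by \eq{para-shift}.

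\textbf{Step 2: assembling the paired sampler.} With probability $\lambda_{\mathrm{multi}}/\lambda_{\mathrm{paired}}$ the sampler takes the multi branch. There it draws $j\ge2$ with probability $\propto\lambda_{\mathrm{single}}^j/j!$, draws $j$ independent single-term samples, composes their parameter-shift channels, sets $s=\sum_r q_r$, and takes $\eta$ as the product of the signs. Since $\E[\eta\mc N]$ factorizes over independent draws, the expectation is $G^j/j!$ up to the normalization.

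Otherwise (probability $(1+\lambda_{\mathrm{single}}^2/2)/\lambda_{\mathrm{paired}}$) it takes the paired branch and sets $s=0$. It draws one normalized term $\pm\ad_{\i P}$ from $G$ and applies the identity-pairing formula with $c=\pm\lambda_{\mathrm{single}}/2$. Averaging over $P$, we must verify that
\[
\mathbb E_P\bigl[(1+c^2)\mc U_{e^{\theta \i P}}-c^2\mc U_{\i P}\bigr]=\mc I+G.
\]
This is where the identity pairing needs care. Expanding $\mc U_{e^{\theta\i P}}$ as in the summary, the $\mc I$ and $\mc U_{\i P}$ pieces cancel up to $\mc I$ for every $P$, and the $\ad$ piece averages to $c\,\mathbb E[\pm\ad_{\i P}]$. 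This is consistent only if $|c|\cdot$(per-sample normalization) reproduces $G$, so the constant must be fixed carefully. The key check is that the zero-map padding is handled compatibly. We would implement zero samples as the identity channel with $c=0$, which still gives $\mc I$ and keeps the weight $1+c^2+c^2$ constant. If the summary's choice $|c|=\lambda_{\mathrm{single}}/2$ is paired with a single-term LCQC of norm $2\lambda_{\mathrm{single}}$ (from the parameter-shift factor of $2$), the $1$-norm bookkeeping matches $1+\lambda_{\mathrm{single}}^2/2$. I expect this normalization consistency to be the main obstacle.

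\textbf{Step 3: cost bounds.} For gates, a $q$-fold nested commutator of $k$-local Paulis has support at most $kq$. A rotation $e^{\i\theta P}$ on $w$ qubits uses $\mathcal O(w)$ gates (basis change plus a CNOT ladder). This gives $\mathcal O(kq_0)$ gates when $s=0$ and $\mathcal O(ks)$ when $s>0$. For classical cost, each sequential light-cone step scans the $\mathcal O(\Gamma)$ terms overlapping a support of size at most $kq_0$, and the Pauli product updates cost $\poly(N)$. Sampling $\sigma$ according to $|c_{\sigma,q}|$ reduces to sampling a descent count $d$ (Eulerian numbers) followed by a uniform permutation with $d$ descents, both computable in $\poly(q_0)$. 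The number $j$ of factors in the multi branch has a truncated-Poisson distribution with parameter $\lambda_{\mathrm{single}}=\mathcal O(1)$, so its expectation is $\mathcal O(1)$. The total expected cost is therefore $\poly(N,\Gamma,q_0)$.
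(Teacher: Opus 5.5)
Your overall architecture matches the paper's: a parameter-shift LCQC for each $\Phi_q$, a paired branch returning $s=0$, a multi branch returning $s=\sum_r q_r$, and the same gate and classical-cost counts. The gap is the exact normalization, which is the whole content of $\E[\eta\mc N]=\lambda_{\mathrm{paired}}^{-1}\exp(G)$. Your constants do not close, and the reconciliation you sketch runs the wrong way.

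The correct bound is $\sum_\omega|w_\omega|\le\lambda_q/2$ in the basis $\{\ad_{\i P}\}$, not $\lambda_q$. Using $q^2|c_{\sigma,q}|\le1$ and the bound $\sum\|[\tilde h_{v_q,\gamma_q},\ldots,\tilde h_{v_1,\gamma_1}]\|\le q!(2k\tilde g)^qN/(2k)$ from the proof of \lem{bound-Phi-q}, the sum is at most $q!(2k\tilde g)^qN/(2kq^2)=\lambda_q/2$. The lemma reaches $\lambda_q$ only after paying $\|\ad_{\i P}\|_\diamond=2$.

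You must therefore pad to exactly $\lambda_q/2$, so that the parameter-shift LCQC of $\Phi_q$ has $1$-norm exactly $\lambda_q$. This is the paper's identity $\E[\tilde\eta\,\tfrac12\ad_C]=\Phi_q/\lambda_q$ for \textsf{BCHSample}. Then $G$ has adjoint-Pauli weight $\lambda_{\mathrm{single}}/2$, so $|c|=\lambda_{\mathrm{single}}/2$ reproduces $\mc I+G$ at cost $1+\lambda_{\mathrm{single}}^2/2$. A $j$-fold product then has $1$-norm $\lambda_{\mathrm{single}}^j/j!$, and these sum to $\lambda_{\mathrm{multi}}$.

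With your padding to $\lambda_q$, neither branch works:
\begin{itemize}
\item The paired branch with $|c|=\lambda_{\mathrm{single}}/2$ produces $\mc I+G/2$. Getting $\mc I+G$ would need $|c|=\lambda_{\mathrm{single}}$ and $1$-norm $1+2\lambda_{\mathrm{single}}^2$.
\item The $j$-fold products carry weight $(2\lambda_{\mathrm{single}})^j/j!$. This is inconsistent with branching probability $\lambda_{\mathrm{multi}}/\lambda_{\mathrm{paired}}$ and $j\propto\lambda_{\mathrm{single}}^j/j!$.
\end{itemize}
Relatedly, a padded zero sample in the paired branch must carry the same weight $1+2c^2$ as the nonzero samples. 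The paper returns $\mc I$ with probability $\cos^2\theta/(1+\sin^2\theta)$ and the zero map otherwise. Returning $\mc I$ with weight $1$ (your ``$c=0$'') biases the mixture.

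Second, the single-term sampler you describe does not attain this $1$-norm. You draw $(\vec p,\sigma)$ with probability proportional to $|c_{\sigma,q}|/\prod_vp_v!$ before any Pauli index. This fixes the stage sequence $\vec v$ with a marginal that ignores the Hamiltonian. The target marginal is proportional to $\sum_{\vec\gamma}\|C_{\vec v,\vec\gamma}\|$ times the average of $|c_{\sigma,q}|$ over $\Sigma_{\vec v}$.

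Making the subsequent rejection step unbiased therefore forces a single normalization valid for every $\vec v$. That normalization is set by the worst stage sequence and multiplied by a sum over all $M^q$ sequences, rather than by the pooled $\tilde g$-extensiveness that yields $\lambda_q$. Nothing in your argument bounds it by $\lambda_q$, and the natural uniform light-cone bound does not.

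The paper reverses the order. \textsf{LCSample} draws the pairs $(v_j,\gamma_j)$ jointly from the pooled list $\{\tilde h_{v,\gamma}\}$, with acceptance $W_{\mathrm{tot}}/(N\tilde g)$ and then $W_{\mathcal A}/(jk\tilde g)$. This gives each tuple probability exactly $\|C_{\vec v,\vec\gamma}\|/(q!(2k\tilde g)^{q-1}N\tilde g)$. Only afterwards is $\sigma$ drawn uniformly from $\Sigma_{\vec v}$ and kept with probability $q^2|c_{\sigma,q}|$, with rejections returned as zero maps. This also removes the need for your Eulerian-number step, since uniform sampling from $\Sigma_{\vec v}$ is immediate.
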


\begin{theorem}[Unpaired one-step compensation]
\label{thm:unpaired-hncc}
    Under the assumptions of \thm{hncc}, unpaired HNCC samples
    $\eta\in\{-1,1\}$, $\mc N\in\CPTP_N\cup\{0\}$, and a nonnegative
    integer $s$ such that
    \begin{align*}
        \E[\eta\mc N]
        =
        \frac{1}{\lambda_{\mathrm{unpaired}}}
        \exp\biggl(
            \frac{1}{m}\sum_{q=K+1}^{q_0}\Phi_q(x)
        \biggr).
    \end{align*}
    The identity output has $s=0$ and requires no gates. Every other nonzero
    output uses $\mathcal O(ks)$ Clifford gates. If
    $\lambda_{\mathrm{single}}$ is bounded by a constant, one sample has
    expected classical cost
    $\poly(N,\Gamma,q_0)$.
\end{theorem}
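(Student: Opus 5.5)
The statement to prove is \thm{unpaired-hncc}. The plan is to build the unpaired sampler explicitly from the Taylor expansion \eq{one-factor-expansion} and the light-cone LCQC for each $\Phi_q(x)$. Then check, term by term, that the expectation reproduces the fractional remainder \eq{fractional-remainder} divided by $\lambda_{\mathrm{unpaired}}=e^{\lambda_{\mathrm{single}}}$.

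\textbf{Step 1 (LCQC for a single BCH term).} For each $q\in\{K+1,\ldots,q_0\}$, I will write $\Phi_q(x)$ via \eq{BCH} with generators $-\i\ad_{\tilde H_v}$, expanded in the Pauli decompositions \eq{decompose-tilde-H}. Using $[\ad_A,\ad_B]=\ad_{[A,B]}$, every nonzero right-nested term collapses to $\pm 2^{q-1}(\prod_j|\tilde\alpha_{v_j,\gamma_j}|)\ad_{\i P}$. The parameter-shift identity \eq{para-shift} then turns it into a signed difference of two $\pi/4$ Pauli rotations.

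The light-cone sampler draws each index sequentially from terms overlapping the current support union. It draws the permutation $\sigma$ and the multiplicities $p_v$ with weights $|c_{\sigma,q}|/\prod p_v!$. The result is a sampler $(\eta_q,\mc N_q)$ with $\E[\eta_q\mc N_q]=\Phi_q(x)/\lambda_q$.

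Two points need verifying here.
\begin{itemize}
\item Whenever the sampler rejects (an empty overlap set, or a vanishing commutator), we output the zero map. Because of the sequential restriction, the total normalization is at most the count used in \lem{bound-Phi-q}. Padding with the zero map makes it exactly $\lambda_q$, which is allowed since $\mc N$ may be $0$.
\item The generator $P$ has support at most $kq$, so each rotation costs $\mathcal O(kq)$ Clifford gates (basis changes plus a CNOT ladder with an $S$-type phase).
\end{itemize}

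\textbf{Step 2 (assembling the Taylor series).} Define a distribution on $j\ge0$ with $\Pr[j]=\lambda_{\mathrm{single}}^j/(j!\,e^{\lambda_{\mathrm{single}}})$, which is Poisson. Given $j$, draw $q_1,\ldots,q_j$ i.i.d. with $\Pr[q]=\lambda_q/(m\lambda_{\mathrm{single}})$, and sample $(\eta_{q_r},\mc N_{q_r})$ independently for each $r$. Output $\eta=\prod_r\eta_{q_r}$, $\mc N=\mc N_{q_j}\cdots\mc N_{q_1}$, and $s=\sum_r q_r$, with $j=0$ giving the identity and $s=0$.

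Independence lets the expectation factor as
\[
\sum_j \frac{\lambda_{\mathrm{single}}^j}{j!\,e^{\lambda_{\mathrm{single}}}}\Bigl(\frac{1}{\lambda_{\mathrm{single}}}\sum_q\frac{\lambda_q}{m}\frac{\Phi_q}{\lambda_q}\Bigr)^j=e^{-\lambda_{\mathrm{single}}}\exp\Bigl(\frac1m\sum_q\Phi_q\Bigr).
\]
Each $\Phi_q$ is a bounded superoperator, so the exponential series converges absolutely and the interchange is justified. A composition of channels is a channel, and any zero factor gives the zero map. The gate count is $\sum_r\mathcal O(kq_r)=\mathcal O(ks)$, all Clifford.

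\textbf{Step 3 (classical cost).} This is the step I expect to be the main obstacle. The expected number of factors is $\lambda_{\mathrm{single}}=\mathcal O(1)$, so the expected cost is $\mathcal O(1)$ times the cost of one $\Phi_q$ sample. Bounding that cost requires sampling the BCH combinatorial data efficiently.
\begin{itemize}
\item The multiplicities $(p_v)$ should be sampled as an ordered word over the $M$ generators. The weight $1/\prod p_v!$ combined with the $q!/\prod p_v!$ orderings has to be reconciled with the i.i.d. index picture.
\item Sampling $\sigma$ with weight proportional to $1/\binom{q-1}{d_\sigma}$ needs a polynomial method. I would first sample the descent count $d$ using Eulerian numbers $A(q,d)$, which can be computed in $\poly(q)$ by their recurrence. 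Then I would draw a uniformly random permutation with exactly $d$ descents, using the standard recursive insertion sampler built on the same recurrence.
\end{itemize}
Each light-cone step scans at most $\Gamma M=\poly(N,\Gamma)$ terms, and the nested Pauli product is tracked in $\mathcal O(Nq)$ time. The total is therefore $\poly(N,\Gamma,q_0)$.

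The remaining delicate point is that the light-cone weight must match the bound in \lem{bound-Phi-q} exactly, so that padding with zeros is legitimate. I would redo the counting of \cite[Corollary~2]{wang2026lindbladian} sequentially to confirm the sampler's mass never exceeds $\lambda_q$.
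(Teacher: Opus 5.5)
Your Steps 1 and 2, and the reduction of the expected cost to $\E[j]=\lambda_{\mathrm{single}}$ BCH samples, coincide with the paper's proof. The paper draws $j$ with probability $\lambda_{\mathrm{single}}^j/(j!\,\lambda_{\mathrm{unpaired}})$ and returns $(1,\mc I,0)$ for $j=0$. It draws $q_1,\ldots,q_j$ i.i.d.\ with probability $\lambda_q/(m\lambda_{\mathrm{single}})$, samples each $\Phi_{q_r}(x)$ independently through the parameter-shift rule with $\pm\pi/4$ rotations, and sets $s=\sum_r q_r$. The per-term facts you re-derive in Step 1 are \lem{BCH-sample} and \lem{lcsample}.

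The gap is in Step~3, which you flag as the main obstacle: your Eulerian-number sampler does not fit the light-cone construction of Step~1. In Step~1 the light cone outputs the ordered word $\vec v$, hence $\vec p$, together with $\vec\gamma$. By the reindexing \eq{BCH-term}, the permutation must then lie in $\Sigma_{\vec v}$ of \eq{def-sigma}, with signed weight $c_{\sigma,q}/|\Sigma_{\vec v}|$, where $|\Sigma_{\vec v}|=\prod_v p_v!$. Suppose you draw $\sigma$ from all of $S_q$ with probability $|c_{\sigma,q}|/Z$, where $Z=\sum_{\sigma\in S_q}|c_{\sigma,q}|$, and discard $\sigma\notin\Sigma_{\vec v}$. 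This gives weight $c_{\sigma,q}/Z$ instead. Since $\prod_v p_v!$ varies with $\vec v$, the expectation is no longer proportional to $\Phi_q(x)$. Removing the mismatch by a further rejection step inflates the $1$-norm, by $q^2Z\ge q!/2^{q-1}$ when $M\ge q$. If instead you fix the word first (multinomial $\vec p$ plus Eulerian $\sigma$) and then run the light cone group by group, you get a different sampler. Its normalization involves per-stage extensiveness constants rather than $\tilde g$, so it would need its own comparison with $\lambda_q$. Either way, the ``reconciliation'' of $1/\prod_v p_v!$ that you leave open is needed for the expectation identity itself, not only for the running time.

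The missing step is elementary:
\begin{enumerate}
\item Read $\vec p$ off the light-cone word.
\item Draw $\sigma$ uniformly from $\Sigma_{\vec v}$, i.e.\ an independent uniformly random bijection within each block $B_v$, which takes polynomial time.
\item Set $\tilde\eta=\sgn(c_{\sigma,q})(-1)^q$.
\item Accept with probability $q^2|c_{\sigma,q}|\le1$, which is valid by \eq{def-c-sigma}, and return the zero map otherwise.
\end{enumerate}
This realizes the weight $|c_{\sigma,q}|/\prod_v p_v!$ exactly, with no Eulerian numbers.

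Your ``delicate point'' is also settled by construction, without redoing the counting behind \lem{bound-Phi-q}. Accept the first index with probability $W_{\mathrm{tot}}/(N\tilde g)$ and the $r$-th index with probability $W_{\mc A}/(rk\tilde g)$. Both are valid because $W_{\mathrm{tot}}\le N\tilde g$ and $W_{\mc A}\le(k+(r-2)(k-1))\tilde g\le rk\tilde g$. Every index tuple with nonzero nested commutator then has probability exactly $\prod_{r=1}^q|\alpha_{v_r,\gamma_r}|/(q!(k\tilde g)^{q-1}N\tilde g)$, and all remaining mass goes to the zero map. The computation in \lem{BCH-sample} then gives $\E[\tilde\eta\,\tfrac12\ad_C]=\Phi_q(x)/\lambda_q$, with precisely the $\lambda_q$ of the theorem.
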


The proofs of \thm{hncc} and \thm{unpaired-hncc}, together with the explicit
sampling routines for paired HNCC, are given in \sec{implementation}.
Algorithm~\ref{algo:compensated-simulation} gives the full HNCC estimator, and
\fig{compensated-flow} shows the circuit used in one repetition.

\begin{algorithm}[H]
\caption{HNCC Simulation Estimator}
\label{algo:compensated-simulation}
\KwInput{Product-formula data $\mathsf{PFData}$, initial state $\rho_0$,
observable $O$, simulation time $t$, and target precision $\eps$.}
\KwOutput{An estimate $\widehat{o}$ of
$\tr[O\mc U(t)(\rho_0)]$.}

Set $m$, $q_0$, $\nu$, and $s_0$ as in the proof of
\thm{complexity}\;
Set $x\gets t/\nu$\;
Set
$\lambda_{\mathrm{paired}}$ according to \eq{lambda-single}\;
Set $R\gets\Theta(\lambda_{\mathrm{paired}}^{2m\nu}\eps^{-2})$ and
$\widehat{o}\gets0$\;

\For{$r=1$ \KwTo $R$}{
    Independently sample
    $(\eta_{\ell,a},\mc N_{\ell,a},s_{\ell,a})
    \gets
    \textsf{HNCC}(\mathsf{PFData},x,q_0,m)$
    for all $\ell\in[\nu]$ and $a\in[m]$\;

    \eIf{some $\mc N_{\ell,a}=0$ or
    $\sum_{\ell=1}^{\nu}\sum_{a=1}^{m}s_{\ell,a}>s_0$
    \label{lin:total-order-cutoff}}{
        $X_r\gets0$\;
    }{
        Prepare the initial state $\rho_0$\;
        \For{$\ell=1$ \KwTo $\nu$}{
            Apply $\mc S_K(x)$\;
            \For{$a=1$ \KwTo $m$}{
                Apply $\mc N_{\ell,a}$\;
            }
        }
        Measure $O$ and let $v_r$ be the outcome\;
        $X_r\gets
        \lambda_{\mathrm{paired}}^{m\nu}
        \bigl(
            \prod_{\ell=1}^{\nu}
            \prod_{a=1}^{m}
            \eta_{\ell,a}
        \bigr)v_r$\;
    }
    $\widehat{o}\gets\widehat{o}+X_r/R$\;
}
\Return $\widehat{o}$\;
\end{algorithm}

\begin{figure*}[t]
    \centering
    \includegraphics[page=1,width=0.75\textwidth]{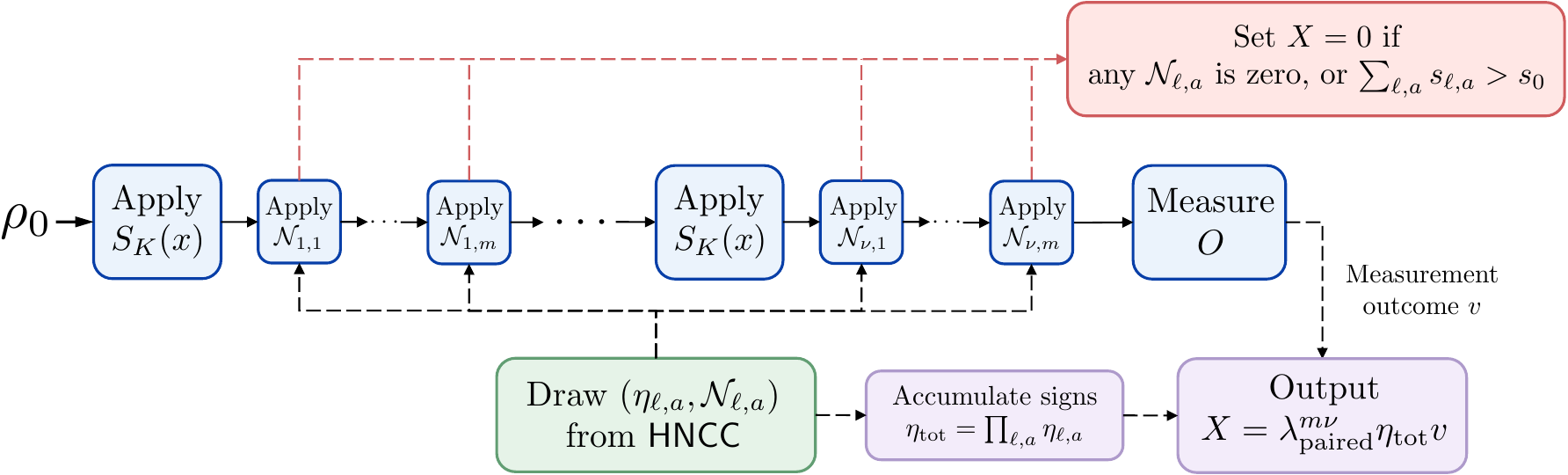}
    \caption{Circuit used in one repetition of
    \algo{compensated-simulation}. Each product-formula step is followed by
    $m$ independently sampled HNCC channels. Their signs are accumulated in
    $\eta_{\rm tot}$, and the measurement outcome is rescaled by
    $\lambda_{\mathrm{paired}}^{m\nu}\eta_{\rm tot}$. If any sampled channel is zero or the
    total BCH order exceeds $s_0$, the algorithm outputs $X=0$ without running
    the circuit.}
    \label{fig:compensated-flow}
\end{figure*}

We next analyze the complexity of the full simulation algorithm. The
fractional-remainder factorization is valid for every positive integer $m$.
To keep the BCH truncation bound independent of $m$, we restrict the complexity
analysis to $1\le m\le N$. The gate-count bound derived below is minimized, up
to a constant factor, by
\begin{align}
\label{eq:choice-m}
    m
    =
    \Theta\Bigl(
        \min\Bigl\{
            N,\,
            \frac{\Gamma}{\log(1/\eps)}
        \Bigr\}
    \Bigr),
\end{align}
rounded to an integer in $[1,N]$.

\begin{theorem}[High-order nested-commutator compensation]
\label{thm:complexity}
    Let $H=\sum_{\gamma=1}^{\Gamma}\alpha_\gamma P_\gamma$ be a
    $k$-local Hamiltonian on $N$ qubits satisfying the
    $g_0$-extensiveness condition. For any fixed $K$, let $\mc S_K$ be a
    $K$-th order product formula. Let $t>0$ and $0<\eps<1/2$, and suppose
    that $\Gamma=\Omega(\log(1/\eps))$ and $kg_0t\ge1$.
    Given any initial state $\rho_0$ and observable $O$,
    \algo{compensated-simulation} estimates
    $\tr[O\mc U(t)(\rho_0)]$ to additive precision $\eps\|O\|$ using
    $\mathcal O(\eps^{-2})$ independent circuit repetitions. For $m$ chosen
    as in \eq{choice-m}, the maximum gate count per circuit is
    \begin{multline}
    \label{eq:complexity-optimized}
        \mathcal O\Bigl(
        k\Bigl(\frac{N^2}{m}\Bigr)^{\frac{1}{2K+1}}
        \bigl(kg_0t\log(1/\eps)\bigr)^{1+\frac{1}{2K+1}}
        \\
        {}
        \bigl(\Gamma+m\log(1/\eps)\bigr)
        \Bigr)
    \end{multline}
    gates.
\end{theorem}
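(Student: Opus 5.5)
The proof splits the total error into three parts: (i) BCH truncation bias, (ii) bias from discarding samples that violate the total-order cutoff \eq{total-order-condition}, and (iii) statistical error. Each is bounded by $\eps\|O\|/3$ using the parameters $q_0=\Theta(\log(1/\eps))$, $\nu$ as in \sec{complexity-analysis}, $s_0=\Theta(\log(1/\eps))$ and $m$ from \eq{choice-m}.

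\emph{Truncation bias.} Replacing each $\mc V_K(x)$ by $\tilde{\mc V}_K^{(q_0)}(x)$ is controlled by a telescoping argument. All channels involved are unitary (the truncated exponent is $-\i\ad_G$ for Hermitian $G$, as in the proof of \lem{general-BCH-truncation}), so their diamond norms equal one and the errors add over steps. The total error is therefore at most
\[
2\nu(4eq_0k\tilde g)^{q_0+1}N
\]
by \lem{truncation-BCH}. The constant prefactor in $\nu$ is chosen so that $8eq_0k\tilde g\le1$. Since $\tilde g=\mathcal O(g_0t/\nu)$ and $\nu\ge kg_0tq_0$, this forces $4eq_0k\tilde g\le 1/2$, so the bound is at most $\nu N2^{-q_0}$.

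This step needs $q_0\ge\log_2(\nu N/\eps)$. It is the main obstacle, because it sits in tension with the claim $q_0=\Theta(\log(1/\eps))$. I would first try to strengthen the decay, taking $4eq_0k\tilde g\le \nu^{-c}$ by choosing the constant large, so that $(\cdot)^{q_0+1}$ absorbs $\nu N$. If that fails, I would use the fact that $\nu,N$ are polynomial in the problem size and accept $\log(\nu N/\eps)$, arguing this is absorbed under the stated assumptions $\Gamma=\Omega(\log 1/\eps)$ and $kg_0t\ge1$. Concretely, I would pick $8eq_0k\tilde g\le (\nu N)^{-1/(q_0+1)}\cdot\mathrm{const}$ and check that this only changes constants when $q_0\gtrsim\log(\nu N)$.

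\emph{Sampling and cutoff.} By \thm{hncc}, each step's compensation has expectation $\tilde{\mc V}_K^{(q_0)}(x)/\lambda_{\mathrm{paired}}^m$. The estimator without the cutoff is therefore unbiased for the truncated evolution, with scale factor $\lambda_{\mathrm{paired}}^{m\nu}$. Using $\lambda_{\mathrm{single}}=\mathcal O(N(q_0k\tilde g)^{K+1}/m)$ (this needs $m\le N$ so that $\lambda_{\mathrm{single}}=\mathcal O(1)$), we get
\[
\lambda_{\mathrm{paired}}^{m\nu}\le\exp\bigl(\mathcal O(m\nu\lambda_{\mathrm{single}}^2)\bigr)=\mathcal O(1)
\]
for the chosen $\nu$. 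Lemma~\ref{lem:random-channel} then gives $\mathcal O(\eps^{-2})$ repetitions.

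For the cutoff, the discarded-sample bias is at most $\lambda_{\mathrm{paired}}^{m\nu}\|O\|\Pr[\sum s_{\ell,a}>s_0]$, computed under the unnormalized weights. I would bound it with a generating-function (Chernoff) argument. The weight of the multi-term part, with order $s$ tracked by a marker $z^s$, is at most $\lambda_{\mathrm{multi}}(z)=\mathcal O(\lambda_{\mathrm{single}}(z)^2)$. Here $\lambda_{\mathrm{single}}(z)\le\lambda_{\mathrm{single}}z^{q_0}$ in the worst case, so it is better to use $z=2$ together with the geometric decay $(2q_0k\tilde g)^q$. This gives
\[
\sum_{s}w_s 2^s\le\exp\bigl(\mathcal O(m\nu\lambda_{\mathrm{single}}(2)^2)\bigr)=\mathcal O(1),
\]
since the factor $2^q$ is absorbed when $4q_0k\tilde g\le1/2$. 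Markov's inequality then bounds the tail mass by $\mathcal O(2^{-s_0})$, so $s_0=\Theta(\log(1/\eps))$ suffices.

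\emph{Gate count.} Each circuit has $\nu$ Trotter steps costing $\mathcal O(k\Gamma)$ each. It has $m\nu$ paired samples, each costing $\mathcal O(kq_0)$ by \thm{hncc}, and multi-term samples costing $\mathcal O(ks_0)$ in total. Substituting $\nu$ gives \eq{complexity-optimized}.
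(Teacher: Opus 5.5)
Your overall structure matches the paper. That covers the three error sources, the unitary telescoping for the truncation error, and the $z=2$ weighted $1$-norm $\sum_j 2^{d_j}|\beta_j|\le e^{Cm\nu\lambda_{\mathrm{single}}(2)^2}=\mathcal O(1)$ with a Markov-type $\mathcal O(2^{-s_0})$ bound for the cutoff. It also covers the gate count $\mathcal O(\nu k(\Gamma+mq_0)+ks_0)$.

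The genuine gap is the truncation step. You correctly flag it as the obstacle but do not close it, and neither of your remedies works as stated.

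\emph{Enlarging the constant.} Enlarging the constant prefactor in $\nu$ only shrinks $q_0k\tilde g$ by a constant factor. The bound then stays of the form $\nu N C^{-q_0}$, which still needs $q_0\gtrsim\log(\nu N/\eps)$. So a constant alone cannot produce $4eq_0k\tilde g\le\nu^{-c}$.

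\emph{The fallback.} Taking $q_0=\Theta(\log(\nu N/\eps))$ is not absorbed by the hypotheses. The assumptions $\Gamma=\Omega(\log(1/\eps))$ and $kg_0t\ge1$ say nothing about $\log(\nu N)$ relative to $\log(1/\eps)$; fix $\eps$ and let $N$ or $t$ grow. Since $q_0$ enters both $\nu$ and the $mq_0$ term, this route proves only a weaker bound, with $\log(Ng_0t/\eps)$ in place of $\log(1/\eps)$.

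The missing idea is to reuse the normalization condition that determines $\nu$. The choice of $\nu$ gives $m\nu\lambda_{\mathrm{single}}^2=\mathcal O(1)$, that is, $\nu N^2(q_0k\tilde g)^{2K+2}=\mathcal O(m)$. Peel off $2K+2$ factors of $q_0k\tilde g$:
\[
2\nu N(4eq_0k\tilde g)^{q_0+1}
=2(4e)^{2K+2}\,\bigl[\nu N(q_0k\tilde g)^{2K+2}\bigr]\,(4eq_0k\tilde g)^{q_0-2K-1}.
\]
The bracket is $\mathcal O(m/N)=\mathcal O(1)$ because $m\le N$; this is exactly why the paper restricts to $1\le m\le N$. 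Choose the constant in $\nu$ so that $4eq_0k\tilde g\le e^{-1}$; then the last factor is at most $e^{-(q_0-2K-1)}$. For fixed $K$, the accumulated truncation error is therefore $\mathcal O(e^{-q_0})$, independent of $\nu$ and $N$, and $q_0=\Theta(\log(1/\eps))$ suffices. Your intuition that the base is polynomially small in $\nu N$ is right, but that smallness comes from the polynomial form of $\nu$, not from its constant.

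A smaller point: $\lambda_{\mathrm{single}}=\mathcal O(1)$ follows from $m\nu\lambda_{\mathrm{single}}^2=\mathcal O(1)$ together with $m\nu\ge1$, not from $m\le N$. The restriction $m\le N$ is instead what you need for $4eq_0k\tilde g\le e^{-1}$ and for the bracket above.
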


\begin{proof}
    Set $x=t/\nu$ and take
    \begin{align}
    \label{eq:choice-nu-m}
        \nu
        =
        \Theta\Bigl(
            \Bigl(\frac{N^2}{m}\Bigr)^{\frac{1}{2K+1}}
            (q_0kg_0t)^{1+\frac{1}{2K+1}}
        \Bigr)
    \end{align}
    with a sufficiently large constant prefactor. The definition
    $\tilde g=(a_{\max}\kappa_K+1)g_0x$ gives
    \begin{align*}
        q_0k\tilde g
        &={}
        \frac{q_0k(a_{\max}\kappa_K+1)g_0t}{\nu} \\
        &={}
        \mathcal O\Bigl(
            \Bigl(
                \frac{m}{N^2q_0kg_0t}
            \Bigr)^{\frac{1}{2K+1}}
        \Bigr).
    \end{align*}
    Since $m\le N$ by \eq{choice-m} and $q_0kg_0t\ge1$, a sufficiently large
    constant prefactor in \eq{choice-nu-m} ensures
    $4eq_0k\tilde g\le e^{-1}$. For $1\le z\le2$, define
    \begin{align}
    \label{eq:lambda-single-z}
        \lambda_{\mathrm{single}}(z)
        =
        \frac{1}{m}
        \sum_{q=K+1}^{q_0}z^q\lambda_q.
    \end{align}
    Using \eq{lambda-q},
    \begin{align}
    \label{eq:lambda-single-bound}
        \lambda_{\mathrm{single}}(z)
        &\le{}
        \frac{N}{m}
        \sum_{q=K+1}^{q_0}(2zq_0k\tilde g)^q \nonumber\\
        &={}
        \mathcal O\Bigl(
            \frac{N}{m}
            (q_0k\tilde g)^{K+1}
        \Bigr) \nonumber\\
        &={}
        \mathcal O\Bigl(
            \bigl(
                Nm^K(q_0kg_0t)^{K+1}
            \bigr)^{-\frac{1}{2K+1}}
        \Bigr).
    \end{align}
    The last equality uses \eq{choice-nu-m} and
    $\tilde g=\mathcal O(g_0t/\nu)$. Squaring
    \eq{lambda-single-bound} and multiplying by $m\nu$ gives
    \begin{align}
    \label{eq:normalization-condition}
        m\nu\lambda_{\mathrm{single}}(z)^2
        &={}
        \mathcal O\Bigl(
            \frac{N^2\nu}{m}
            (q_0k\tilde g)^{2K+2}
        \Bigr) \nonumber\\
        &={}
        \mathcal O\Bigl(
            \frac{
                N^2(q_0kg_0t)^{2K+2}
            }{
                m\nu^{2K+1}
            }
        \Bigr)
        =
        \mathcal O(1),
    \end{align}
    where the second equality uses $\tilde g=\mathcal O(g_0t/\nu)$, and the last
    follows from \eq{choice-nu-m}. Since $m\nu\ge1$,
    \eq{normalization-condition} with $z=2$ implies
    $\lambda_{\mathrm{single}}(2)=\mathcal O(1)$ and hence
    $\lambda_{\mathrm{single}}=\lambda_{\mathrm{single}}(1)=\mathcal O(1)$.

    Using the definition of $\lambda_{\mathrm{multi}}$ in
    \eq{lambda-single}, we then have
    \begin{align*}
        \lambda_{\mathrm{multi}}
        =
        e^{\lambda_{\mathrm{single}}}
        -1-\lambda_{\mathrm{single}}
        =
        \mathcal O(\lambda_{\mathrm{single}}^2).
    \end{align*}
    Therefore, for a constant $C>0$, the LCQC $1$-norm of the fractional
    Trotter remainder in \eq{fractional-remainder} satisfies
    \begin{align*}
        \lambda_{\mathrm{paired}}
        \le
        1+C\lambda_{\mathrm{single}}^2
        \le
        e^{C\lambda_{\mathrm{single}}^2}.
    \end{align*}
    \algo{compensated-simulation} applies this LCQC $m$ times
    after each of the $\nu$ product-formula steps. The resulting LCQC therefore
    has $1$-norm
    \begin{align*}
        \lambda_{\mathrm{paired}}^{m\nu}
        \le
        e^{Cm\nu\lambda_{\mathrm{single}}(1)^2}
        =
        \mathcal O(1).
    \end{align*}
    By \lem{random-channel}, $\mathcal O(\eps^{-2})$ repetitions suffice to
    make the statistical error at most $\eps\|O\|/2$ with constant success
    probability.

    The choice above gives
    $8eq_0k\tilde g\le2/e<1$, so the condition of
    \lem{truncation-BCH} is satisfied. By this lemma and
    \eq{normalization-condition} with $z=1$, the accumulated BCH truncation
    error is bounded by
    \begin{align*}
        2\nu N(4eq_0k\tilde g)^{q_0+1}
        &={}
        \mathcal O\Bigl(
            \nu N(q_0k\tilde g)^{2K+2}
            (4eq_0k\tilde g)^{q_0-2K-1}
        \Bigr)\\
        &={}
        \mathcal O\Bigl(
            \frac{m}{N}
            (4eq_0k\tilde g)^{q_0-2K-1}
        \Bigr)\\
        &={}
        \mathcal O\bigl(
            (4eq_0k\tilde g)^{q_0-2K-1}
        \bigr),
    \end{align*}
    where the last line uses $m\le N$. Since
    $4eq_0k\tilde g\le e^{-1}$, this is
    $\mathcal O(e^{-(q_0-2K-1)})$.
    Choosing $q_0=\Theta(\log(1/\eps))$ with a sufficiently large constant
    makes this contribution at most $\eps/4$.

    We next bound the contribution from samples discarded in
    \lin{total-order-cutoff}. Write the full-circuit LCQC sampled by
    \algo{compensated-simulation} before the cutoff as
    $\sum_j\beta_j\mc N_j$, where each nonzero $\mc N_j$ is the composition of
    all product-formula steps and sampled compensation channels in one circuit.
    For the $j$-th term, let $d_j$ be the sum of the integers returned by the
    $m\nu$ one-step samplers:
    \begin{align*}
        d_j
        =
        \sum_{\ell=1}^{\nu}\sum_{a=1}^m s_{\ell,a}.
    \end{align*}
    In \algo{HNCC}, \textsf{HNCCSingle} returns $s=0$, whereas
    \textsf{HNCCMulti} returns $s=q_1+\cdots+q_r$ for a term
    $\prod_{u=1}^r\Phi_{q_u}(x)$. Hence weighting each coefficient by
    $2^{d_j}$ leaves the \textsf{HNCCSingle} contribution unchanged and
    replaces each $\lambda_q$ by $2^q\lambda_q$ in the \textsf{HNCCMulti}
    contribution. Since the $m\nu$ fractional Trotter remainders are sampled
    independently,
    \begin{align*}
        &\sum_j2^{d_j}|\beta_j|\\
        ={}&\biggl(1+\frac{\lambda_{\mathrm{single}}^2}{2}+
            \sum_{r=2}^{\infty}\frac{1}{r!}
            \biggl(
                \frac{1}{m}\sum_{q=K+1}^{q_0}2^q\lambda_q
            \biggr)^r\biggr)^{m\nu}\\
        ={}&
        \biggl(
            1+\frac{\lambda_{\mathrm{single}}(1)^2}{2}
            +\sum_{r=2}^{\infty}
            \frac{\lambda_{\mathrm{single}}(2)^r}{r!}
        \biggr)^{m\nu}\\
        \le{}&
        e^{Cm\nu\lambda_{\mathrm{single}}(2)^2}
        =
        \mathcal O(1),
    \end{align*}
    where the inequality uses
    $\lambda_{\mathrm{single}}(1)\le
    \lambda_{\mathrm{single}}(2)=\mathcal O(1)$. The last equality follows
    from \eq{normalization-condition} with $z=2$.

    The cutoff in \lin{total-order-cutoff} replaces the terms with $d_j>s_0$
    by the zero map. Since $\|\mc N_j\|_\diamond\le1$, the omitted contribution
    satisfies
    \begin{align}
    \label{eq:total-order-cutoff-bound}
        \Bigl\|
            \sum_{j:d_j>s_0}\beta_j\mc N_j
        \Bigr\|_\diamond
        &\le
        \sum_{j:d_j>s_0}|\beta_j|\\
        &\le
        2^{-s_0}
        \sum_j 2^{d_j}|\beta_j|.
    \end{align}
    Together with the weighted LCQC $1$-norm bound above, this is
    $\mathcal O(2^{-s_0})$.
    The resulting bias in the observable expectation is therefore at most
    $\mathcal O(2^{-s_0})\|O\|$. Choosing
    $s_0=\Theta(\log(1/\eps))$ with a sufficiently large constant makes this
    bias at most $\eps\|O\|/4$.

    There are $m\nu$ calls to \algo{HNCC}. By \thm{hncc}, an output with
    $s_{\ell,a}=0$ uses $\mathcal O(kq_0)$ gates, whereas an output with
    $s_{\ell,a}>0$ uses $\mathcal O(ks_{\ell,a})$ gates. Summing these bounds,
    the compensation channels retained by \algo{compensated-simulation} use
    \begin{align*}
        \mathcal O\Bigl(
            \nu m kq_0
            +
            k\sum_{\ell=1}^{\nu}\sum_{a=1}^m s_{\ell,a}
        \Bigr)
        =
        \mathcal O(\nu m kq_0+ks_0)
    \end{align*}
    gates, where the second expression uses the cutoff condition
    $\sum_{\ell=1}^{\nu}\sum_{a=1}^m s_{\ell,a}\le s_0$.
    The product-formula steps use
    $\mathcal O(\nu\Gamma k)$ gates. Hence every circuit uses
    \begin{align*}
        \mathcal O\bigl(
            \nu k(\Gamma+mq_0)
            +
            ks_0
        \bigr)
    \end{align*}
    gates. The choices above satisfy $s_0=\mathcal O(q_0)$, and
    $m\nu\ge1$. Hence $ks_0=\mathcal O(\nu m kq_0)$, so the gate count is
    $\mathcal O(\nu k(\Gamma+mq_0))$. Substituting
    \eq{choice-nu-m} and
    $q_0=\mathcal O(\log(1/\eps))$ gives
    \eq{complexity-optimized}. The factor
    $m^{-1/(2K+1)}(\Gamma+m\log(1/\eps))$
    is minimized up to a constant factor by \eq{choice-m}.
\end{proof}


We also consider unpaired HNCC, which represents the part linear in the BCH
generator using the parameter-shift rule. We use
\algo{compensated-simulation} with $m=1$, replace each call to \algo{HNCC} by
the sampler in
\thm{unpaired-hncc}, and replace $\lambda_{\mathrm{paired}}$ by
$\lambda_{\mathrm{unpaired}}$. The same
cutoff on the total BCH order is applied to the returned integers.

\begin{theorem}[Unpaired high-order nested-commutator compensation]
\label{thm:unpaired-global-order}
    Under the assumptions of \thm{complexity}, unpaired HNCC estimates
    $\tr[O\mc U(t)(\rho_0)]$ to precision $\eps\|O\|$ using
    $\mathcal O(\eps^{-2})$ circuit repetitions. Every circuit uses
    \begin{align*}
        \mathcal O\bigl(
            \Gamma k
            N^{1/K}
            \bigl(
                kg_0t\log(1/\eps)
            \bigr)^{1+1/K}
        \bigr)
    \end{align*}
    gates. The compensation part of each circuit uses
    $\mathcal O(k\log(1/\eps))$ Clifford gates and no $T$ gates.
\end{theorem}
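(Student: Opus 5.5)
The plan is to mirror the proof of \thm{complexity}, with $m=1$ and the unpaired sampler of \thm{unpaired-hncc} in place of \algo{HNCC}. The key difference is the normalization: one call now has $1$-norm $\lambda_{\mathrm{unpaired}}=e^{\lambda_{\mathrm{single}}}$, so the full circuit has $1$-norm $e^{\nu\lambda_{\mathrm{single}}}$. We therefore need $\nu\lambda_{\mathrm{single}}=\mathcal O(1)$ rather than $\nu\lambda_{\mathrm{single}}^2=\mathcal O(1)$.

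Set $x=t/\nu$ with
\[
    \nu=\Theta\bigl(N^{1/K}(q_0kg_0t)^{1+1/K}\bigr)
\]
and a large constant prefactor, where $q_0=\Theta(\log(1/\eps))$. Since $q_0kg_0t\ge1$, this gives $q_0k\tilde g=\mathcal O((Nq_0kg_0t)^{-1/K})$, which is small, so $4eq_0k\tilde g\le e^{-1}$. As in \eq{lambda-single-bound}, for $z\in\{1,2\}$ we get
\[
    \lambda_{\mathrm{single}}(z)=\mathcal O\bigl(N(q_0k\tilde g)^{K+1}\bigr),
\]
hence
\[
    \nu\lambda_{\mathrm{single}}(z)=\mathcal O\bigl(N(q_0kg_0t)^{K+1}/\nu^{K}\bigr)=\mathcal O(1).
\]
Thus $\lambda_{\mathrm{unpaired}}^{\nu}=\mathcal O(1)$, and \lem{random-channel} gives $\mathcal O(\eps^{-2})$ repetitions for statistical error $\eps\|O\|/2$.

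\textbf{Truncation error.} By \lem{truncation-BCH}, the accumulated error over the $\nu$ steps is at most $2\nu N(4eq_0k\tilde g)^{q_0+1}$. We write this as
\[
    \nu N(q_0k\tilde g)^{K+1}\cdot\mathcal O\bigl((4eq_0k\tilde g)^{q_0-K}\bigr).
\]
The first factor is $\mathcal O(1)$ by the bound above, so the total is $\mathcal O(e^{-(q_0-K)})\le\eps/4$ for a suitable constant in $q_0$.

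\textbf{Cutoff bias.} In the unpaired sampler every nonidentity output, including the linear part, carries $s=q$ or $s=\sum_u q_u$. The $2^{d_j}$-weighted $1$-norm is therefore exactly $e^{\nu\lambda_{\mathrm{single}}(2)}=\mathcal O(1)$. The argument of \eq{total-order-cutoff-bound} then bounds the discarded contribution by $\mathcal O(2^{-s_0})\|O\|$, and $s_0=\Theta(\log(1/\eps))$ makes it at most $\eps\|O\|/4$.

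\textbf{Gate count.} Each product-formula step costs $\mathcal O(\Gamma k)$ gates, for $\mathcal O(\nu\Gamma k)$ in total, which after substituting $\nu$ and $q_0$ gives the stated bound. Retained compensation channels cost $\mathcal O(k\sum s_{\ell})=\mathcal O(ks_0)=\mathcal O(k\log(1/\eps))$ gates, and the identity outputs cost nothing. By \thm{unpaired-hncc} these gates are Clifford, since they are $\pi/4$ Pauli rotations, so the compensation part uses no $T$ gates.

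The main point needing care is the cutoff step. Unlike the paired case, where linear-part outputs return $s=0$ and cost $\mathcal O(kq_0)$ each, the unpaired sampler must report $s=q$ for single BCH terms too. This is what makes the global cutoff cap \emph{all} compensation gates at $\mathcal O(k\log(1/\eps))$, rather than $\mathcal O(\nu kq_0)$. The plan is to verify this from the structure of the unpaired sampler, and to check that the weighted norm identity $\sum_j 2^{d_j}|\beta_j|=e^{\nu\lambda_{\mathrm{single}}(2)}$ holds.
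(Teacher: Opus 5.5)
Your proposal is correct and follows essentially the same route as the paper's proof. It uses the same choice $\nu=\Theta(N^{1/K}(q_0kg_0t)^{1+1/K})$, and the same bound $\nu\lambda_{\mathrm{single}}(z)=\mathcal O(1)$ for $z\in\{1,2\}$ controls both the $1$-norm $e^{\nu\lambda_{\mathrm{single}}(1)}$ and the $2^{d_j}$-weighted norm $e^{\nu\lambda_{\mathrm{single}}(2)}$; it also uses the same factoring of the truncation error and the same global cutoff with $s=q$ reported for single BCH terms. The only step you leave implicit is that the $\mathcal O(k\log(1/\eps))$ compensation gates are absorbed into the product-formula cost, which the paper justifies by noting $\nu=\Omega(q_0)$.
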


\begin{proof}
    Set $q_0=\Theta(\log(1/\eps))$, $x=t/\nu$, and
    \begin{align}
    \label{eq:choice-nu-unpaired}
        \nu
        =
        \Theta\bigl(
            N^{1/K}
            (q_0kg_0t)^{1+1/K}
        \bigr)
    \end{align}
    with a sufficiently large constant prefactor. Since
    $\tilde g=\mathcal O(g_0t/\nu)$, this choice gives
    \begin{align*}
        q_0k\tilde g
        =
        \mathcal O\bigl(
            (Nq_0kg_0t)^{-1/K}
        \bigr).
    \end{align*}
    We may therefore choose the prefactor in
    \eq{choice-nu-unpaired} so that
    $4eq_0k\tilde g\le e^{-1}$. For $1\le z\le2$, take $m=1$ in
    \eq{lambda-single-z}. The bound \eq{lambda-single-bound} gives
    \begin{align*}
        \lambda_{\mathrm{single}}(z)
        =
        \mathcal O\bigl(N(q_0k\tilde g)^{K+1}\bigr).
    \end{align*}
    It follows that
    \begin{align}
    \label{eq:normalization-unpaired}
        \nu\lambda_{\mathrm{single}}(z)
        &={}
        \mathcal O\bigl(
            \nu N(q_0k\tilde g)^{K+1}
        \bigr)\nonumber\\
        &={}
        \mathcal O\Bigl(
            \frac{
                N(q_0kg_0t)^{K+1}
            }{
                \nu^K
            }
        \Bigr)
        =
        \mathcal O(1),
    \end{align}
    where the last equality follows from \eq{choice-nu-unpaired}.

    By \thm{unpaired-hncc}, the LCQC $1$-norm over all $\nu$ steps is
    \begin{align*}
        \lambda_{\mathrm{unpaired}}^{\nu}
        =
        e^{\nu\lambda_{\mathrm{single}}(1)}
        =
        \mathcal O(1).
    \end{align*}
    Hence \lem{random-channel} shows that
    $\mathcal O(\eps^{-2})$ repetitions suffice to make the statistical error
    at most $\eps\|O\|/2$ with constant success probability.

    The choice above gives $8eq_0k\tilde g\le2/e<1$, so
    \lem{truncation-BCH} applies. Using
    \eq{normalization-unpaired} with $z=1$, the accumulated BCH truncation
    error is bounded by
    \begin{align*}
        2\nu N(4eq_0k\tilde g)^{q_0+1}
        =
        \mathcal O\bigl(
            (4eq_0k\tilde g)^{q_0-K}
        \bigr).
    \end{align*}
    Since $4eq_0k\tilde g\le e^{-1}$, a sufficiently large constant in the
    choice of $q_0$ makes this error at most $\eps/4$.

    Write the full-circuit LCQC before the cutoff on the total BCH order as
    $\sum_j\beta_j\mc N_j$. For the $j$-th term, let
    $d_j=\sum_{\ell=1}^{\nu}s_\ell$, where $s_\ell$ is the integer returned by
    the unpaired sampler after the $\ell$-th product-formula step. In
    \thm{unpaired-hncc}, the identity returns $s_\ell=0$, while a term with
    BCH orders $q_1,\ldots,q_r$ returns
    $s_\ell=q_1+\cdots+q_r$. Therefore,
    \begin{align*}
        \sum_j2^{d_j}|\beta_j|
        &={}
        \biggl[
            1+
            \sum_{r=1}^{\infty}\frac{1}{r!}
            \biggl(
                \sum_{q=K+1}^{q_0}2^q\lambda_q
            \biggr)^r
        \biggr]^{\nu}\\
        &={}
        e^{\nu\lambda_{\mathrm{single}}(2)}
        =
        \mathcal O(1),
    \end{align*}
    where the last equality follows from
    \eq{normalization-unpaired} with $z=2$.

    The cutoff replaces terms with $d_j>s_0$ by the zero map. Applying
    \eq{total-order-cutoff-bound} and the weighted LCQC $1$-norm bound above,
    the omitted contribution has diamond norm $\mathcal O(2^{-s_0})$.
    Taking $s_0=\Theta(\log(1/\eps))$ with a sufficiently large constant
    makes the resulting observable bias at most $\eps\|O\|/4$.

    By \thm{unpaired-hncc}, a retained sample uses
    \(
        \mathcal O(k\sum_{\ell=1}^{\nu}s_\ell)
        =
        \mathcal O(ks_0)
    \)
    compensation gates, all of which are Clifford gates. The product-formula
    steps use $\mathcal O(\nu\Gamma k)$ gates. Since
    $s_0=\mathcal O(q_0)$, $N,\Gamma\ge1$, and $kg_0t\ge1$,
    \eq{choice-nu-unpaired} gives
    $\nu=\Omega(q_0)$. Hence the product-formula cost absorbs the compensation
    cost. Substituting \eq{choice-nu-unpaired} and
    $q_0=\Theta(K+\log(1/\eps))$ gives the stated gate count and completes
    the proof.
\end{proof}

\section{One-step sampler implementation}
\label{sec:implementation}

This section proves \thm{hncc} and \thm{unpaired-hncc}. The top-level paired
sampler chooses between the part linear in the BCH generator and products of
multiple BCH terms. Both cases call
\textsf{BCHSample}, which in turn calls
\textsf{LCSample}. The calling relation among these routines is shown in
\fig{hncc-flow}.

\begin{algorithm}[H]
\caption{One-step High-order Nested-commutator Compensation (\textsf{HNCC})}
\label{algo:HNCC}
\KwInput{Product-formula data $\mathsf{PFData}$, time step $x$, truncation
order $q_0\ge K+1$, and number of factors $m$.}
\KwOutput{Sign $\eta$, superoperator $\mc N$, and nonnegative integer $s$.}

Form the scaled Pauli-term list
$\mathsf{TermData}\gets
\{(\tilde{\alpha}_{v,\gamma},\tilde P_{v,\gamma})\}_{v\in[M],\gamma}$
according to \eq{def-tilde-H} and \eq{decompose-tilde-H}\;
Calculate
$\tilde g\gets(a_{\max}\kappa_K+1)g_0x$\;
Set
$\mathsf{CommData}\gets
(\mathsf{TermData},M,N,k,\tilde g)$\;

\ForEach{$q\in[K+1,q_0]$}{
    $\lambda_q\gets
    q!(2k\tilde g)^qk^{-1}N/q^2$\;
}
Set $\lambda_{\mathrm{single}}$, $\lambda_{\mathrm{multi}}$, and
$\lambda_{\mathrm{paired}}$ according to \eq{lambda-single}\;

\WithProb{$\frac{1+\lambda_{\mathrm{single}}^2/2}{\lambda_{\mathrm{paired}}}$}{
    $(\eta,\mc N)\gets\textsf{HNCCSingle}(\mathsf{CommData},\{\lambda_q\},\lambda_{\mathrm{single}},m)$\;
    \Return $(\eta,\mc N,0)$\;
}
\Else{
    \Return $\textsf{HNCCMulti}(\mathsf{CommData},\{\lambda_q\},\lambda_{\mathrm{single}},m)$\;
}
\end{algorithm}

\begin{figure}[tbp]
    \centering
    \includegraphics[width=0.47\textwidth]{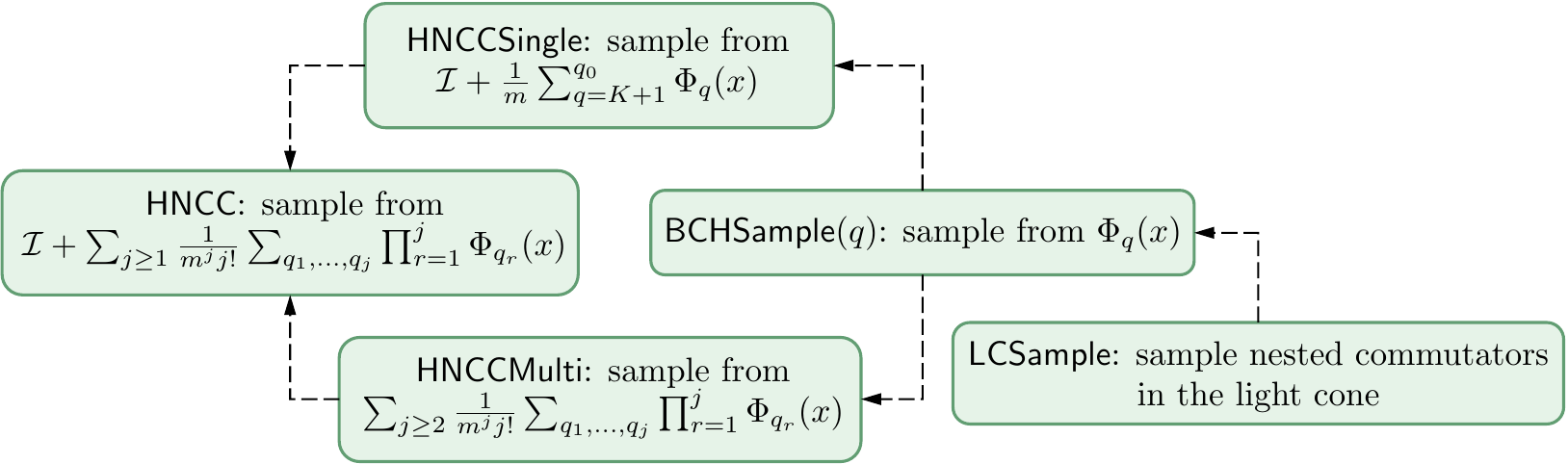}
    \caption{Calling relation among the routines used for one \textsf{HNCC}
    sample.}
    \label{fig:hncc-flow}
\end{figure}

\subsection{Products of BCH terms}

For these terms, we sample $j\ge2$ with
probability
\begin{align}
\label{eq:prob-j}
    \frac{
        \lambda_{\mathrm{single}}^j
    }{
        j!\lambda_{\mathrm{multi}}
    },
\end{align}
and, conditioned on $j$, sample $q_1,\ldots,q_j$ independently with
probability $\lambda_q/(m\lambda_{\mathrm{single}})$.

\begin{algorithm}[htbp]
\caption{Sampling Products of BCH Terms (\textsf{HNCCMulti})}
\label{algo:HNCC-multi}
\KwInput{$\mathsf{CommData}$, $\{\lambda_q\}_{q=K+1}^{q_0}$,
$\lambda_{\mathrm{single}}$, and $m$.}
\KwOutput{Sign $\eta$, superoperator $\mc N$, and nonnegative integer $s$.}

Initialize $V\gets I$ and $\eta\gets1$\;
Set $\lambda_{\mathrm{multi}}
\gets\sum_{j=2}^{\infty}\lambda_{\mathrm{single}}^j/j!$\;
Sample $j\ge2$ according to \eq{prob-j}\;
Independently sample $q_1,\ldots,q_j$ with probabilities
$\lambda_q/(m\lambda_{\mathrm{single}})$\;
Set $s\gets q_1+\cdots+q_j$\;

\For{$r=1$ \KwTo $j$}{
    $(\tilde\eta_r,C_r)
    \gets
    \mathsf{BCHSample}(\mathsf{CommData},q_r)$\;
    \lIf{$C_r=0$}{\Return $(1,0,s)$}
    \WithProb{$1/2$}{
        $V_r\gets e^{\frac{\pi}{4}C_r}$\;
        $\eta_r\gets\tilde\eta_r$\;
    }
    \Else{
        $V_r\gets e^{-\frac{\pi}{4}C_r}$\;
        $\eta_r\gets-\tilde\eta_r$\;
    }
    Update $\eta\gets\eta_r\eta$ and $V\gets V_rV$\;
}
\Return $(\eta,\mc U_V,s)$\;
\end{algorithm}

\begin{lemma}
\label{lem:hncc-multi}
    Under the assumptions of \thm{hncc},
    \algo{HNCC-multi} samples $(\eta,\mc N,s)$ such that
    \begin{align*}
        \E[\eta\mc N]
        =
        \frac{1}{\lambda_{\mathrm{multi}}}
        \sum_{j=2}^{\infty}
        \frac{1}{m^jj!}
        \sum_{q_1,\ldots,q_j=K+1}^{q_0}
        \prod_{r=1}^j\Phi_{q_r}(x).
    \end{align*}
\end{lemma}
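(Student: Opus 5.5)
The proof goes by conditioning on the random choices made in \algo{HNCC-multi} and computing the expectation layer by layer, from the innermost routine outward. The key input is the guarantee of \textsf{BCHSample}, which is described later in this section and rests on the light-cone sampler. I take that guarantee in the following form: for each order $q$, \textsf{BCHSample} returns a sign $\tilde\eta$ and either $C=0$ or an anti-Hermitian Pauli $C\in\pm\i\mathcal P_N$, such that
\[
\E\bigl[\tilde\eta\,\mathbbm 1[C\ne0]\,\ad_C\bigr]=\Phi_q(x)/\lambda_q .
\]
This is the normalized LCQC for $\Phi_q(x)$ with $1$-norm $\lambda_q$ from \eq{lambda-q-summary}. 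Assuming this, the statement follows from three facts: the parameter-shift identity \eq{para-shift}, independence across the factors $r=1,\dots,j$, and the choice of the probabilities over $j$ and $q_r$.

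\textbf{Step 1: one factor.} Fix $r$ and condition on $q_r$. If $C_r\ne0$, write $C_r=\i P$ up to sign. The fair coin gives
\[
\E[\eta_r\,\mc U_{V_r}\mid \tilde\eta_r,C_r]=\tfrac{\tilde\eta_r}{2}\bigl(\mc U_{e^{\frac{\pi}{4}C_r}}-\mc U_{e^{-\frac{\pi}{4}C_r}}\bigr)=\tfrac{\tilde\eta_r}{2}\,\ad_{C_r}
\]
by \eq{para-shift}. The sign convention is consistent because replacing $C_r$ by $-C_r$ flips both sides. If $C_r=0$, the routine returns the zero map. This equals $\tilde\eta\cdot\tfrac12\ad_0=0$, so the zero case fits the same formula. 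Averaging over \textsf{BCHSample} then gives
\[
\E[\eta_r\mc N_r\mid q_r]=\Phi_{q_r}(x)/(2\lambda_{q_r})
\]
under the factor-$2$ convention. I would double-check this constant against the precise normalization of \textsf{BCHSample}. In particular, I expect \textsf{BCHSample} to absorb the $2^{q-1}$ commutator factor and the factor $2$ from \eq{para-shift}, so that the per-factor identity is exactly $\Phi_{q_r}(x)/\lambda_{q_r}$.

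\textbf{Step 2: products.} Conditioned on $(j,q_1,\dots,q_j)$, the $j$ calls to \textsf{BCHSample} and the $j$ coins are independent. The output is $\eta=\prod_r\eta_r$ with $\mc N=\mc U_{V_j}\circ\cdots\circ\mc U_{V_1}$, and it is the zero map as soon as any $C_r=0$. Composition of superoperators is multilinear, so the expectation factorizes:
\[
\E[\eta\mc N\mid j,\mathbf q]=\prod_{r=j}^{1}\Phi_{q_r}(x)/\lambda_{q_r}.
\]
Here the order of the product follows $V\gets V_rV$. Because $q_1,\dots,q_j$ are i.i.d., summing over all tuples makes the ordering immaterial.

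\textbf{Step 3: averaging over $j$ and $\mathbf q$.} The tuple $\mathbf q$ is drawn with probability $\prod_r\lambda_{q_r}/(m\lambda_{\mathrm{single}})$, and $j$ is drawn according to \eq{prob-j}. Multiplying these weights by the conditional expectation from Step 2, the $\lambda_{q_r}$ cancel, and the overall factor becomes
\[
\frac{\lambda_{\mathrm{single}}^j}{j!\,\lambda_{\mathrm{multi}}}\cdot\frac{1}{m^j\lambda_{\mathrm{single}}^j}=\frac{1}{m^j\,j!\,\lambda_{\mathrm{multi}}},
\]
which gives the claimed formula after summing over $j\ge2$ and $\mathbf q$. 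The interchange of the infinite sum over $j$ with the expectation is justified by absolute convergence in diamond norm. Each term has norm at most the corresponding probability, and these probabilities sum to $1$.

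The main obstacle is Step 1: the exact normalization and sign bookkeeping inside \textsf{BCHSample} and \textsf{LCSample}. In particular, the light-cone sampler must produce each nonzero nested commutator with weight exactly matching its BCH coefficient $c_{\sigma,q}/(p_1!\cdots p_M!)$ divided by $\lambda_q$. Rejection to the zero map absorbs the slack between $\|\Phi_q\|$ and $\lambda_q$. I would establish this as a separate lemma for \textsf{BCHSample} and cite it here.
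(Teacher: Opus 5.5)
Your proposal is correct and follows essentially the same route as the paper: condition on $(j,q_1,\ldots,q_j)$, use the parameter-shift identity to turn each fair-coin rotation into $\tilde\eta_r\tfrac12\ad_{C_r}$, factorize by independence of the calls, and cancel the $\lambda_{q_r}$ against the sampling probabilities $\tfrac{\lambda_{\mathrm{single}}^j}{j!\lambda_{\mathrm{multi}}}\prod_r\tfrac{\lambda_{q_r}}{m\lambda_{\mathrm{single}}}$. The one correction is to your stated premise: \lem{BCH-sample} guarantees $\E[\tilde\eta\,\tfrac12\ad_C]=\Phi_q(x)/\lambda_q$, with the factor $\tfrac12$ already included, so Step 1 gives exactly $\Phi_{q_r}(x)/\lambda_{q_r}$ and there is no leftover factor of $2$ to absorb.
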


\begin{proof}
    Fix $(j,q_1,\ldots,q_j)$. For each $r$, averaging over the final random
    choice in
    \algo{HNCC-multi} gives the signed channel
    $\tilde\eta_r(\mc U_{e^{\pi C_r/4}}
    -\mc U_{e^{-\pi C_r/4}})/2$. By the parameter-shift identity and
    \lem{BCH-sample},
    \begin{align*}
        \E\Bigl[
            \tilde\eta_r\frac{1}{2}\ad_{C_r}
            \Bigm|q_r
        \Bigr]
        =
        \frac{1}{\lambda_{q_r}}\Phi_{q_r}(x),
    \end{align*}
    where the equality also holds when $C_r=0$, since the algorithm then
    returns the zero map. The calls to \textsf{BCHSample} are independent, so
    \begin{align*}
        \E[\eta\mc N\mid j,q_1,\ldots,q_j]
        =
        \biggl(
            \prod_{r=1}^j\frac{1}{\lambda_{q_r}}
        \biggr)
        \prod_{r=1}^j\Phi_{q_r}(x).
    \end{align*}
    The joint probability of the sampled tuple is
    \begin{align*}
        \frac{\lambda_{\mathrm{single}}^j}{
            j!\lambda_{\mathrm{multi}}
        }
        \prod_{r=1}^j
        \frac{\lambda_{q_r}}{
            m\lambda_{\mathrm{single}}
        }
        =
        \frac{1}{
            m^jj!\lambda_{\mathrm{multi}}
        }
        \prod_{r=1}^j\lambda_{q_r}.
    \end{align*}
    Averaging the conditional expectation with this probability proves the
    claim.
\end{proof}

\subsection{Terms linear in the BCH generator}

By \lem{BCH-sample}, every nonzero sampled operator satisfies
$C\in\pm\i\mathcal P_N$ and hence $C^2=-I$. This allows the identity
contribution to be paired with the sampled adjoint action. Let
$\theta=\tan^{-1}(\lambda_{\mathrm{single}}/2)$.

\begin{algorithm}[htbp]
\caption{Sampling Terms Linear in the BCH Generator
(\textsf{HNCCSingle})}
\label{algo:HNCC-single}
\KwInput{$\mathsf{CommData}$, $\{\lambda_q\}_{q=K+1}^{q_0}$,
$\lambda_{\mathrm{single}}$, and $m$.}
\KwOutput{Sign $\eta$ and superoperator $\mc N$.}

$\theta\gets\tan^{-1}(\lambda_{\mathrm{single}}/2)$\;
Sample $q\in\{K+1,\ldots,q_0\}$ with probability
$\lambda_q/(m\lambda_{\mathrm{single}})$\;
$(\tilde\eta,C)\gets
\mathsf{BCHSample}(\mathsf{CommData},q)$\;

\If{$C=0$}{
    \lWithProb{$\frac{\cos^2\theta}{1+\sin^2\theta}$}{
        \Return $(1,\mc I)$
    }
    \lElse{\Return $(1,0)$}
}
\WithProb{$\frac{1}{1+\sin^2\theta}$}{
    \Return $(1,\mc U_{e^{\theta\tilde\eta C}})$
}
\lElse{
    \Return $(-1,\mc U_C)$
}
\end{algorithm}

\begin{lemma}
\label{lem:hncc-single}
    Under the assumptions of \thm{hncc},
    \algo{HNCC-single} samples a sign $\eta$ and a superoperator $\mc N$
    such that
    \begin{align*}
        \E[\eta\mc N]
        =
        \frac{1}{
            1+\lambda_{\mathrm{single}}^2/2
        }
        \biggl(
            \mc I+
            \frac{1}{m}
            \sum_{q=K+1}^{q_0}\Phi_q(x)
        \biggr).
    \end{align*}
\end{lemma}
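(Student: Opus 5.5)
The plan is to compute the conditional expectation of $\eta\mc N$ given the output $(\tilde\eta,C)$ of \textsf{BCHSample}, and then average over $q$ and over the internal randomness of \textsf{BCHSample}. From its use in the proof of \lem{hncc-multi}, \lem{BCH-sample} provides
\[
\E\bigl[\tilde\eta\,\tfrac12\ad_C\bigm| q\bigr]=\frac{1}{\lambda_q}\Phi_q(x),
\]
where $C\in\pm\i\mathcal P_N$ or $C=0$. Averaging over $q$, which is drawn with probability $\lambda_q/(m\lambda_{\mathrm{single}})$, then gives
\[
\E\bigl[\tilde\eta\,\tfrac12\ad_C\bigr]=\frac{1}{m\lambda_{\mathrm{single}}}\sum_{q=K+1}^{q_0}\Phi_q(x),
\]
again with $\ad_0=0$. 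Set $Z:=1+\sin^2\theta$.

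\textbf{Case $C\neq0$.} Write $C=\tilde\eta' \i P$, so that $\theta\tilde\eta C=\theta\sigma \i P$ with $\sigma:=\tilde\eta\tilde\eta'\in\{\pm1\}$. By the rotation expansion from \sec{trotter-lcqc},
\[
\mc U_{e^{\theta\sigma\i P}}=\cos^2\theta\,\mc I+\sigma\sin\theta\cos\theta\,\ad_{\i P}+\sin^2\theta\,\mc U_{\i P}.
\]
Since the global sign of $C$ is irrelevant, $\mc U_C=\mc U_{\i P}$. The conditional expectation is therefore
\[
\frac{1}{Z}\bigl(\mc U_{e^{\theta\tilde\eta C}}-\sin^2\theta\,\mc U_C\bigr)=\frac{1}{Z}\bigl(\cos^2\theta\,\mc I+\sin\theta\cos\theta\,\tilde\eta\ad_C\bigr),
\]
using $\sigma\ad_{\i P}=\tilde\eta\ad_C$. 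The $\sin^2\theta$ terms cancel exactly, which is the point of the pairing.

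\textbf{Case $C=0$.} The algorithm returns $\mc I$ with probability $\cos^2\theta/Z$ and the zero map otherwise, so the conditional expectation is $\frac{1}{Z}\cos^2\theta\,\mc I$. This coincides with the $C\neq0$ formula evaluated at $\ad_0=0$.

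\textbf{Combining the cases.} Both cases give
\[
\E[\eta\mc N]=\frac{\cos^2\theta}{Z}\Bigl(\mc I+2\tan\theta\,\E\bigl[\tilde\eta\,\tfrac12\ad_C\bigr]\Bigr)=\frac{\cos^2\theta}{Z}\Bigl(\mc I+\frac{2\tan\theta}{m\lambda_{\mathrm{single}}}\sum_{q}\Phi_q(x)\Bigr).
\]
With $\tan\theta=\lambda_{\mathrm{single}}/2$, the coefficient $2\tan\theta/(m\lambda_{\mathrm{single}})$ equals $1/m$. For the prefactor,
\[
\frac{\cos^2\theta}{1+\sin^2\theta}=\frac{1}{\sec^2\theta+\tan^2\theta}=\frac{1}{1+2\tan^2\theta}=\frac{1}{1+\lambda_{\mathrm{single}}^2/2}.
\]
This gives the claim.

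\textbf{Main obstacle.} The delicate step is the sign bookkeeping: the sampled $C$ may carry either sign, and this sign must combine with $\tilde\eta$ inside the rotation $e^{\theta\tilde\eta C}$ to produce $+\tilde\eta\ad_C$ in the linear term, while the $\mc U_C$ term is sign-independent. A second point to check is that the $C=0$ branch, through its probabilities, reproduces exactly the $\mc I$ weight $\cos^2\theta/Z$. Both checks reduce to $C^2=-I$ and the expansion of $\mc U_{e^{\theta\i P}}$.
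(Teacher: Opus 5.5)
Your proposal is correct and follows essentially the same route as the paper: condition on the output $(\tilde\eta,C)$ of \textsf{BCHSample}, use the identity-pairing expansion of $\mc U_{e^{\theta\tilde\eta C}}$ so that the $\sin^2\theta$ terms cancel, note that the $C=0$ branch reproduces the identity weight $\cos^2\theta/(1+\sin^2\theta)$, and then average using \lem{BCH-sample} and the distribution $\lambda_q/(m\lambda_{\mathrm{single}})$ of $q$. Your explicit sign bookkeeping through $C=\tilde\eta'\i P$ and your check that $\cos^2\theta/(1+\sin^2\theta)=1/(1+2\tan^2\theta)$ fill in steps the paper leaves implicit, and averaging over $q$ before rather than after conditioning is equivalent by linearity.
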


\begin{proof}
    Fix the sampled order $q$ and condition on $(\tilde\eta,C)$. If
    $C\ne0$, averaging over the final random choice in \algo{HNCC-single} and using
    the identity-pairing formula gives
    \begin{align*}
        \frac{\cos^2\theta}{1+\sin^2\theta}
        \biggl(
            \mc I+
            \tilde\eta\frac{\lambda_{\mathrm{single}}}{2}\ad_C
        \biggr).
    \end{align*}
    The same expression holds for $C=0$: the algorithm returns $\mc I$ with
    probability $\cos^2\theta/(1+\sin^2\theta)$ and the zero map otherwise.
    Averaging over $(\tilde\eta,C)$ and applying \lem{BCH-sample} therefore
    gives
    \begin{align*}
        \E[\eta\mc N\mid q]
        =
        \frac{\cos^2\theta}{1+\sin^2\theta}
        \Bigl(
            \mc I+
            \frac{\lambda_{\mathrm{single}}}{\lambda_q}
            \Phi_q(x)
        \Bigr).
    \end{align*}
    Averaging with probability
    $\lambda_q/(m\lambda_{\mathrm{single}})$ gives
    \begin{align*}
        \E[\eta\mc N]
        &=
        \frac{\cos^2\theta}{1+\sin^2\theta}
        \biggl(
            \mc I+
            \frac{1}{m}
            \sum_{q=K+1}^{q_0}\Phi_q(x)
        \biggr)\\
        &=
        \frac{1}{
            1+\lambda_{\mathrm{single}}^2/2
        }
        \biggl(
            \mc I+
            \frac{1}{m}
            \sum_{q=K+1}^{q_0}\Phi_q(x)
        \biggr).
    \end{align*}
\end{proof}

Combining \lem{hncc-single} and \lem{hncc-multi} gives the one-step sampler
in \thm{hncc}.

\begin{proof}[Proof of \thm{hncc}]
    The algorithm calls \algo{HNCC-single} with probability
    $(1+\lambda_{\mathrm{single}}^2/2)/\lambda_{\mathrm{paired}}$ and
    \algo{HNCC-multi} with probability
    $\lambda_{\mathrm{multi}}/\lambda_{\mathrm{paired}}$.
    The two lemmas give
    \begin{align*}
        \E[\eta\mc N]
        &=
        \frac{1}{\lambda_{\mathrm{paired}}}
        \sum_{j=0}^{\infty}
        \frac{1}{m^jj!}
        \biggl(
            \sum_{q=K+1}^{q_0}\Phi_q(x)
        \biggr)^j\\
        &=
        \frac{1}{\lambda_{\mathrm{paired}}}
        \exp\biggl(
            \frac{1}{m}
            \sum_{q=K+1}^{q_0}\Phi_q(x)
        \biggr).
    \end{align*}
    We next prove the gate bounds. By \lem{lcsample}, a nonzero order-$q$ call
    to \textsf{BCHSample} returns a Pauli operator generated by a nested
    commutator of $q$ Pauli operators, each supported on at most $k$ qubits.
    Its support is contained in the union of these $q$ supports and hence has
    size at most $kq$. A Pauli rotation with this generator can be implemented
    using $\mathcal O(kq)$ gates. Sampling a nonzero term from the linear part
    calls
    \textsf{BCHSample} once with $q\le q_0$ and returns $s=0$, so the resulting
    channel uses $\mathcal O(kq_0)$ gates. For products of multiple BCH terms,
    \algo{HNCC-multi} sets $s=q_1+\cdots+q_j$ and composes $j$ such rotations,
    which use $\mathcal O(k\sum_{r=1}^j q_r)=\mathcal O(ks)$ gates in total.
    A zero-map output requires no gates.

    Conditioned on calling \algo{HNCC-multi}, the expected number
    of calls to \textsf{BCHSample} is
    \begin{align*}
        \E[j]
        &={}
        \frac{1}{\lambda_{\mathrm{multi}}}
        \sum_{j=2}^{\infty}
        \frac{j\lambda_{\mathrm{single}}^j}{j!}\\
        &={}
        \frac{
            \lambda_{\mathrm{single}}
            (e^{\lambda_{\mathrm{single}}}-1)
        }{
            e^{\lambda_{\mathrm{single}}}
            -1-\lambda_{\mathrm{single}}
        }=
        \mathcal O(1)
    \end{align*}
    when $\lambda_{\mathrm{single}}$ is bounded by a constant. Here we used
    $e^x-1-x\ge x^2/2$ and $e^x-1=\mathcal O(x)$ for bounded $x\ge0$.
    By \lem{BCH-sample}, each call has
    cost polynomial in $N$, $\Gamma$, and $q_0$. This proves the
    expected classical-cost bound.
\end{proof}

The same BCH-term sampler gives the unpaired construction by treating every
term in the Taylor expansion through the parameter-shift representation.

\begin{proof}[Proof of \thm{unpaired-hncc}]
    Sample an integer $j\ge0$ with probability
    \(
        {\lambda_{\mathrm{single}}^j}/({
            j!\lambda_{\mathrm{unpaired}}}).
    \)
    For $j=0$, return $(1,\mc I,0)$. Otherwise, independently sample
    $q_1,\ldots,q_j$ with probabilities
    $\lambda_q/(m\lambda_{\mathrm{single}})$ and sample the corresponding
    BCH terms exactly as in \algo{HNCC-multi}. Set
    $s=q_1+\cdots+q_j$.

    For fixed $j$ and $q_1,\ldots,q_j$, the parameter-shift identity and
    \lem{BCH-sample} give
    \begin{align*}
        \E[\eta\mc N\mid j,q_1,\ldots,q_j]
        =
        \biggl(
            \prod_{r=1}^j\frac{1}{\lambda_{q_r}}
        \biggr)
        \prod_{r=1}^j\Phi_{q_r}(x).
    \end{align*}
    The joint probability of these orders is
    \begin{align*}
        \frac{\lambda_{\mathrm{single}}^j}{
            j!\lambda_{\mathrm{unpaired}}}
        \prod_{r=1}^j
        \frac{\lambda_{q_r}}{m\lambda_{\mathrm{single}}}
        =
        \frac{1}{m^jj!\lambda_{\mathrm{unpaired}}}
        \prod_{r=1}^j\lambda_{q_r}.
    \end{align*}
    Including the identity term and averaging over all orders therefore
    gives
    \begin{align*}
        \E[\eta\mc N]
        &={}
        \frac{1}{\lambda_{\mathrm{unpaired}}}
        \sum_{j=0}^{\infty}
        \frac{1}{m^jj!}
        \biggl(
            \sum_{q=K+1}^{q_0}\Phi_q(x)
        \biggr)^j\\
        &={}
        \frac{1}{\lambda_{\mathrm{unpaired}}}
        \exp\biggl(
            \frac{1}{m}\sum_{q=K+1}^{q_0}\Phi_q(x)
        \biggr).
    \end{align*}

    The support argument in the proof of \thm{hncc} shows that the sampled
    order-$q_r$ Pauli rotation uses $\mathcal O(kq_r)$ gates. Hence every
    nonidentity output uses
    $\mathcal O(k\sum_{r=1}^j q_r)=\mathcal O(ks)$ gates. These are
    $\pi/4$ Pauli rotations and therefore Clifford gates. Finally,
    \begin{align*}
        \E[j]
        =
        \frac{1}{\lambda_{\mathrm{unpaired}}}
        \sum_{j=1}^{\infty}
        \frac{j\lambda_{\mathrm{single}}^j}{j!}
        =
        \lambda_{\mathrm{single}}.
    \end{align*}
    Thus, when $\lambda_{\mathrm{single}}$ is bounded by a constant, the
    expected number of BCH samples is $\mathcal O(1)$. The cost bound in
    \lem{BCH-sample} then proves the expected classical-cost bound.
\end{proof}

\subsection{BCH-term sampling}
We describe a sampler for a normalized Pauli commutator whose expected adjoint
action matches the BCH term of a general set of $k$-local Hamiltonians $\{H_v\}_{v=1}^M$ with Pauli decompositions $H_v=\sum_{\gamma}\alpha_{v, \gamma} P_{v, \gamma}$ such that $\alpha_{v,\gamma}\in\mathbb{R}$ and $\{\alpha_{v,\gamma}P_{v,\gamma}\}_{v,\gamma}$ is $\tilde g$-extensive. Identity Pauli terms are omitted, since they vanish under commutators. Define $h_{v,\gamma} = \alpha_{v,\gamma}P_{v,\gamma}$ such that $H_v = \sum_{\gamma} h_{v,\gamma}$. 

The BCH term $\Phi_q$ of order $q$ is given by
\begin{widetext}
\begin{align}
    \Phi_q &= \sum_{\substack{p_v \ge 0,\\p_1+\cdots+p_{M} = q}}\frac{1}{\prod_{v=1}^{M} p_v!} \sum_{\sigma\in S_q} c_{\sigma, q}\mc R_{\sigma}\{[\underbrace{-\i \ad_{{H}_{M}}, \ldots, -\i \ad_{{H}_{M}}}_{p_{M}}, \ldots, \underbrace{-\i \ad_{{H}_{1}}, \ldots, -\i \ad_{{H}_{1}}}_{p_1}]\}\nonumber\\
    &=\sum_{\substack{p_v \ge 0,\\p_1+\cdots+p_{M} = q}}\frac{(- 1)^{q} }{\prod_{v=1}^{M} p_v!} \sum_{\sigma\in S_q} c_{\sigma, q}\sum_{w_{1,1}, \ldots, w_{M,p_M}}\ad_{\mc R_{\sigma}\{B_{\vec{p},\vec{w}}\}}, \label{eq:BCH-expression}
\end{align}
where the ordered commutator $B_{\vec{p},\vec{w}}$ is defined as 
\begin{align*}
    B_{\vec{p}, \vec{w}} := \big[ \underbrace{\i {h}_{M, w_{M, 1}}, \ldots, \i {h}_{M, w_{M, p_M}}}_{p_{M}}, \ldots, \underbrace{\i {h}_{1, w_{1, 1}}, \ldots, \i {h}_{1, w_{1, p_1}}}_{p_1} \big].
\end{align*} 
\end{widetext}
The second equality in \eq{BCH-expression} follows from substituting the Pauli decomposition of ${H}_v$ and applying the Jacobi identity. The index vectors $\vec{p}=(p_1, \ldots, p_M)$ and $\vec{w} = (w_{1, 1}, \ldots, w_{M, p_M})$ denote the multiplicity of each ${H}_v$ and the specific Pauli indices in the decomposition of ${H}_v$, respectively.

Rewrite the sum in $\Phi_q$ in a form suitable for sampling. Denote the commutator $\mc{R}_{\sigma}\{B_{\vec{p}, \vec{w}}\}$ by 
\begin{align}
\label{eq:permutation-equation}
    C_{\vec{v}, \vec{\gamma}}=  [\i h_{v_q, \gamma_q}, \ldots, \i h_{v_1, \gamma_1}]=\mc{R}_{\sigma}\{B_{\vec{p}, \vec{w}}\},
\end{align}
where $\vec{v} = (v_1, \ldots, v_q)$ and $\vec{\gamma} = (\gamma_1, \ldots, \gamma_q)$ are the indices of the terms in the commutator after permutation. 
Given the multiplicity vector $\vec{p}$ and the index vector $\vec{w}$, the permutation $\sigma$ uniquely determines $\vec{v}$ and $\vec{\gamma}$. Conversely, given a sequence $\vec{v}$, the multiplicity $\vec{p}$ of the Hamiltonians $\{{H}_v\}_{v=1}^M$ is uniquely determined. For each $v\in[M]$, let
\begin{align*}
    B_v
    =
    \biggl\{
        \sum_{u=1}^{v-1}p_u+1,\ldots,
        \sum_{u=1}^{v}p_u
    \biggr\}
\end{align*}
be the positions occupied by terms from $H_v$ before applying
$\mc R_\sigma$. Since the term at position $r$ after reordering comes from
position $\sigma(r)$, the consistent permutations are
\begin{align}
\label{eq:def-sigma}
    \Sigma_{\vec v}
    =
    \{\sigma\in S_q:\sigma(r)\in B_{v_r},\ \forall r\in[q]\}.
\end{align}
The cardinality of this set is precisely $|\Sigma_{\vec{v}}| = \prod_{v=1}^M p_v!$.
Given $\vec v$, $\vec\gamma$, and $\sigma\in\Sigma_{\vec v}$, the original
Pauli indices are recovered from
\begin{align*}
    w_{v,j}
    =
    \gamma_{\sigma^{-1}(\sum_{u=1}^{v-1}p_u+j)}.
\end{align*}
Consequently, the summation over $\vec{p}$ and $\sigma \in S_q$ in \eqref{eq:BCH-expression} can be reindexed as a summation over all possible sequences $\vec{v} \in [M]^q$ and their respective consistent permutations $\sigma \in \Sigma_{\vec{v}}$:
\begin{align}
\label{eq:BCH-term}
    \Phi_q = \sum_{v_1, \ldots, v_{q}=1}^{M} \frac{(-1)^q}{|\Sigma_{\vec{v}}|} \sum_{\sigma \in \Sigma_{\vec{v}}} c_{\sigma, q} \sum_{\gamma_1, \ldots, \gamma_q} \ad_{C_{\vec{v}, \vec{\gamma}}}.
\end{align}

\algo{BCH-sampling} samples a BCH term. For this routine and the light-cone
sampler below, write
\[
\begin{aligned}
    \mathsf{CommData}:=\bigl(
    \{(\alpha_{v,\gamma},P_{v,\gamma})\}_{v\in[M],\gamma},M,N,k,\tilde g
    \bigr).
\end{aligned}
\]
The grouped index $v$ records the exponential factor from which a Pauli term comes. The sampling routines use this term list directly and do not require the aggregated Hamiltonians as input.
Our goal is to sample the normalized commutator $\operatorname{sgn}(c_{\sigma, q})(- 1)^{q} \frac{C_{\vec{v}, \vec{\gamma}}}{\|C_{\vec{v}, \vec{\gamma}}\|}$ with probability
\begin{align}
\label{eq:prob-comm}
    \frac{\|C_{\vec{v}, \vec{\gamma}}\|}{ q!(2k\tilde g)^{q-1}N\tilde g}\frac{q^2|c_{\sigma,q}|}{\prod_{v=1}^{M} p_v!}.
\end{align}
The algorithm generates samples according to \eq{prob-comm} by first drawing a nested commutator with the light-cone sampler (\algo{LC_sampling}) and then sampling a consistent permutation. The final rejection step is valid because \eq{def-c-sigma} implies $q^2|c_{\sigma,q}|\le 1$.

\begin{algorithm}[htbp]
\caption{Sampling for a BCH term (\textsf{BCHSample})}
\label{algo:BCH-sampling}
\KwInput{Commutator-sampling data $\mathsf{CommData}$, order $q$.}
\KwOutput{Sign $\tilde{\eta}$ and operator $C$.}
$res \gets \textsf{LCSample}(\mathsf{CommData},q)$\label{lin:lcs}\;
Parse $res$ as index vector $\vec{v}=(v_1, \ldots, v_q)$, operator $C$\;
\lIf{$C = 0$}{\Return $(1,0)$}
For each $v \in [M]$, $p_v \gets |\{ u \in [q] : v_u = v \}|$\;
Sample a permutation $\sigma \in S_q$ uniformly from $\Sigma_{\vec{v}}$ defined in \eq{def-sigma} \label{lin:sample-permutation}\;
$\tilde{\eta}\gets \operatorname{sgn}(c_{\sigma,q}) (-1)^q$\;
With probability $1-q^2|c_{\sigma, q}|$ \Return $(1, 0)$\label{lin:sample-adP}\;
\Return $(\tilde{\eta}, C)$\;
\end{algorithm}

\begin{lemma}
\label{lem:BCH-sample}
    Under the assumptions above, for any integer $q\ge1$,
$\textsf{BCHSample}(\mathsf{CommData},q)$ samples a sign $\tilde{\eta}\in \{-1,1\}$ and an operator $C\in \pm\i\mathcal{P}_N\cup\{0\}$ such that
    \begin{align*}
         &\mathbb{E}\Bigl[\tilde{\eta} \frac{1}{2}\ad_{C}\Bigr] = \frac{q^2}{q!(2k\tilde g)^{q}k^{-1}N}\Phi_{q}=\frac{1}{\lambda_q}\Phi_q, \\
         &\lambda_q:=\frac{q!(2k\tilde g)^qk^{-1}N}{q^2}.
    \end{align*}
    Its classical running time is polynomial
    in $N$, $q$, and the length of the Pauli-term list in $\mathsf{CommData}$.
\end{lemma}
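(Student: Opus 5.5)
The proof rests on a guarantee for \textsf{LCSample} (\lem{lcsample}, referenced earlier). We will assume it has the following form: on input $q$, it returns an index vector $\vec v$ and an operator $C$. The pair $(\vec v,\vec\gamma)$ is drawn with probability
\[
\frac{\|C_{\vec v,\vec\gamma}\|}{q!(2k\tilde g)^{q-1}N\tilde g},
\]
and the returned operator is $C=C_{\vec v,\vec\gamma}/\|C_{\vec v,\vec\gamma}\|\in\pm\i\mathcal P_N$ when the commutator is nonzero. Any remaining probability mass, which exists because \lem{doubly-nested-bound} with $d=1$ gives $\sum\|C_{\vec v,\vec\gamma}\|\le q!(2k\tilde g)^q N/(2k)$, is assigned to the output $C=0$.

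Normalization of $C$ is immediate. By the Pauli closure argument of \sec{trotter-lcqc}, a nonzero right-nested commutator of $\i\alpha P$ terms equals $\pm 2^{q-1}\prod_j|\alpha_j|\,\i P$ for some Pauli $P$. Dividing by its norm therefore leaves an element of $\pm\i\mathcal P_N$.

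The main step is to verify the expectation. Condition on the light-cone output $(\vec v,\vec\gamma)$ with $C\neq0$. Line~\ref{lin:sample-permutation} draws $\sigma$ uniformly from $\Sigma_{\vec v}$, which has probability $1/\prod_v p_v!$. The rejection at Line~\ref{lin:sample-adP} then keeps the sample with probability $q^2|c_{\sigma,q}|$; this is a valid probability because $q^2|c_{\sigma,q}|=1/\binom{q-1}{d_\sigma}\le1$ by \eq{def-c-sigma}. Multiplying these factors gives exactly the joint probability \eq{prob-comm} for the output $\tilde\eta\,C$ with $\tilde\eta=\sgn(c_{\sigma,q})(-1)^q$. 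Summing over all outputs, we obtain
\begin{align*}
\E\Bigl[\tilde\eta\tfrac12\ad_C\Bigr]
&=\sum_{\vec v,\vec\gamma}\sum_{\sigma\in\Sigma_{\vec v}}
\frac{\|C_{\vec v,\vec\gamma}\|\,q^2|c_{\sigma,q}|}{q!(2k\tilde g)^{q-1}N\tilde g\,|\Sigma_{\vec v}|}
\,\sgn(c_{\sigma,q})(-1)^q\,\frac{\ad_{C_{\vec v,\vec\gamma}}}{2\|C_{\vec v,\vec\gamma}\|}\\
&=\frac{q^2}{2q!(2k\tilde g)^{q-1}N\tilde g}\,\Phi_q
=\frac{q^2 k}{q!(2k\tilde g)^{q}N}\,\Phi_q,
\end{align*}
where the middle equality compares term by term with \eq{BCH-term}. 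Outputs with $C=0$ contribute nothing. Tuples that the light-cone sampler never produces (because of disjoint supports) also contribute nothing, since their commutators vanish, so restricting the sampler to the light cone loses no terms of $\Phi_q$. The last expression equals $\Phi_q/\lambda_q$.

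The running-time claim requires three checks. First, sampling uniformly from $\Sigma_{\vec v}$ amounts to choosing, independently for each $v$, a uniformly random bijection between the positions $\{r:v_r=v\}$ and the block $B_v$; this takes $\mathcal O(q)$ time. Second, $d_\sigma$, and hence $c_{\sigma,q}$, is computable in $\mathcal O(q)$ time. Third, the light-cone sampler runs in polynomial time; we take this from \lem{lcsample}. The main obstacle is that last point, namely that the light-cone sampler's weights are exactly $\|C\|$ divided by the stated normalization, with the leftover mass placed on the zero output. If \lem{lcsample} did not state this exactly, we would show it by expanding the light-cone probabilities as products of $|\alpha|$ over overlapping terms and using the $\tilde g$-extensiveness bound on the remaining mass at each step.
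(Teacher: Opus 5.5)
Your proposal is correct and follows the paper's proof essentially line for line. You invoke \lem{lcsample} for the light-cone distribution, multiply by the uniform-permutation probability $1/|\Sigma_{\vec v}|$ and the acceptance probability $q^2|c_{\sigma,q}|$ to recover \eq{prob-comm}, and match term by term against \eq{BCH-term}, where $2(2k\tilde g)^{q-1}\tilde g=(2k\tilde g)^q/k$ yields $1/\lambda_q$.

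One small slip: \lem{lcsample} states the output distribution of \textsf{LCSample}, not its running time. The polynomial cost instead follows, as in the paper, by direct inspection of the algorithm: it has $q$ rounds, and each round scans the term list and performs $\mathcal{O}(N)$ Pauli-string and support operations.
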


\begin{proof}
    By \lem{lcsample}, if the output of $\textsf{LCSample}$ in \lin{lcs} corresponds to the sampled indices $(\vec{v}, \vec{\gamma})$ such that $C_{\vec{v}, \vec{\gamma}} \neq 0$, it returns the normalized operator
    \begin{align*}
        C = \frac{[\i h_{v_q, \gamma_q}, \ldots, \i h_{v_1, \gamma_1}]}{\|[\i h_{v_q, \gamma_q}, \ldots, \i h_{v_1, \gamma_1}]\|} = \frac{C_{\vec{v}, \vec{\gamma}}}{\|C_{\vec{v}, \vec{\gamma}}\|}
    \end{align*}
    with joint probability $p(\vec{v}, \vec{\gamma}) = \frac{\|C_{\vec{v}, \vec{\gamma}}\|}{ q!(2k\tilde g)^{q-1}N\tilde g}$. 

    In \lin{sample-permutation}, a permutation $\sigma$ is sampled uniformly from $\Sigma_{\vec{v}}$ with probability $1/|\Sigma_{\vec{v}}|$ and then accepted with probability $q^2|c_{\sigma, q}|$ to form the final valid sample. The algorithm assigns the sign $\tilde{\eta} = \operatorname{sgn}(c_{\sigma, q})(-1)^{q}$.
    
    Taking the expectation over all random choices of the path sequence $(\vec{v}, \vec{\gamma})$ and the permutation $\sigma$, and noting that samples with $C=0$ contribute $0$ to the expectation, we have
    \begin{align}
        &\mathbb{E}_{\vec{v}, \vec{\gamma}, \sigma}\Bigl[\tilde{\eta} \frac{1}{2}\ad_{C}\Bigr] \nonumber\\
		={}& \sum_{\vec{v}, \vec{\gamma}} p(\vec{v}, \vec{\gamma})
		\sum_{\sigma\in \Sigma_{\vec{v}}}
		\frac{q^2|c_{\sigma, q}|}{|\Sigma_{\vec{v}}|}
		\operatorname{sgn}(c_{\sigma, q})(-1)^{q}
        \frac{1}{2}\ad_{C_{\vec{v}, \vec{\gamma}}/\|C_{\vec{v}, \vec{\gamma}}\|} \nonumber\\
		={}& \begin{multlined}
        \frac{q^2}{q!(2k\tilde g)^{q-1}(2\tilde gN)}
        \\\sum_{v_1, \ldots,v_{q}=1}^M \frac{(-1)^{q}}{|\Sigma_{\vec{v}}|}
        \sum_{\sigma\in \Sigma_{\vec{v}}} c_{\sigma, q}  \sum_{\gamma_1, \ldots, \gamma_q}
        \ad_{C_{\vec{v}, \vec{\gamma}}}. \label{eq:expect-BCH}
        \end{multlined}
    \end{align}
    Substituting the expression for $\Phi_q$ from Eq.~\eqref{eq:BCH-term} into Eq.~\eqref{eq:expect-BCH}, we conclude that
    \begin{align*}
        \mathbb{E}\Bigl[\tilde{\eta} \frac{1}{2}\ad_{C}\Bigr] &= \frac{q^2}{q!(2k\tilde g)^{q}k^{-1}N}\Phi_{q}=\frac{1}{\lambda_q}\Phi_q.
    \end{align*}
    The call to \textsf{LCSample} has $q$ sequential steps. With direct
    candidate-set updates, each step scans the Pauli-term list and performs
    polynomial-time support and Pauli-string operations. Counting the
    multiplicities $p_v$, sampling a permutation in $\Sigma_{\vec v}$, and
    evaluating $c_{\sigma,q}$ also take polynomial time in the input size and
    $q$. This proves the classical running-time bound.
\end{proof}

\subsection{Sampling commutators within the light-cone}
\label{sec:lcsample}
Algorithm~\ref{algo:LC_sampling} uses locality and rejection sampling to
obtain the required commutator distribution.

\begin{algorithm}[htbp]
\caption{Light-cone Commutator Sampling (\textsf{LCSample})}
\label{algo:LC_sampling}
\KwInput{Commutator-sampling data $\mathsf{CommData}$, order $q$.}
\KwOutput{Vector $\vec{v}\in[M]^{q}$ and operator $C$.}

Calculate total weight $W_{\text{tot}} \gets \sum_{v=1}^M\sum_{\gamma} |\alpha_{v, \gamma}|$ \;
Sample a tuple of indices $(v_1,\gamma_1)$ with probability $|\alpha_{v_1, \gamma_1}|/W_{\text{tot}}$\;
With probability $W_{\text{tot}}/(N \tilde g)$, \textbf{proceed}; otherwise \Return $((1,\ldots, 1),0)$\;

Initialize vector $\vec{v} \gets (v_1)$ and operator $C \gets \i\sgn(\alpha_{v_1,\gamma_1})P_{v_1, \gamma_1}$\;
Initialize candidates $\mathcal{A} \gets \{(v,\gamma) : \supp(P_{v,\gamma}) \cap \supp(P_{v_1,\gamma_1}) \neq \emptyset\}$\;

\For{$j \gets 2$ \textbf{to} $q$}{
    Calculate local weight $W_{\mathcal{A}} \gets \sum_{(v,\gamma) \in \mathcal{A}} |\alpha_{v, \gamma}|$\;
    Sample $(v_j,\gamma_j) \in \mathcal{A}$ with probability $|\alpha_{v_j, \gamma_j}|/W_{\mathcal{A}}$\;
    With probability $1-W_{\mathcal{A}}/(jk\tilde g)$ \Return $((1,\ldots, 1),0)$\; 
    \eIf{$[\i\sgn(\alpha_{v_j,\gamma_j})P_{v_j, \gamma_j}, C] \neq 0$}{
        $C \gets [\i\sgn(\alpha_{v_j,\gamma_j})P_{v_j, \gamma_j}, C]/2$\;
    }{
        \Return $((1,\ldots, 1),0)$\;
    }
    Append $v_j$ to $\vec{v}$\;
    $\mathcal{A} \gets \mathcal{A} \cup \{(v,\gamma): \supp(P_{v,\gamma}) \cap \supp(P_{v_j, \gamma_j}) \neq \emptyset\}$\;
}
\Return $(\vec{v},  C)$\;
\end{algorithm}

\begin{lemma}
\label{lem:lcsample}
    Under the same assumptions as in \lem{BCH-sample}, $\textsf{LCSample}(\mathsf{CommData},q)$ outputs an operator \begin{align*}
        C= \frac{[\i\sgn(\alpha_{v_q,\gamma_q})P_{v_q, \gamma_q}, \ldots, \i\sgn(\alpha_{v_1,\gamma_1})P_{v_1, \gamma_1}]}{2^{q-1}}
    \end{align*}
    in $\pm\i \mathcal{P}_N \cup \{0\}$ and the index vector \begin{align*}
        \vec{v}=\begin{cases}
         (v_1, \ldots, v_q), & \text{if } C\in \pm\i \mathcal{P}_N, \\
        (1,\ldots, 1), &\text{if } C=0,
        \end{cases}
    \end{align*} with probability \begin{align*}
        \frac{\|[\i\alpha_{v_q, \gamma_q} P_{v_q,\gamma_q}, \ldots, \i \alpha_{v_1, \gamma_1}P_{v_1,\gamma_1}]\|}{ q!(2k\tilde g)^{q-1}N\tilde g}.
    \end{align*}
\end{lemma}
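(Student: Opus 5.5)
We argue by tracking the probability of a fixed index path $(\vec v,\vec\gamma)=((v_1,\gamma_1),\ldots,(v_q,\gamma_q))$ through \algo{LC_sampling}, showing that every surviving path is produced with probability equal to the product of its local acceptance factors, and that this product telescopes to the stated expression. First, check that all acceptance probabilities are valid. In the first step we need $W_{\mathrm{tot}}\le N\tilde g$. This holds because summing the $\tilde g$-extensiveness condition over all sites counts each term at least once, since identity terms are omitted. At step $j$ we need $W_{\mathcal A}\le jk\tilde g$. The candidate set $\mathcal A$ consists of terms meeting $\bigcup_{r<j}\supp(P_{v_r,\gamma_r})$, a set of at most $(j-1)k$ sites. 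By extensiveness, $W_{\mathcal A}\le (j-1)k\tilde g\le jk\tilde g$.

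Next, compute the probability that the algorithm passes through a given path without early rejection. The first index is drawn with probability $|\alpha_{v_1,\gamma_1}|/W_{\mathrm{tot}}$ and accepted with probability $W_{\mathrm{tot}}/(N\tilde g)$, giving $|\alpha_{v_1,\gamma_1}|/(N\tilde g)$. At step $j$, the index is drawn with probability $|\alpha_{v_j,\gamma_j}|/W_{\mathcal A}$ and kept with probability $W_{\mathcal A}/(jk\tilde g)$, giving $|\alpha_{v_j,\gamma_j}|/(jk\tilde g)$. The $W$'s cancel at every step, which is the key point. Multiplying the factors yields
\[
\frac{\prod_{j=1}^q|\alpha_{v_j,\gamma_j}|}{q!\,(k\tilde g)^{q-1}N\tilde g}.
\]

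It then remains to relate this product to the norm. When the nested commutator is nonzero, each successive commutator of Paulis is nonzero and equals $\pm 2$ times a Pauli. Hence the running $C$, divided by $2$ at each step, stays in $\pm\i\mathcal P_N$ and equals the stated normalized nested commutator. Moreover,
\[
\|[\i\alpha_{v_q,\gamma_q}P_{v_q,\gamma_q},\ldots]\|=2^{q-1}\prod_j|\alpha_{v_j,\gamma_j}|.
\]
Substituting gives exactly $\|\cdot\|/(q!(2k\tilde g)^{q-1}N\tilde g)$.

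We must also show that paths with nonzero commutator are never excluded by the candidate restriction, so that their probability is not lost. If $P_{v_j,\gamma_j}$ is disjoint from the union of the previous supports, it commutes with the previous nested commutator, whose support lies in that union. Such a path therefore has zero commutator, and any path with a nonzero commutator stays inside $\mathcal A$ at every step. Paths that are rejected, or whose commutator vanishes at some step, return $((1,\ldots,1),0)$, which matches the $C=0$ case of the statement. Because the outputs of distinct index paths are recorded by $\vec v$ and $C$, the statement is understood per path: each nonzero output arises with the stated probability. The main subtlety I expect is the bookkeeping that $\mathcal A$ at step $j$ depends only on the earlier indices. This makes the conditional probabilities well defined, and the acceptance bound $W_{\mathcal A}\le jk\tilde g$ needs the support-union size bound rather than a bound on the current commutator's support.
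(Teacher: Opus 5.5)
Your proposal is correct and follows essentially the same route as the paper's proof. Both arguments track a fixed index path, let the $W$ normalizations cancel to give $\frac{|\alpha_{v_1,\gamma_1}|}{N\tilde g}\prod_{j=2}^{q}\frac{|\alpha_{v_j,\gamma_j}|}{jk\tilde g}$, and use $\|[\cdot]\|=2^{q-1}\prod_j|\alpha_{v_j,\gamma_j}|$ for nonzero Pauli nested commutators; your union bound $(j-1)k$ is cruder than the paper's $k+(j-2)(k-1)$ but suffices, and your justifications of $W_{\mathrm{tot}}\le N\tilde g$ and of why paths leaving the candidate set have zero commutator fill in steps the paper leaves implicit.
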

\begin{proof}
For two Pauli operators $P,Q$, $\|[P, Q]\|$ equals $ 2\|P\|\|Q\|$ if they do not commute, and equals $0$ otherwise; hence $\|[\i\sgn(\alpha_{v_q,\gamma_q})P_{v_q,\gamma_q}, \ldots, \i\sgn(\alpha_{v_1,\gamma_1})P_{v_1,\gamma_1}]\| = 2^{q-1}$ if it is non-zero. Therefore, the probability stated in the lemma is
\begin{align*}
    &\frac{\|[\i\alpha_{v_q, \gamma_q} P_{v_q,\gamma_q}, \ldots, \i \alpha_{v_1, \gamma_1}P_{v_1,\gamma_1}]\|}{ q!(2k\tilde g)^{q-1}N\tilde g}\\
    ={}&\begin{cases}
        0 & \text{ if }\exists j\in[q], [P_{v_j, \gamma_j}, \ldots, P_{v_1,\gamma_1}] = 0 \\
         \frac{\prod_{j=1}^q  |\alpha_{v_j, \gamma_j}|}{q!(k\tilde g)^{q-1}N\tilde g} & \text{ else }.
    \end{cases}
\end{align*}
We verify the marginal probability produced by the rejection steps. 
In the first step, $v_1$ is chosen with probability $|\alpha_{v_1, \gamma_1}| / W_{\text{tot}}$ and accepted with probability $W_{\text{tot}} / (N \tilde g)$. The final probability is
\begin{align*}
    p((v_1,\gamma_1)) = \frac{|\alpha_{v_1, \gamma_1}|}{W_{\text{tot}}} \cdot \frac{W_{\text{tot}}}{N \tilde g} = \frac{|\alpha_{v_1, \gamma_1}|}{N \tilde g}.
\end{align*}
This step is valid because $W_{\text{tot}} \le N\tilde g $, ensuring the acceptance probability is at most 1.

In step $j$, the term $v_j$ is sampled from the candidate set $\mathcal{A}$ with probability $|\alpha_{v_j, \gamma_j}| / W_{\mathcal{A}}$ and accepted with probability $W_{\mathcal{A}} / (jk\tilde g)$. The conditional probability is
\begin{align*}
    p((v_j,\gamma_j) \mid (v_{1},\gamma_1),\ldots, (v_{j-1},\gamma_{j-1})) &= \frac{|\alpha_{v_j, \gamma_j}|}{W_{\mathcal{A}}} \cdot \frac{W_{\mathcal{A}}}{jk\tilde g} \\
    &= \frac{|\alpha_{v_j, \gamma_j}|}{jk\tilde g}.
\end{align*}
The candidate set is supported on the union of the first $j-1$ sampled
supports, whose size is at most $k+(j-2)(k-1)$. Hence
$W_{\mathcal A}\le (k+(j-2)(k-1))\tilde g\le jk\tilde g$, so the acceptance
probability is valid.

Multiplying these probabilities, the sequence $((v_1,\gamma_1), \ldots, (v_q,\gamma_q))$ is generated with probability
\begin{align*}
    &p((v_1,\gamma_1), \ldots, (v_q,\gamma_q))\\
    ={}& p((v_1,\gamma_1)) \prod_{j=2}^q p((v_j,\gamma_j) \mid (v_{1},\gamma_1),\ldots, (v_{j-1},\gamma_{j-1})) \\
    ={}& \frac{|\alpha_{v_1, \gamma_1}|}{N \tilde g} \prod_{j=2}^q \frac{|\alpha_{v_j, \gamma_j}|}{jk\tilde g} \\
    ={}&  \frac{\prod_{j=1}^q  |\alpha_{v_j, \gamma_j}|}{q!(k\tilde g)^{q-1}N\tilde g}.
\end{align*}
This is the probability claimed in the lemma.
    
\end{proof}

\subsection{Connected-cluster preprocessing}
\label{sec:connected-cluster-preprocessing}

For finite systems, we explicitly construct the Pauli decomposition of the
truncated BCH generator and merge identical Pauli contributions before the
Taylor expansion used in the numerical estimates. This preprocessing is more
expensive than the light-cone sampler used in the asymptotic analysis. The
bound below concerns only this preprocessing. The subsequent Taylor expansion
and conversion to an LCQC are performed separately.

We adapt the standard linked-cluster subtraction to the truncated BCH
generator; see, e.g., Ref.~\cite{TANG2013557}. Throughout this subsection, if
an operator-valued polynomial has the Pauli expansion
\begin{align}
\label{eq:pauli-table}
    A(x)
    =
    \sum_{q=0}^{q_0}
    \sum_{P\in\mathcal P_N}
    a_{q,P}x^qP,
\end{align}
we call the sparse coefficient array \(\{a_{q,P}\}_{q,P}\), with only its
nonzero entries stored, the \emph{Pauli table} of \(A(x)\). Addition and
subtraction of Pauli tables are performed coefficientwise, so entries with
the same degree \(q\) and Pauli string \(P\) are merged by adding their
coefficients. For a formal power series \(F(x)=\sum_{q\ge0}F_qx^q\), we write
\([F(x)]_{\le q_0}:=\sum_{q=0}^{q_0}F_qx^q\).

Let
\begin{align*}
    \mathfrak I
    =
    \{(v,\gamma):
    \tilde h_{v,\gamma}
    \text{ appears in }\eq{decompose-tilde-H}\}.
\end{align*}
The same Hamiltonian Pauli term is treated as a distinct element of
\(\mathfrak I\) when it appears in different product-formula stages. Define
the overlap graph \(G_{\rm ov}=(\mathfrak I,E)\) by placing an edge between
distinct vertices \((v,\gamma)\) and \((v',\gamma')\) whenever
\begin{align*}
    \supp(\tilde h_{v,\gamma})
    \cap
    \supp(\tilde h_{v',\gamma'})
    \ne\emptyset.
\end{align*}
A cluster is a subset \(C\subseteq\mathfrak I\), and it is connected if the
induced subgraph \(G_{\rm ov}[C]\) is connected.
For \(S\subseteq\mathfrak I\), let
\begin{align*}
    V_S(x)
    =
    \prod_{v=1}^{M}
    \exp\biggl(
        \sum_{\gamma:(v,\gamma)\in S}
        -\i\tilde h_{v,\gamma}
    \biggr),
\end{align*}
with the product order of \eq{def-VK-BCH}. Stages containing no term from
\(S\) contribute the identity. Define
\begin{align*}
    &\mc V_S(x)(\rho)
    =
    V_S(x)\rho V_S^\dagger(x),
    \\
    &\Phi(S;x)
    :=
    \log\mc V_S(x)
    =
    \sum_{q\ge1}\Phi_q(S;x).
\end{align*}
Since \(\mc V_S(x)=\exp(\ad_{\log V_S(x)})\), we have, as formal power
series,
\begin{align}
\label{eq:operator-channel-log}
    \Phi(S;x)
    =
    \ad_{\log V_S(x)}.
\end{align}
By construction, \(\Phi(\mathfrak I;x)=\Phi(x)\), and hence
\(\Phi_q(\mathfrak I;x)=\Phi_q(x)\) for every \(q\ge1\).
Define
\begin{align}
\label{eq:cluster-weight-recursion}
    \mc W(\emptyset;x)
    &=
    0,
    \nonumber\\
    \mc W(S;x)
    &=
    \Phi(S;x)
    -
    \sum_{T\subsetneq S}\mc W(T;x),
    \qquad S\ne\emptyset.
\end{align}
Equivalently,
\begin{align}
\label{eq:cluster-inversion-forward}
    \Phi(S;x)
    =
    \sum_{T\subseteq S}\mc W(T;x).
\end{align}

\begin{algorithm}[H]
\caption{Connected-cluster preprocessing for the BCH generator}
\label{algo:cluster-preprocessing}
\KwInput{Stage Pauli terms
\(\{\tilde h_{v,\gamma}\}_{v,\gamma}\) from
\eq{decompose-tilde-H} and BCH truncation order \(q_0\).}
\KwOutput{Merged Pauli table representing an operator \(G(x)\) satisfying
\(\ad_{G(x)}=\sum_{q=K+1}^{q_0}\Phi_q(x)\).}
Set
\(\mathfrak I\gets
\{(v,\gamma):
\tilde h_{v,\gamma}\text{ appears in }\eq{decompose-tilde-H}\}\)\;
Build the overlap graph \(G_{\rm ov}\) on \(\mathfrak I\)\;
Initialize an empty output Pauli table\;
\For{\(r=1,\ldots,q_0\)}{
    \ForEach{connected cluster \(C\subseteq\mathfrak I\) with \(|C|=r\)}{
        Compute the Pauli table of
        \([\log V_C(x)]_{\le q_0}\)\;
        Subtract the stored tables of all connected proper subclusters
        \(T\subsetneq C\)\;
        Merge equal Pauli strings in the resulting table\;
        Store the resulting table under \(C\)\;
        Add the resulting table to the output Pauli table\;
    }
}
\Return the output Pauli table\;
\end{algorithm}

For each connected cluster \(C\), we compute
\([\log V_C(x)]_{\le q_0}\) as a formal power series in \(x\):
\begin{align*}
    [V_C(x)]_{\le q_0}
    &=
    \biggl[
    \prod_{v=1}^{M}
    \sum_{p=0}^{q_0}
    \frac{1}{p!}
    \biggl(
        \sum_{\gamma:(v,\gamma)\in C}
        -\i\tilde h_{v,\gamma}
    \biggr)^p
    \biggr]_{\le q_0},
    \\
    [\log V_C(x)]_{\le q_0}
    &=
    \biggl[
    \sum_{n=1}^{q_0}
    \frac{(-1)^{n+1}}{n}
    \bigl([V_C(x)]_{\le q_0}-I\bigr)^n
    \biggr]_{\le q_0}.
\end{align*}
All multiplications are performed on Pauli tables and truncated at degree
\(q_0\), with equal Pauli strings merged after each multiplication. The
resulting Pauli strings are supported on
\(\bigcup_{(v,\gamma)\in C}\supp(\tilde h_{v,\gamma})\).

\begin{proposition}
\label{prop:connected-cluster-preprocessing}
Let \(\mathfrak I\) be as above, and suppose that each
\(\tilde h_{v,\gamma}\) is supported on at most \(k\) qubits. Then the table
returned by \algo{cluster-preprocessing} represents an operator \(G(x)\)
satisfying
\begin{align*}
    \ad_{G(x)}
    =
    \sum_{q=K+1}^{q_0}\Phi_q(x).
\end{align*}
Let \(\Delta_{\rm ov}\ge1\) be an upper bound on the maximum degree of
\(G_{\rm ov}\). The number of classical arithmetic operations is at most
\begin{align*}
    |\mathfrak I|\,
    e^{\mathcal O(q_0\log(\Delta_{\rm ov}+1)+kq_0)}
    \poly(q_0).
\end{align*}
\end{proposition}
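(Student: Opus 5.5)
Correctness and cost are separate, so I will prove them in turn.

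\emph{Correctness.} The key step is a cluster-vanishing property: $\mc W(S;x)=0$ whenever $S$ is not connected in $G_{\rm ov}$. Suppose $S=S_1\sqcup S_2$ with no overlap edges between $S_1$ and $S_2$. Then every term indexed by $S_1$ commutes with every term indexed by $S_2$, since their supports are disjoint. In each stage the exponential factorizes, and the factors from $S_1$ commute past those from $S_2$. Hence $V_S(x)=V_{S_1}(x)V_{S_2}(x)$ with commuting factors, so
\[
\log V_S=\log V_{S_1}+\log V_{S_2}
\]
as formal power series, and therefore $\Phi(S;x)=\Phi(S_1;x)+\Phi(S_2;x)$.

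I will prove vanishing by induction on $|S|$ using the recursion \eq{cluster-weight-recursion}. For disconnected $S$, split the subsets $T\subseteq S$ according to $T=T_1\sqcup T_2$ with $T_i\subseteq S_i$. By induction, $\mc W(T)=0$ whenever both $T_1$ and $T_2$ are nonempty and $T\subsetneq S$, because such $T$ is itself disconnected. Inverting \eq{cluster-inversion-forward} on $S_1$ and on $S_2$ then gives
\[
\mc W(S)=\Phi(S)-\Phi(S_1)-\Phi(S_2)=0 .
\]

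Next I need a degree bound. A connected cluster $C$ contributes only in degree $q\ge|C|$. Each $\tilde h_{v,\gamma}$ carries one power of $x$, and by Möbius inversion $\mc W(C)$ is the part of $\Phi(C)$ whose monomials use every term of $C$ at least once. This is the standard linked-cluster fact; I would derive it by setting the variables of a missing term to zero and using inclusion–exclusion.

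Combining the two facts,
\[
\sum_{q\le q_0}\Phi_q(\mathfrak I)=\sum_{C\ \text{connected},\,|C|\le q_0}[\mc W(C)]_{\le q_0}.
\]
The algorithm stores exactly $[\log V_C]_{\le q_0}$ minus the tables of its connected proper subclusters. Disconnected subclusters have zero weight by the vanishing property, so this table equals $[\mc W(C)]_{\le q_0}$ expressed at the operator level via \eq{operator-channel-log}. The map $A\mapsto\ad_A$ is linear, so summing the stored tables gives an operator $G$ with $\ad_G=\sum_{q\le q_0}\Phi_q$. Terms with $q\le K$ vanish at the channel level: their operator contributions are multiples of the identity, which $\ad$ kills and which can be dropped.

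\emph{Cost.} First, the number of connected clusters of size $r$ containing a fixed vertex is at most $(e\Delta_{\rm ov})^{r}$, by the standard lattice-animal or tree-counting bound. Summing over $r\le q_0$ gives $|\mathfrak I|e^{\mathcal O(q_0\log(\Delta_{\rm ov}+1))}$ clusters.

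Second, for each cluster the support has at most $kq_0$ qubits. Each Pauli table therefore has at most $q_0\cdot 4^{kq_0}$ entries. The truncated products and the logarithm series take $\poly(q_0)$ table multiplications, each costing $e^{\mathcal O(kq_0)}$.

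Third, subtracting the subclusters of $C$ touches at most $2^{q_0}$ subsets, each costing $e^{\mathcal O(kq_0)}$. Multiplying these three bounds gives the stated count.

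The main obstacle is justifying rigorously that truncating clusters at size $q_0$ loses nothing, that is, that $\mc W(C)$ has no monomials of degree less than $|C|$. I would handle this by introducing separate formal parameters for each term and arguing that $\mc W(C)$ is divisible by all of them.
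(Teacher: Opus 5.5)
Your proposal is correct and follows essentially the same route as the paper. The main steps match:
\begin{itemize}
\item the vanishing of disconnected cluster weights, proved by induction via $\Phi(S)=\Phi(S_1)+\Phi(S_2)$;
\item the separate-parameter argument that every monomial of $\mc W(C)$ involves every vertex of $C$ (you obtain it directly from the inclusion--exclusion formula $\mc W(S)=\sum_{T\subseteq S}(-1)^{|S\setminus T|}\Phi(T)$, while the paper uses induction on $|S|$);
\item the operator-level identification $\ad_{G_C}=[\mc W(C)]_{\le q_0}$;
\item the cost estimate as cluster count times per-cluster cost (your $(e\Delta_{\rm ov})^r$ lattice-animal bound replaces the paper's $(4\Delta_{\rm ov})^{r-1}$ depth-first-search count).
\end{itemize}
One detail should be made explicit. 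Your claim of $\poly(q_0)$ table multiplications holds only if you multiply just the stage exponentials that contain a term of $C$, of which there are at most $|C|\le q_0$. The paper notes this; without it the product over all $M$ stages would add a factor of $M$.
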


\begin{proof}
We first prove by induction on \(|S|\) that \(\mc W(S;x)=0\) whenever \(S\)
is disconnected. Suppose that \(S\) has connected components
\(S^{(1)},\ldots,S^{(r)}\), where \(r\ge2\). Since the terms belonging to
different components have disjoint supports, their channels commute and may
be regrouped as
\begin{align*}
    \mc V_S(x)
    =
    \prod_{\alpha=1}^r\mc V_{S^{(\alpha)}}(x).
\end{align*}
Taking the formal logarithm gives
\begin{align*}
    \Phi(S;x)
    =
    \sum_{\alpha=1}^r\Phi(S^{(\alpha)};x).
\end{align*}
By the induction hypothesis, every disconnected proper subset of \(S\) has
zero weight, while every connected subset of \(S\) is contained in one of
the components. Therefore,
\begin{align*}
    \sum_{T\subsetneq S}\mc W(T;x)
    &=
    \sum_{\alpha=1}^r
    \sum_{T\subseteq S^{(\alpha)}}\mc W(T;x)
    \\
    &=
    \sum_{\alpha=1}^r\Phi(S^{(\alpha)};x)
    =
    \Phi(S;x),
\end{align*}
where the second equality follows from
\eq{cluster-inversion-forward}. Substituting this equality into
\eq{cluster-weight-recursion} gives \(\mc W(S;x)=0\).

We next show that every monomial in \(\mc W(S;x)\) contains at least one
factor associated with each vertex of \(S\). For each
\(a=(v,\gamma)\in S\), replace \(\tilde h_{v,\gamma}\) by
\(z_a\tilde h_{v,\gamma}\) in the definition of \(V_T(x)\) for every
\(T\subseteq S\). Denote the resulting quantities by
\(\Phi(T;x,\bm z)\) and \(\mc W(T;x,\bm z)\), with
\(\mc W(T;x,\bm z)\) defined by the recursion
\eq{cluster-weight-recursion}.

We prove the claim by induction on \(|S|\). Fix a monomial involving exactly
the variables indexed by \(R\subsetneq S\). Setting \(z_a=0\) for every
\(a\in S\setminus R\) reduces \(V_S(x)\) to \(V_R(x)\), so the coefficient
of the fixed monomial in \(\Phi(S;x,\bm z)\) is the same as its coefficient
in \(\Phi(R;x,\bm z)\). By the induction hypothesis, every monomial in
\(\mc W(T;x,\bm z)\) contains all variables indexed by \(T\). Hence the
fixed monomial can occur in \(\mc W(T;x,\bm z)\) only if \(T\subseteq R\).

Taking the coefficient of the fixed monomial in
\eq{cluster-weight-recursion}, its coefficient in
\(\mc W(S;x,\bm z)\) equals its coefficient in \(\Phi(S;x,\bm z)\) minus
the sum of its coefficients in \(\mc W(T;x,\bm z)\) over
\(T\subsetneq S\). The first coefficient is the same as in
\(\Phi(R;x,\bm z)\), and all terms with \(T\not\subseteq R\) vanish by the
induction hypothesis. Since \(R\subsetneq S\), the remaining coefficient is
therefore the coefficient of the fixed monomial in
\[
    \Phi(R;x,\bm z)
    -
    \sum_{T\subseteq R}\mc W(T;x,\bm z),
\]
which is zero by \eq{cluster-inversion-forward} applied to \(R\). Thus every
monomial in \(\mc W(S;x)\) contains at least one factor associated with each
vertex of \(S\). Since each such factor is linear in \(x\),
\[
    [\mc W(S;x)]_{\le q_0}
    =
    0
    \qquad
    \text{if }|S|>q_0.
\]

Together with the vanishing of disconnected cluster weights,
\eq{cluster-inversion-forward} and
\(\Phi(\mathfrak I;x)=\Phi(x)\) give
\begin{align}
\label{eq:connected-cluster-sum}
    [\Phi(x)]_{\le q_0}
    =
    \sum_{\substack{
        C\subseteq\mathfrak I\\
        C\ {\rm connected},\ |C|\le q_0
    }}
    [\mc W(C;x)]_{\le q_0}.
\end{align}

By induction on \(|C|\), the subtraction step in
\algo{cluster-preprocessing}, together with
\eq{cluster-weight-recursion} and \eq{operator-channel-log}, shows that the
table stored under \(C\) represents an operator \(G_C(x)\) satisfying
\begin{align*}
    \ad_{G_C(x)}
    =
    [\mc W(C;x)]_{\le q_0}.
\end{align*}
Summing the stored tables and applying
\eq{connected-cluster-sum} proves correctness.

For the cost, fix a root vertex. Every connected cluster of size \(r\)
containing the root has a spanning tree. A depth-first traversal of the tree
uses \(r-1\) steps that visit a new vertex and \(r-1\) backtracking steps.
The positions of the former can be chosen in at most
\(\binom{2r-2}{r-1}\le4^{r-1}\) ways, and each such step has at most
\(\Delta_{\rm ov}\) choices. Hence the number of connected clusters of size
\(r\) containing the root is at most
\((4\Delta_{\rm ov})^{r-1}\). The total number of processed clusters is at
most
\begin{align*}
    |\mathfrak I|
    \sum_{r=1}^{q_0}
    (4\Delta_{\rm ov})^{r-1}
    \le
    |\mathfrak I|q_0(4\Delta_{\rm ov})^{q_0-1}.
\end{align*}

For \(|C|\le q_0\), the union of the supports in \(C\) contains at most
\(kq_0\) qubits, so the Pauli table at each degree contains at most
\(4^{kq_0}\) entries. At most \(q_0\) stage exponentials are nontrivial.
Forming the truncated product and logarithm takes
\(\poly(q_0)\) table multiplications, each requiring
\(4^{\mathcal O(kq_0)}\) arithmetic operations. Subtracting the tables of all
connected proper subclusters requires at most \(2^{q_0}\) table
subtractions. The total cost is therefore
\begin{align*}
    &|\mathfrak I|q_0(4\Delta_{\rm ov})^{q_0-1}
    2^{q_0}4^{\mathcal O(kq_0)}\poly(q_0)
    \\
    ={}&
    |\mathfrak I|\,
    e^{\mathcal O(q_0\log(\Delta_{\rm ov}+1)+kq_0)}
    \poly(q_0).
\end{align*}
\end{proof}

For a fixed product formula,
\(|\mathfrak I|\le(\kappa_K+1)\Gamma\). If \(k\) is fixed and
\(\Delta_{\rm ov}=\mathcal O(1)\), the preprocessing cost is
\(\mathcal O(\Gamma)e^{\mathcal O(q_0)}\poly(q_0)\). It is therefore linear
in \(\Gamma\) for fixed \(q_0\), as in the resource estimates of
\sec{numerics}. Under the same assumptions, if
\(q_0=\mathcal O(\log(1/\eps))\), the preprocessing cost is polynomial in
\(1/\eps\).


\section{Numerical experiments}
\label{sec:numerics}

\begin{figure*}[t!]
    \centering
    \includegraphics[width=\linewidth]
    {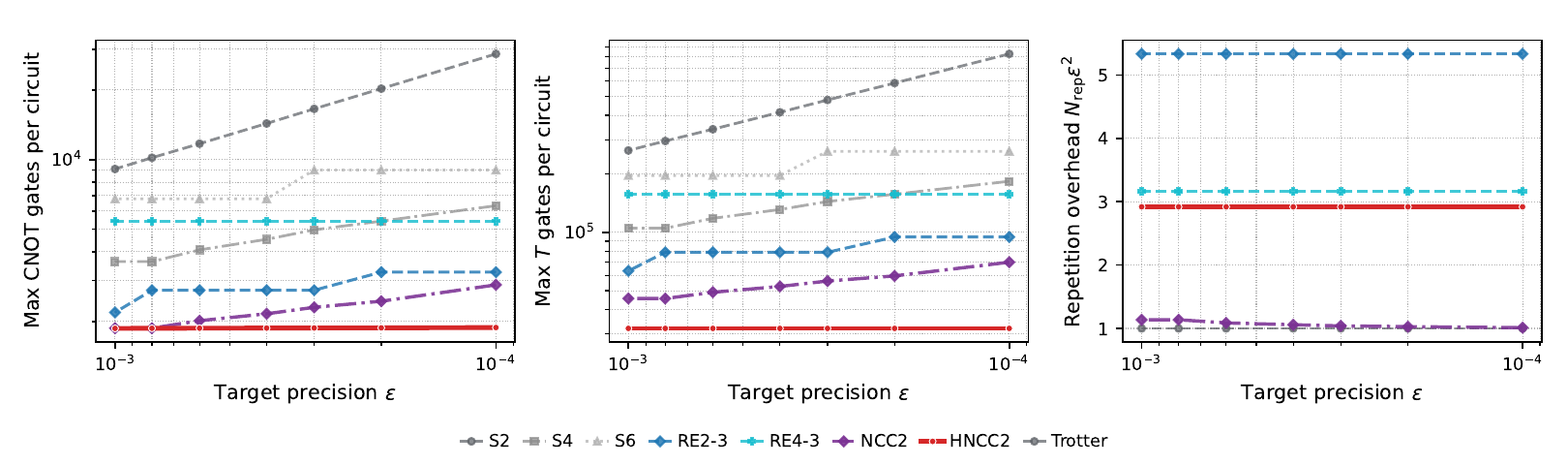}
    \caption{Finite-size resource comparison for the periodic Heisenberg
    chain with \(N=10\) and \(t=1\). The three panels show the maximum CNOT
    count per circuit, the estimated \(T\)-gate count per circuit, and the
    normalized repetition overhead \(N_{\rm rep}\eps^2\), respectively. S2,
    S4, and S6 denote the second-, fourth-, and sixth-order Suzuki formulas.
    RE2-3 and RE4-3 denote three-point Richardson extrapolation based on S2
    and S4, respectively. NCC2 and HNCC2 denote NCC and HNCC applied to S2.
    For HNCC2, we set \(q_0=8\) and \(m=5\).}
    \label{fig:resource-eps}
\end{figure*}

\begin{figure*}[t!]
    \centering
    \includegraphics[width=\linewidth]
    {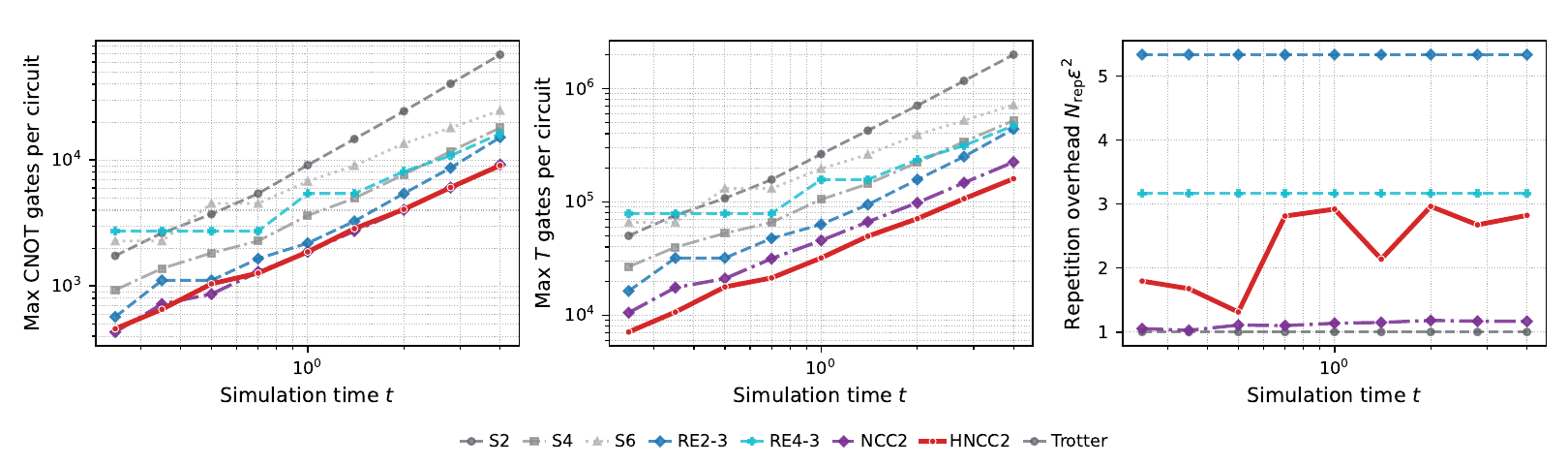}
    \caption{Finite-size resource comparison as a function of the simulation
    time \(t\), with target precision \(\eps=10^{-3}\) and \(N=10\). The
    panels show the maximum CNOT count per circuit, the estimated \(T\)-gate
    count per circuit, and the normalized repetition overhead
    \(N_{\rm rep}\eps^2\), using the conventions of \fig{resource-eps}. For
    HNCC2, we set \(q_0=8\) and \(m=5\).}
    \label{fig:resource-time}
\end{figure*}

The finite-size comparison includes Suzuki--Trotter formulas, Richardson
extrapolation, NCC, and HNCC on the periodic Heisenberg chain.

\subsection{Numerical setup}

The benchmark Hamiltonian is the periodic one-dimensional Heisenberg model
\begin{align*}
    H
    ={}&
    J\sum_{j=1}^N
    (X_jX_{j+1}+Y_jY_{j+1}+Z_jZ_{j+1})\\
    &+h\sum_{j=1}^N Z_j,
    \qquad N+1\equiv1.
\end{align*}
We set \(J=h=1\). For even \(N\), we partition the Pauli terms in \(H\) into
three groups such that all terms within each group commute:
\begin{align*}
    A&=J\sum_{\substack{1\le j\le N\\j\ {\rm odd}}}
    (X_jX_{j+1}+Y_jY_{j+1}+Z_jZ_{j+1}),\\
    B&=J\sum_{\substack{1\le j\le N\\j\ {\rm even}}}
    (X_jX_{j+1}+Y_jY_{j+1}+Z_jZ_{j+1}),\\
    C&=h\sum_{j=1}^N Z_j,
    \qquad H=A+B+C.
\end{align*}

For HNCC, we construct the truncated BCH generator using the connected-cluster
preprocessing of \sec{connected-cluster-preprocessing}, merge equal Pauli
strings, and convert the resulting Pauli table into an LCQC using the
parameter-shift and identity-pairing identities.

Errors are evaluated assuming exact Pauli rotations; rotation-synthesis error
is discussed separately below.
Errors are measured by the trace norm over computational-basis inputs. For
\(z\in\{0,1\}^N\), let
\(\rho_z=\mc U(t)(\ket{z}\bra{z})\), and let \(\sigma_{z,\nu}\) denote the
output produced by the method under consideration using \(\nu\) Trotter
steps. All methods are required to satisfy
\[
    \max_{z\in\{0,1\}^N}
    \bigl\|
        \sigma_{z,\nu}-\rho_z
    \bigr\|_1
    \le\eps.
\]
For HNCC, we set \(q_0=8\) and require
\(\lambda_{\mathrm{paired}}^{m\nu}\le1.75\), where
\(\lambda_{\mathrm{paired}}\) is the LCQC \(1\)-norm of one fractional
Trotter remainder at step size \(x=t/\nu\). At each parameter point, we choose the smallest
\(\nu\) for which this constraint holds and a global cutoff can make the
accumulated BCH truncation and cutoff errors sum to at most \(\eps\). We then
take the smallest such cutoff \(s_0\). At \(t=1\) and \(\eps=10^{-3}\),
minimizing the sum of the maximum CNOT and estimated \(T\)-gate counts over
$m\in\{1,\ldots,10\}$ gives \(m=5\), which is used in both scans.

\subsection{Resource comparison}

We compare HNCC with second-, fourth-, and sixth-order Suzuki--Trotter
formulas, NCC~\cite{zeng2025simple}, and Richardson
extrapolation~\cite{watson2025exponentially} using the optimized points from
Ref.~\cite{low2019wellconditionedmultiproducthamiltoniansimulation}.

Gate counts include CNOT gates and non-Clifford \(R_z\) rotations that require
synthesis, but omit single-qubit Clifford gates. A Pauli rotation
\(e^{-\i\vartheta P}\) generated by
a Pauli string with support size \(w\) is counted as \(2(w-1)\) CNOT gates and
one \(R_z\) rotation. A controlled Pauli rotation with support size $w$ used
by NCC is decomposed into $2w$ CNOT gates and two
$R_z$ rotations. The connected-cluster
preprocessing returns an operator \(G(x)\) satisfying
\(\ad_{G(x)}=\sum_{q=K+1}^{q_0}\Phi_q(x)\). Writing its Pauli expansion in
the form of \eq{pauli-table} as
\begin{align*}
    G(x)
    =
    \sum_{q=K+1}^{q_0}x^q
    \sum_{P\in\mathcal P_N}g_{q,P}P,
\end{align*}
define
\[
    C_{\mathrm{CNOT}}(q)
    :=
    \max_{P:g_{q,P}\ne0}
    2\bigl(|\supp(P)|-1\bigr).
\]
Each fractional Trotter remainder uses at most one Pauli rotation from the
linear part. Its angle has the common magnitude
$\theta=\tan^{-1}(\lambda_{\mathrm{single}}/2)$. Products of multiple BCH terms use only \(\pi/4\) Pauli
rotations, which are Clifford gates. Their CNOT cost is included below, but
they require no rotation synthesis. Define
\begin{align*}
    C_{\rm multi}(s_0)
    :={}&
    \max\biggl\{0,
    \max_{\substack{r\ge2,\ q_\ell\in[K+1,q_0]\\
    \sum_{\ell=1}^r q_\ell\le s_0}}
    \sum_{\ell=1}^r C_{\mathrm{CNOT}}(q_\ell)
    \biggr\}.
\end{align*}
The maximum gate counts for HNCC with parameters \((m,\nu_m,s_0)\) are
\begin{align*}
    N_{\mathrm{CNOT}}^{\max}
    &={}
    \nu_m
    \biggl[
        12N
        +
        m\max_{K+1\le q\le q_0}C_{\mathrm{CNOT}}(q)
    \biggr]
    \\
    &\qquad
    +C_{\rm multi}(s_0),
    \\
    N_{R_z}^{\max}
    &=
    \nu_m(7N+m).
\end{align*}

To obtain an estimated \(T\)-gate count, we synthesize each non-Clifford
\(R_z\) rotation over the Clifford+\(T\) gate set. We use the average
synthesis cost
\(c_T(\eps_{\rm syn})
\simeq1.149\log_2(1/\eps_{\rm syn})+9.2\)
for a random rotation angle~\citep{bocharov2015efficient}, giving
\(N_T^{\rm est}=c_T(\eps_{\rm syn})N_{R_z}^{\max}\). We take
\(\eps_{\rm syn}=10^{-10}\), which gives \(c_T\simeq47\). Even if the
synthesis errors accumulate linearly, the resulting per-circuit error bound
is below \(9\times10^{-6}\) for all data points shown.

We also report the normalized repetition overhead \(N_{\rm rep}\eps^2\),
omitting the common Hoeffding factor \(2\log(2/\delta)\). We use
\(N_{\rm rep}=\eps^{-2}\) for Suzuki--Trotter formulas and
\(N_{\rm rep}=r\|\alpha\|_1^2\eps^{-2}\) for an \(r\)-point extrapolation
formula with coefficients \(\alpha\). For NCC, we use
\(N_{\rm rep}=(\lambda_{\rm NCC}^{\nu})^4\eps^{-2}\), where
\(\lambda_{\rm NCC}\) is the coefficient \(1\)-norm of the one-step
unitary LCU representation of the compensation operator. For HNCC, we use
\(N_{\rm rep}=\lambda_{\mathrm{paired}}^{2m\nu}\eps^{-2}\).

As shown in \fig{resource-eps} and \fig{resource-time}, HNCC2 has CNOT counts
comparable to NCC2 and extrapolation while giving the lowest estimated
\(T\)-gate count per circuit over both parameter ranges. Compared with the
uncompensated \(S_2\) formula, HNCC2 achieves a \(4.9\)--\(15.2\times\)
reduction in the CNOT count and an \(8.3\)--\(25.9\times\) reduction in the
estimated \(T\)-gate count
over the range of target precisions considered.
For this precision range, the minimum number of HNCC steps is determined by
the constraint \(\lambda_{\mathrm{paired}}^{m\nu}\le1.75\), rather than by the
target precision. Consequently, \(\nu\) and the estimated \(T\)-gate count
remain unchanged, while increasing the global cutoff causes only a small
change in the CNOT count. Over the time range in
\fig{resource-time}, the corresponding reductions are
\(3.6\)--\(7.6\times\) in the CNOT count and \(6.1\)--\(12.4\times\) in the
estimated \(T\)-gate count.
Both scans use the fixed value \(m=5\).
The normalized HNCC repetition overhead is at most
\(1.75^2\approx3.06\).

\section{Discussion and outlook}

HNCC combines channel-level Trotter-error compensation with a truncated BCH
representation of the remainder. The resulting algorithm is ancilla-free,
preserves nested-commutator scaling, and achieves effective \(2K+1\)-order
time scaling with polylogarithmic precision dependence in the circuit size.

In the numerical experiments, we reduce the LCQC \(1\)-norm by merging
identical Pauli terms in the truncated BCH generator using connected-cluster
preprocessing. For fixed locality and bounded overlap degree, this procedure
has polynomial classical cost and is linear in the number of Pauli terms for
fixed \(q_0\). Whether the LCQC \(1\)-norm can be reduced efficiently for
general \(k\)-local Hamiltonians remains open.

The same channel-level compensation may also apply beyond Hamiltonian
simulation. Trotter error mitigation has recently been used in several
related settings, including eigenstate problems~\cite{sun2026high},
Lindbladian simulation~\cite{yu2025lindbladian,wang2026lindbladian}, and
quantum singular value transformation~\cite{chakraborty2025quantum}. Whether channel-level compensation extends to the product-formula
approximations in these algorithms remains open.

\paragraph{Note added.}
While finalizing this manuscript, we became aware of related independent work by Murota et al.~\cite{murota2026unbiased}, which introduces
probabilistic Trotter error reversal (PTER). PTER is an unbiased method based on a substantially different compensation
strategy, which reverses the coherent time-dependent dynamics generated by
the Trotter error. For geometrically local Hamiltonians with \(\Gamma=O(N)\) Pauli terms, it
achieves an expected Pauli-rotation count
\(O(N^{1+1/(2K+1)}t^{1+1/(2K+1)})\) with constant sampling overhead. HNCC
provides a complementary guarantee: a maximum elementary-gate bound for
general \(k\)-local Hamiltonians with an arbitrary number \(\Gamma\) of Pauli
terms.

\section*{Acknowledgements}
X.W., S.Z., Z.W., and T.L. are supported by the National Natural Science Foundation of China under Grant No.~62372006. T.G. is supported by ERC Starting Grant No.~101163189 and UKRI Future Leaders Fellowship MR/X023583/1. P.Z. is supported by Shanghai Baiyulan Pujiang Project (Grant No.~25PJA066).

\bibliographystyle{qubit-doi}
\bibliography{reference}

@article{sun2026high,
  title = {High-precision and low-depth quantum algorithm design for eigenstate problems},
  author = {Sun, Jinzhao and Zeng, Pei and Gur, Tom and Kim, MS},
  journal = {Science Advances},
  volume = {12},
  number = {3},
  pages = {eaeb1622},
  year = {2026},
  publisher = {American Association for the Advancement of Science},
  doi = {10.1126/sciadv.aeb1622}
}

@misc{low2019wellconditionedmultiproducthamiltoniansimulation,
      title={Well-conditioned multiproduct Hamiltonian simulation}, 
      author={Guang Hao Low and Vadym Kliuchnikov and Nathan Wiebe},
      year={2019},
      eprint={1907.11679},
      archivePrefix={arXiv},
      primaryClass={quant-ph},
      url={https://arxiv.org/abs/1907.11679}, 
}

@article{mcardle2020quantum,
  title = {Quantum computational chemistry},
  author = {McArdle, Sam and Endo, Suguru and Aspuru-Guzik, Al{\'a}n and Benjamin, Simon C and Yuan, Xiao},
  journal = {Reviews of Modern Physics},
  volume = {92},
  number = {1},
  pages = {015003},
  year = {2020},
  publisher = {American Physical Society},
  doi = {10.1103/RevModPhys.92.015003}
}

@article{zeng2025simple,
  title = {Simple and high-precision {H}amiltonian simulation by compensating {T}rotter error with linear
    combination of unitary operations},
  author = {Zeng, Pei and Sun, Jinzhao and Jiang, Liang and Zhao, Qi},
  journal = {PRX Quantum},
  volume = {6},
  number = {1},
  pages = {010359},
  year = {2025},
  publisher = {American Physical Society},
  doi = {10.1103/prxquantum.6.010359}
}

@article{bocharov2015efficient,
  title = {Efficient synthesis of universal repeat-until-success quantum circuits},
  author = {Bocharov, Alex and Roetteler, Martin and Svore, Krysta M.},
  journal = {Physical Review Letters},
  volume = {114},
  number = {8},
  pages = {080502},
  year = {2015},
  publisher = {American Physical Society},
  doi = {10.1103/PhysRevLett.114.080502}
}

@article{harrow2009quantum,
  title = {Quantum Algorithm for Linear Systems of Equations},
  author = {Harrow, Aram W. and Hassidim, Avinatan and Lloyd, Seth},
  journal = {Physical Review Letters},
  volume = {103},
  issue = {15},
  pages = {150502},
  numpages = {4},
  year = {2009},
  month = {Oct},
  publisher = {American Physical Society},
  doi = {10.1103/PhysRevLett.103.150502}
}

@article{childs2017quantum,
  author = {Childs, Andrew M. and Kothari, Robin and Somma, Rolando D.},
  title = {Quantum Algorithm for Systems of Linear Equations with Exponentially Improved Dependence on
    Precision},
  journal = {SIAM Journal on Computing},
  volume = {46},
  number = {6},
  pages = {1920-1950},
  year = {2017},
  doi = {10.1137/16M1087072}
}

@article{cho2024doubling,
  title = {Doubling the order of approximation via the randomized product formula},
  author = {Cho, Chien-Hung and Berry, Dominic W. and Hsieh, Min-Hsiu},
  journal = {Physical Review A},
  volume = {109},
  issue = {6},
  pages = {062431},
  numpages = {11},
  year = {2024},
  month = {Jun},
  publisher = {American Physical Society},
  doi = {10.1103/PhysRevA.109.062431}
}

@article{mizuta2026commutator,
  doi = {10.22331/q-2026-01-19-1974},
  title = {On the commutator scaling in {H}amiltonian simulation with multi-product formulas},
  author = {Mizuta, Kaoru},
  journal = {{Quantum}},
  issn = {2521-327X},
  publisher = {{Verein zur F{\"{o}}rderung des Open Access Publizierens in den Quantenwissenschaften}},
  volume = {10},
  pages = {1974},
  month = jan,
  year = {2026}
}

@misc{mineh2025improving,
  title = {Improving time dynamics simulation by sampling the error unitary},
  author = {Lana Mineh and Adrian Chapman and Raul A. Santos},
  year = {2025},
  eprint = {arXiv:2508.17542}
}

@article{rendon2024improved,
  doi = {10.22331/q-2024-02-26-1266},
  title = {Improved {A}ccuracy for {T}rotter {S}imulations {U}sing {C}hebyshev {I}nterpolation},
  author = {Rendon, Gumaro and Watkins, Jacob and Wiebe, Nathan},
  journal = {{Quantum}},
  issn = {2521-327X},
  publisher = {{Verein zur F{\"{o}}rderung des Open Access Publizierens in den Quantenwissenschaften}},
  volume = {8},
  pages = {1266},
  month = feb,
  year = {2024}
}

@article{endo2019mitigating,
  title = {Mitigating algorithmic errors in a {H}amiltonian simulation},
  author = {Endo, Suguru and Zhao, Qi and Li, Ying and Benjamin, Simon and Yuan, Xiao},
  journal = {Physical Review A},
  volume = {99},
  issue = {1},
  pages = {012334},
  numpages = {8},
  year = {2019},
  month = {Jan},
  publisher = {American Physical Society},
  doi = {10.1103/PhysRevA.99.012334}
}

@article{carrera2023wellconditioned,
  doi = {10.22331/q-2023-07-25-1067},
  title = {Well-conditioned multi-product formulas for hardware-friendly {H}amiltonian simulation},
  author = {Carrera Vazquez, Almudena and Egger, Daniel J. and Ochsner, David and Woerner, Stefan},
  journal = {{Quantum}},
  issn = {2521-327X},
  publisher = {{Verein zur F{\"{o}}rderung des Open Access Publizierens in den Quantenwissenschaften}},
  volume = {7},
  pages = {1067},
  month = jul,
  year = {2023}
}

@article{low2017optimal,
  title = {Optimal {H}amiltonian Simulation by Quantum Signal Processing},
  author = {Low, Guang Hao and Chuang, Isaac L.},
  journal = {Physical Review Letters},
  volume = {118},
  issue = {1},
  pages = {010501},
  numpages = {5},
  year = {2017},
  month = {Jan},
  publisher = {American Physical Society},
  doi = {10.1103/PhysRevLett.118.010501}
}

@inproceedings{berry2014exponential,
  author = {Berry, Dominic W. and Childs, Andrew M. and Cleve, Richard and Kothari, Robin and Somma, Rolando D.},
  title = {Exponential improvement in precision for simulating sparse {H}amiltonians},
  year = {2014},
  isbn = {9781450327107},
  booktitle = {Proceedings of the Forty-Sixth Annual ACM Symposium on Theory of Computing},
  pages = {283–292},
  numpages = {10},
  location = {New York, New York},
  series = {STOC '14},
  doi = {10.1145/2591796.2591854}
}

@inproceedings{berry2015hamiltonian,
  author = {Berry, Dominic W. and Childs, Andrew M. and Kothari, Robin},
  title = {{H}amiltonian Simulation with Nearly Optimal Dependence on all Parameters},
  year = {2015},
  isbn = {9781467381918},
  publisher = {IEEE Computer Society},
  address = {USA},
  doi = {10.1109/FOCS.2015.54},
  booktitle = {Proceedings of the 2015 IEEE 56th Annual Symposium on Foundations of Computer Science (FOCS)},
  pages = {792–809},
  numpages = {18},
  series = {FOCS '15}
}

@article{childs2018toward,
  author = {Andrew M. Childs and Dmitri Maslov and Yunseong Nam and Neil J. Ross and Yuan Su},
  title = {Toward the first quantum simulation with quantum speedup},
  journal = {Proceedings of the National Academy of Sciences},
  volume = {115},
  number = {38},
  pages = {9456-9461},
  year = {2018},
  doi = {10.1073/pnas.1801723115}
}

@misc{farhi2014quantum,
  title = {A quantum approximate optimization algorithm},
  author = {Farhi, Edward and Goldstone, Jeffrey and Gutmann, Sam},
  year = {2014},
  eprint = {arXiv:1411.4028}
}

@article{TANG2013557,
title = {A short introduction to numerical linked-cluster expansions},
journal = {Computer Physics Communications},
volume = {184},
number = {3},
pages = {557-564},
year = {2013},
issn = {0010-4655},
doi = {https://doi.org/10.1016/j.cpc.2012.10.008},
url = {https://www.sciencedirect.com/science/article/pii/S0010465512003414},
author = {Baoming Tang and Ehsan Khatami and Marcos Rigol}
}

@article{an2023linear,
  title = {Linear Combination of {H}amiltonian Simulation for Nonunitary Dynamics with Optimal State
    Preparation Cost},
  author = {An, Dong and Liu, Jin-Peng and Lin, Lin},
  journal = {Physical Review Letters},
  volume = {131},
  issue = {15},
  pages = {150603},
  numpages = {6},
  year = {2023},
  month = {Oct},
  publisher = {American Physical Society},
  doi = {10.1103/PhysRevLett.131.150603}
}

@article{dong2022ground,
  title = {Ground-State Preparation and Energy Estimation on Early Fault-Tolerant Quantum Computers via
    Quantum Eigenvalue Transformation of Unitary Matrices},
  author = {Dong, Yulong and Lin, Lin and Tong, Yu},
  journal = {PRX Quantum},
  volume = {3},
  issue = {4},
  pages = {040305},
  numpages = {25},
  year = {2022},
  month = {Oct},
  publisher = {American Physical Society},
  doi = {10.1103/PRXQuantum.3.040305}
}

@article{watson2025exponentially,
  title = {Exponentially reduced circuit depths using {T}rotter error mitigation},
  author = {Watson, James D. and Watkins, Jacob},
  journal = {PRX Quantum},
  volume = {6},
  number = {3},
  pages = {030325},
  year = {2025},
  publisher = {American Physical Society},
  doi = {10.1103/kw39-yxq5}
}

@article{wiebe2010higher,
  title = {Higher order decompositions of ordered operator exponentials},
  author = {Wiebe, Nathan and Berry, Dominic and H{\o}yer, Peter and Sanders, Barry C.},
  journal = {Journal of Physics A: Mathematical and Theoretical},
  volume = {43},
  number = {6},
  pages = {065203},
  year = {2010},
  publisher = {IOP Publishing},
  doi = {10.1088/1751-8113/43/6/065203}
}

@misc{aftab2024multi,
  title = {Multi-product {H}amiltonian simulation with explicit commutator scaling},
  author = {Aftab, Junaid and An, Dong and Trivisa, Konstantina},
  year = {2024},
  eprint = {arXiv:2403.08922}
}

@misc{chakraborty2025quantum,
  title = {Quantum singular value transformation without block encodings: Near-optimal complexity with
    minimal ancilla},
  author = {Chakraborty, Shantanav and Hazra, Soumyabrata and Li, Tongyang and Shao, Changpeng and Wang,
    Xinzhao and Zhang, Yuxin},
  year = {2025},
  eprint = {arXiv:2504.02385}
}

@article{yu2025lindbladian,
  title = {Lindbladian Simulation with Logarithmic Precision Scaling via Two Ancillas},
  author = {Yu, Wenjun and Li, Xiaogang and Zhao, Qi and Yuan, Xiao},
  journal = {Physical Review Letters},
  volume = {135},
  issue = {16},
  pages = {160602},
  numpages = {8},
  year = {2025},
  month = {Oct},
  publisher = {American Physical Society},
  doi = {10.1103/2cx4-b82c}
}

@misc{murota2026unbiased,
  title = {Unbiased {H}amiltonian Simulation by Reversing {T}rotter Error Dynamics},
  author = {Keisuke Murota and Yuta Kikuchi and Enrico Rinaldi and Frédéric Sauvage and Synge Todo},
  year = {2026},
  eprint = {arXiv:2606.29741}
}

@article{arnal2021note,
  title = {A note on the Baker--Campbell--Hausdorff series in terms of right-nested commutators},
  author = {Arnal, Ana and Casas, Fernando and Chiralt, Cristina},
  journal = {Mediterranean Journal of Mathematics},
  volume = {18},
  number = {2},
  pages = {53},
  year = {2021},
  publisher = {Springer},
  doi = {10.1007/s00009-020-01681-6}
}

@misc{wang2026lindbladian,
  title = {Lindbladian Simulation with Commutator Bounds},
  author = {Wang, Xinzhao and Zhou, Shuo and Wang, Xiaoyang and Zheng, Yi-Cong and Zhang, Shengyu and Li,
    Tongyang},
  year = {2026},
  eprint = {arXiv:2603.28602}
}

@article{feynman1982simulating,
  title = {Simulating physics with computers},
  author = {Feynman, Richard P.},
  journal = {International Journal of Theoretical Physics},
  volume = {21},
  number = {6/7},
  pages = {467--488},
  year = {1982},
  doi = {10.1007/BF02650179}
}

@article{lloyd1996universal,
  title = {Universal Quantum Simulators},
  author = {Lloyd, Seth},
  journal = {Science},
  volume = {273},
  number = {5278},
  pages = {1073--1078},
  year = {1996},
  doi = {10.1126/science.273.5278.1073}
}

@article{suzuki1991general,
  title = {General theory of fractal path integrals with applications to many-body theories and statistical
    physics},
  author = {Suzuki, Masuo},
  journal = {Journal of Mathematical Physics},
  volume = {32},
  number = {2},
  pages = {400--407},
  year = {1991},
  doi = {10.1063/1.529425}
}

@article{childs2021theory,
  title = {Theory of {T}rotter Error with Commutator Scaling},
  author = {Childs, Andrew M. and Su, Yuan and Tran, Minh C. and Wiebe, Nathan and Zhu, Shuchen},
  journal = {Physical Review X},
  volume = {11},
  number = {1},
  pages = {011020},
  year = {2021},
  doi = {10.1103/PhysRevX.11.011020}
}

@article{berry2015simulating,
  title = {Simulating {H}amiltonian dynamics with a truncated {T}aylor series},
  author = {Berry, Dominic W. and Childs, Andrew M. and Cleve, Richard and Kothari, Robin and Somma, Rolando D.},
  journal = {Physical Review Letters},
  volume = {114},
  number = {9},
  pages = {090502},
  year = {2015},
  doi = {10.1103/PhysRevLett.114.090502}
}

@article{low2019hamiltonian,
  title = {{H}amiltonian simulation by qubitization},
  author = {Low, Guang Hao and Chuang, Isaac L.},
  journal = {Quantum},
  volume = {3},
  pages = {163},
  year = {2019},
  doi = {10.22331/q-2019-07-12-163}
}
\end{document}